\renewcommand{\pod}[1]{\allowbreak\mathchoice
  {\if@display \mkern 18mu\else \mkern 8mu\fi (#1)}
  {\if@display \mkern 18mu\else \mkern 8mu\fi (#1)}
  {\mkern4mu(#1)}
  {\mkern4mu(#1)}
}
\newtheorem{theorem}{Theorem}[section]
\newtheorem{ther}{Theorem}[section]
\newtheorem{example}[theorem]{Example}
\newtheorem{proposition}[theorem]{Proposition}
\newtheorem{prop}{Proposition}[section]
\newtheorem{definition}[theorem]{Definition}
\newtheorem{defi}{Definition}[section]
\newtheorem{lemma}[theorem]{Lemma}
\newtheorem{remark}[theorem]{Remark}
\newtheorem{rem}{Remark}[section]
\newtheorem{corollary}[theorem]{Corollary}
\newtheorem{lem}{Lemma}[section]
\newtheorem{cor}{Corollary}[section]
\newcommand{\koop}{\mathcal{K}}
\def\x{{\bf x}}
\def\n{{\bf n}}
\def\y{{\bf y}}
\def\z{{\bf z}}
\def\s{{\bf s}}
\def\w{{\bf w}}
\def\a{{\bf a}}
\def\k{{\bf k}}
\def\m{{\bf m}}
\def\bp{{\bf p}}
\def\cB{{\cal{B}}}
\def\cD{{\cal{D}}}
\def\cL{{\cal{L}}}
\def\ba{\backslash}
\newcommand{\cH}{\mathcal{H}}
\newcommand{\caA}{\mathcal{A}}
\newcommand{\cl}{\mathrm{cl}}
\newcommand{\ve}[1]{\mathbf{#1}}
\newcommand{\mbf}{\boldsymbol} 
\def\A{{\bf A}}
\def\F{{\bf F}}
\def\G{{\bf G}}
\def\S{{\bf S}}
\def\v{{\bf v}}
\def\u{{\bf u}}
\def\f{{\bf f}}
\def\g{{\bf g}}
\def\h{{\bf h}}
\def\c{{\bf c}}
\def\p{\partial}
\def\bZ{\mathbb{Z}}
\def\bN{\mathbb{N}}
\def\bC{\mathbb{C}}
\def\C{\mathbb{C}}
\def\1{\mathbbm{1}}
\def\bR{\mathbb{R}}
\def\R{\mathbb{R}}
\def\bea{\begin{eqnarray}}
\def\eea{\end{eqnarray}}
\def\be{\begin{equation}}
\def\ee{\end{equation}}
\def\eps{\epsilon}
\def\bomega{\bm{\omega}}
\def\btheta{\bm{\theta}}
\def\bphi{\bm{\phi}}
\def\bOmega{\mbf{\Omega}}
\def\rar{\rightarrow}
\begin{document}

\title{\color{black} Spectrum of the Koopman Operator, Spectral Expansions in Functional Spaces, and State Space Geometry }

\date{\today}
\author{Igor Mezi\'c}
\affil{Mechanical Engineering and Mathematics, University of California Santa Barbara,
Santa Barbara, CA 93106}
\maketitle
\tableofcontents
\newpage
\begin{abstract}
We examine spectral operator-theoretic properties of linear and nonlinear dynamical systems with globally stable attractors. Using the Kato Decomposition we develop a spectral expansion for general linear autonomous dynamical  systems with analytic observables, and define the notion of generalized eigenfunctions of the associated Koopman operator. We  interpret stable, unstable and center subspaces in terms of zero level sets of generalized eigenfunctions. We then utilize conjugacy properties of Koopman eigenfunctions and the new notion of open eigenfunctions - defined on subsets of state space - to extend these results to nonlinear dynamical systems with an equilibrium. We provide a characterization of (global) center manifolds, center-stable and center-unstable manifolds in terms of joint zero level sets of  families of Koopman operator eigenfunctions associated with the nonlinear system. {\color{black} After defining a new class of Hilbert spaces, that capture the on and off attractor properties of dissipative dynamics, and introduce the concept of Modulated Fock Spaces}, we develop spectral expansions for a class of dynamical systems possessing globally stable  limit cycles and 
limit tori, with observables that are square-integrable in on-attractor variables and analytic in off-attractor variables.  We  discuss definitions of stable, unstable and global center manifolds in such nonlinear systems with (quasi)-periodic attractors in terms of zero level sets of Koopman operator eigenfunctions. We define the notion of isostables for a general class of nonlinear systems. In contrast with the systems that have discrete Koopman operator spectrum, we provide a simple example of a measure-preserving system that is not chaotic but has continuous spectrum, and discuss experimental observations of spectrum on such systems. {\color{black} We also provide a brief characterization of the data types corresponding to the obtained theoretical results and define the coherent principal dimension for a class of datasets based on the lattice-type principal spectrum of the associated Koopman operator.}
\end{abstract}
\section{Introduction}
Spectral theory of dynamical systems shifts the focus of investigation of dynamical systems behavior away from trajectories in the state space 
and towards spectral objects - eigenvalues, eigenfunctions and eigenmodes - of an associated linear operator. Specific examples are the Perron-Frobenius operator \cite{LasotaandMackey:1994} and the composition operator  - in measure-preserving setting called the Koopman operator \cite{Koopman:1931,SinghandManhas:1993}. In this paper we study spectral properties of the composition operator for a class of dynamical systems and relate them to state space and data analyses. 

In classical dynamical systems theory, the notion of conjugacy is an important one. For example, conjugacy is the setting in which linearization theorems, such as the Hartman-Grobman theorem, are proved. In the original investigations using the operator-theoretic approach to measure-preserving dynamical systems, the notion of conjugacy also played an important role \cite{vonNeumann:1932}. One of the most important questions in that era was whether spectral equivalence  of the Koopman operator spectra implies conjugacy of the associated dynamical systems. It was settled in the negative by von Neumann and Kolmogorov \cite{RedeiandWerndl:2012}, where the examples given had complex - mixed or continuous - spectra. The transformation of spectral properties under conjugacy, pointed out in \cite{Budisicetal:2012}, was already used in the data-analysis context in \cite{Williamsetal:2015}. Here we explore the relationship between the spectrum of the composition operator and conjugacy, for dissipative systems, and discuss the type of spectrum they exhibit for asymptotic behavior ranging from equilibria to quasi-periodicity. The approach, inspired by ideas in \cite{LanandMezic:2013} extends the analysis in that paper to provide spectral expansions and treat the case of saddle point equilibria using the newly defined concept of open eigenfunctions of the Koopman operator on subsets of state space. While these systems have discrete spectrum, we also present a simple example of a measure-preserving (non-dissipative) system with non-chaotic dynamics with continuous spectrum.

In dissipative systems, the composition operator is typically {\it non-normal}, and can have {\it generalized eigenfunctions}. 
Gaspard and collaborators studied spectral expansions for dynamical systems containing equilibria and exhibiting pitchfork and Hopf bifurcations \cite{Gaspardetal:1995,GaspardandTasaki:2001}.  
 The author presented the general on-attractor version of the expansion for evolution equations (possibly infinite-dimensional) possessing a finite-dimensional attractor in \cite{Mezic:2005}. 
It is important to note that spectra of dynamical systems can have support on non-discrete sets in the complex plane, provided the 
space of observables is large enough, or the dynamics is complex enough \cite{Gaspardetal:1995,Mezic:2005}. Here, we restrict our attention largely to observables that are $L^2$ in on-attractor variables and analytic in off-attractor variables, and find that the resulting spectra are - for quasi-periodic systems - supported on discrete sets in the complex plane. This observation by the author lead to development of the analytic framework for dissipative dynamical systems using Hardy-type spaces for dynamical systems, in which the composition operator is always spectral \cite{MohrandMezic:2014}. {\color{black}  Here we extend that analysis to provide a new, Hilbert space setting for spectral analysis of dissipative dynamical systems. The resulting spaces are tensor products of spaces suitable for on-attractor dynamics and off attractor dynamics. We prove that the spectrum of the Koopman operator on these spaces is the closure of the product of the ``on-attractor" and ``off-attractor" spectra, and apply these ideas to study systems with limit cycle and quasi-periodic attractors. We introduce two new types of spaces: the modulated Fock space that has properties of the  Fock (or Fock-Bargmann, or Segal-Bargmann) space, but is defined with respect to principal eigenfunctions of the Koopman operator, and the Averaging Kernel Hilbert Space (AKHS) which is a modification of the concept of the Reproducing Kernel Hibert Space (RKHS), the class that modulated and regular Fock spaces belongs to. There are a number of publications on spectrum of composition operators for dissipative dynamical systems that pursue spectral analysis in Hilbert space setting. Fock space has been used specifically in \cite{Bargmann:1962,NewmanandShapiro:1966,carswelletal:2003}.  But these and other works, in different analytic function spaces, are all restricted to the case when the dynamical system has an attracting fixed point, and there is no need for the tensor product construction (see e.g. \cite{bandtlowetal:2017,cowenandmaccluer:1995}).}

Eigenfunctions of the composition operator contain information about geometry of the state space. For example, invariant sets \cite{Mezic:1994},
isochrons \cite{MezicandBanaszuk:2004,MauroyandMezic:2012} and isostables \cite{Mauroyetal:2013}, can all be defined as level sets of eigenfunctions of the operator. Here we extend this set of relations by showing that center-stable, center and center-unstable manifolds of an
attractor can be defined as joint $0$-level sets of a set of eigenfunctions. This can be viewed as shifting the point of view on such invariant manifolds from local - where the essential ingredient of their definition  is tangency to a linear subspace \cite{GuckenheimerandHolmes:2002} - to a global, level-set based definition. The connections between geometric theory and operator theory are explored further here: Floquet analysis in the case of a limit cycle, and generalized Floquet analysis \cite{Sell:1981} in the case of limit tori are used to obtain {\it global} (as opposed to local, as in geometric theory) results on spectral expansions. The usefulness of spectral expansions stems from the fact that most contributions to dynamics of a typical autonomous dissipative systems are exponentially fast, and the dynamics is taken over by the slowest decaying modes and zero-real part eigenvalue modes. This has relationship to the theory of inertial manifolds.

On the data analysis side, the operator-theoretic analysis has recently gained popularity in conjunction with numerical methods such as variants of the Dynamic Mode Decomposition (DMD), Generalized Laplace Analysis,  Prony and Hankel-DMD analysis, as well as compactification methods \cite{MezicandBanaszuk:2004,Mezic:2005,Rowleyetal:2009,Budisicetal:2012,Williamsetal:2015,Bruntonetal:2016,Giannakisetal:2015,SusukiandMezic:2015,ArbabiandMezic:2016}, that can approximate part of the spectrum of an underlying linear operator under certain conditions on the data structure \cite{ArbabiandMezic:2016}. Since these methods operate directly on data (observables), they have been used to analyze a large variety of dynamical processes in many applications. We classify here the types of spectra  associated with dynamical systems of different transient and asymptotic behavior, including systems with fixed point, limit cycle and quasiperiodic attractors. The spectrum is always found out to be of what we call the {\it lattice type}, and is defined as a linear combination over integers of $n$ {\it principal eigenvalues}, where $n$ is the dimension of the state space. This can help with understanding the dynamics underlying the  spectra obtained from data. Namely, the {\it principal coherent dimension} of the data can be determined by examining the lattice and finding the number of {\it principal eigenvalues}.

The paper is organized as follows: in section \ref{sect:lin} we consider the case of linear systems, including those for which geometric and algebraic multiplicity is not equal. We obtain the spectral expansion using the Kato Decomposition. We also obtain  explicit generalized eigenfunctions of the associated composition operator. Using the spectral expansion, the stable, unstable and center subspaces are defined as joint zero level sets of collections of eigenfunctions. An extension of these ideas to nonlinear systems with  equilibria is given in section \ref{ne}, utilizing developments on conjugacy and spectrum in section \ref{sect:conj}, and the new concept of open eigenfunctions of the Koopman operator. The linearization theorem of Palmer is used to provide global definitions of center, center-stable and center-unstable manifolds using zero level sets of collections of composition operator eigenfunctions. {\color{black} The discussion of Hilbert spaces of interest in spectral Koopman operator framework is provided in section \ref{sect:hilb}, where Modified Fock Spaces are defined}.
Spectral expansion theorems for asymptotically limit cycling systems are given in section \ref{sect:spectlc2D} for 2D systems and in section \ref{sect:lc} for n-dimensional systems with a limit cycle. The reason for distinguishing between these two cases is that in the 2-dimensional case the eigenfunctions and eigenvalues can be derived explicitly in terms of averages over the limit cycle, while in the general case we use Floquet theory, due to the non-commutativity of linearization matrices along the limit cycle. For both of these cases we define the concept of Averaging Kernel Hibert Space, in which the Koopman operator is spectral. In section \ref{sect:lt} we derive the spectral expansion for systems globally stable to a limit torus, where attention has to be paid to the exact nature of the dynamics on the torus. Namely, Kolmogorov-Arnold-Moser type Diophantine conditions are needed for the asymptotic dynamics in order to derive the spectral expansion, providing another nice connection between the geometric theory and the operator theoretic approach to dynamical systems. We discuss the possibility of determining the {\it principal coherent dimension} of the data using spectral expansion results in section \ref{sec:data}. In section \ref{cont}  we present a measure-preserving system that has a continuous Koopman operator spectrum, but integrable dynamics, and discuss the consequence for data analysis in such systems. We conclude in section \ref{sect:conc}.
{\color{black} \section{Preliminaries}
For a dynamical system 
\be 
\dot \x=\F(\x),
\label{DSGen}
\ee
 defined on a state-space $M$ (i.e. $\x\in M$ - where we by slight abuse of notation identify a point in a manifold $M$ with its vector representation $\x$ in $\R^m$, $m$ being the dimension of the manifold), where $\x$ is a vector and $\F$ is a possibly nonlinear vector-valued smooth function, of the same dimension as its argument $\x$, denote by $\S^t(\x_0)$ the position at time $t$ of trajectory of   (\ref{DSGen}) 
 that starts at time $0$ at point $\x_0$ (see Figure \ref{traj}). We  call $\S^t(\x_0)$ the flow.
\begin{figure}[ht]
\centering
\fbox{\includegraphics[
natheight=3.551800in, natwidth=6in, height=2.348in, width=3in
]{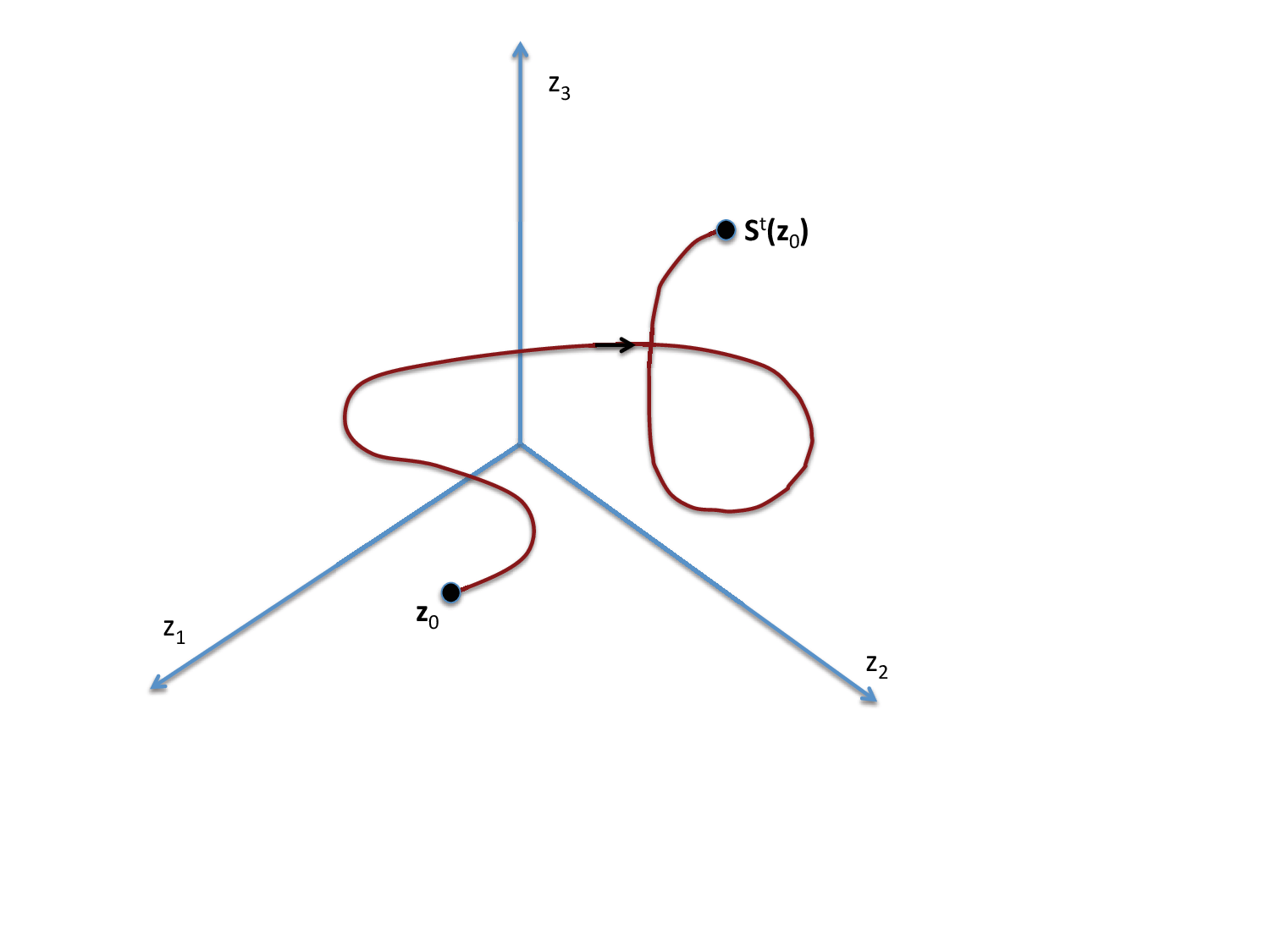}} \caption{Trajectory of a dynamical system in $\mathbb{R}^3$.}
\label{traj}
\end{figure}
  Denote by $\g$ an arbitrary, vector-valued observable from $M$ to $\mathbb{R}^k$.  The value of this observable $\g$ that the system trajectory starting from $\x_0$ at time $0$ sees at time $t$ is 
 \be
 \g(t,\x_0)=\g(\S^t(\x_0)).
 \ee
Note that the space of all observables $\g$ is a linear vector space.  The family of operators $U^t,$ acting on the space of observables parametrized by time $t$ is defined by 
 \be
 U^t\g(\x_0)=\g(\S^t(\x_0)).
 \label{Koopdef}
 \ee
Thus, for a fixed time $\tau$, $U^\tau$ maps the vector-valued observable $\g(\x_0)$ to $\g(\tau,\x_0)$. We will call the family of operators $U^t$ indexed by time $t$ the Koopman operator of the continuous-time system (\ref{DSGen}).
This family was defined for the first time in \cite{Koopman:1931}, for Hamiltonian systems. In operator theory, such operators, when defined for general dynamical systems,  are often called composition operators\index{Composition operators}, since $U^t$ acts on observables by composing them with the mapping $\S^t$ \cite{sm:1993}. 
}
\section{Linear systems}
{\color{black} In this section we study linear dynamical systems from the perspective of spectral theory of Koopman operator family. We include the results previously announced in \cite{Mezic:2015}, but expand on the details of the proofs. Section \ref{sec:can} is new and utilizes Koopman eigenfunctions as new coordinates that transform the linear dynamical system into its canonical (Jordan) form. This is used later in the paper where conjugacy theorems are used for identification of Koopman eigenfunctions.}
\label{sect:lin}
\subsection{Continuous-time Linear Systems with Simple Spectrum}
In the case when the dynamical system is linear, and given by $\dot \x=A\x,\ \ \x\in \R^n$ its matrix eigenvalues are eigenvalues of the associated Koopman operator.  The associated Koopman eigenfunctions are given by \cite{Rowleyetal:2009}:
\begin{equation}
	\phi_j(\x) = \left< \x, \w_j\right>,\qquad j=1,\ldots,n
\end{equation}
where  $\w_j$ are  eigenvectors of the adjoint $A^*$ (that is,
	$A^* \w_j = \lambda^c_j \w_j$),
normalized so that $\left< \v_j,\w_k \right> = \delta_{jk}$, where $\v_j$ is an eigenvector of $A$, and $\left<\cdot,\cdot\right>$ denotes an inner product on the linear space $M$
in which the evolution is taking place. This is easily seen by observing
\be
\dot \phi_j= \left< \dot\x, \w_j\right>= \left< A\x, \w_j\right>= \left< \x, A^*\w_j\right>=\lambda_j \left< \x, \w_j\right>= \lambda_j \phi_j,
\ee
and thus
$
\phi_j(t,\x_0)=U^t\phi_j(\x_0)=\exp(\lambda_j t)\phi_j(\x_0).
$
Now, for any $\x\in M$, as long as~$\A$ has a full set of eigenvectors at distinct eigenvalues $\lambda_j$, we may write
$$
\x = \sum_{j=1}^n \left<\x,\w_j\right>\v_j = \sum_{j=1}^n \phi_j(\x) \v_j.
$$
Thus,
\begin{eqnarray}
	   U^t\x(\x_0)&=& \x(t,\x_0)= \exp(At)\x_0=\sum_{j=1}^n   \left<\exp(A t)\x_0,\w_j\right> \v_j\nonumber \\
	     &=&\sum_{j=1}^n   \left<\x_0,\exp(A^* t)\w_j\right> \v_j=\sum_{j=1}^n \exp(\lambda_j t) \left<\x_0,\w_j\right>\v_j,\nonumber \\
	     &=&\sum_{j=1}^n \exp(\lambda_j t)\phi_j(\x_0) \v_j,
\label{eq:x_expansion}
\end{eqnarray}
where  $\x(\x_0)$ is the vector function that associates Cartesian coordinates with the point $\x_0$ (the initial condition) in state space. 
This is an expansion of the dynamics of observables - in this case the coordinate functions $\x(\x_0)$ in terms of spectral quantities (eigenvalues, eigenfunctions and {\it Koopman modes} $\v_j$) of the Koopman family $U^t$. Considering (\ref{eq:x_expansion}), we note that the quantity we know as the eigenvector $\v_j$ is {\it not} associated with the Koopman operator, but rather with the observable - if we changed the observable to, for example $\y=C\x$, $C$ being an $m\times n$ matrix, then the expansion would read 
\be
	   U^t\y(\x_0)=\sum_{j=1}^n \exp(\lambda_j t)\phi_j(\x_0) C\v_j,
\label{eq:x_expansiony}
\ee
and we would call $C\v_j$ the $j-th$ Koopman mode\footnote{Koopman modes are defined up to a constant, the same as eigenvectors. However, here we have defined projection with respect to a specific basis with an orthonormal dual.} of observable $\y$. Assume now that the space of observables on $\R^n$ we are considering is the space of complex linear combinations of $\x(\x_0)$. Then, $\phi_j(\x_0)C\v_j$  is the projection of the observable $C\x$ onto the eigenspace of the Koopman family spanned by the eigenfunction $\phi_j(\x_0)=\langle \x,\w_j\rangle$. 

Note that what changed between expansions (\ref{eq:x_expansion}) and (\ref{eq:x_expansiony}) is the Koopman modes. On the other hand, the eigenvalues and eigenfunctions used in the expansion do not change. 
Thus, what changes with change in observables  is their contribution to the overall evolution in the observable, encoded in $C\v_j$.
These properties persist in the fully nonlinear case, with the modification that the spectral expansion is typically infinite and can have a continuous spectrum part.

Note also that the evolution of coordinate functions can be written in terms of the evolution of Koopman eigenfunctions, by
\be
 U^t\x(\x_0)=\sum_{j=1}^n \phi_j(t,\x_0) \v_j.
\label{evoK}
\ee

\subsection{Continuous-time Linear Systems: the General Case}
\label{sec:lingen}
In general, the matrix $A$ can have repeated eigenvalues and this can lead to a lack of eigenvectors.
Recall that the algebraic multiplicity of an eigenvalue $\lambda_j$ of $A$ is the exponent ($m_j$) of the polynomial factor  $(\lambda-\lambda_j)^{m_j}$
of the characteristic polynomial $\det(A-\lambda I)$. In other words, it is the number of repeat appearences of $\lambda_j$ as a zero of the characteristic polynomial. An eigenvalue that repeats $m_j$ times does not necessarily have $m_j$ eigenvectors associated with it. Indeed - the algebraic multiplicity $m_j$  of $\lambda_j$ is bigger than or equal to geometric multiplicity, which is the number of eigenvectors associated with $\lambda_j$.
Such sonsiderations lead to the so-called Kato Decomposition. Kato Decomposition is an example of a spectral decomposition, where a linear operator is decomposed into a sum of terms consisting of scalar multiples of projection and nilpotent operators. For a finite-dimensional linear operator $A$ it reads \cite{Kato:2013}:
\be
U=\sum_{h=1}^s \lambda_hP_h+D_h,
\label{rep}
\ee
where $s$ is the dimension. Each $P_h$ is a projection operator on the {\it algebraic eigenspace} $M_h$ that can be defined as the null space of $(U-\lambda_hI)^{m_h}$, and $D_h$ is a nilpotent operator.
We now use this spectral decomposition theorem for finite-dimensional 
linear operators to provide an easy, elegant proof of Hirsch-Smale theorem \cite{HirschandSmale:1974} on solutions of 
ordinary differential equations. 
Consider a linear ordinary differential equation on $\bR^m,$
$
\dot \x=A\x
$
where $A$ is an $n\times n$ matrix. It is well-known that the solution of this equation
reads 
$
\x(t)=\exp(At)\x_0,
$
where $\x_0$ is the initial condition. The exponentiation of the matrix $A$ reads
\be
\exp(At)=\sum_{k=0}^{\infty}\frac{A^kt^k}{k!}.
\label{exp}
\ee
Now, from the  Kato decomposition, and using the fact that
\be
D_hD_k=\delta_{hk}D_h,P_hD_k=D_hP_k=0,
\ee
 we obtain
\be
A^k=\sum_{h=1}^s \lambda_h^kP_h+\sum_{j=1}^k\binom{k}{j}\sum_{h=1}^s \lambda_h^{k-j}D_h^j,
\ee
where $\lambda_h,h=1,...,s$ are eigenvalues of $A$.
We rewrite $\exp(At)$ as
\bea
&&I+\sum_{k=1}^{\infty}\frac{\sum_{h=1}^s \lambda_h^kP_h+\sum_{j=1}^k\binom{k}{j}\sum_{h=1}^s \lambda_h^{k-j}D_h^j}{k!}t^k, \nonumber \\
&=&\sum_{h=1}^s P_h \sum_{k=0}^{\infty}\frac{\lambda_h^kt^k}{k!}+\sum_{k=1}^{\infty}\frac{\sum_{j=1}^k\binom{k}{j}\sum_{h=1}^s \lambda_h^{k-j}D_h^j}{k!}t^k, \nonumber \\
&=&\sum_{h=1}^s e^{\lambda_ht}P_h+\sum_{k=1}^{\infty}\frac{\sum_{j=1}^k\binom{k}{j}\sum_{h=1}^s \lambda_h^{k-j}D_h^j}{k!}t^k, \nonumber \\
\label{exp1}
\eea
Note now that 
\be
t^le^{\lambda_h t}=t^l\sum_{k=0}^{\infty}\frac{\lambda_h^kt^k}{k!}=\sum_{k=0}^{\infty}\frac{\lambda_h^kt^{k+l}}{k!}
=\sum_{m=l}^{\infty}\frac{\lambda_h^{m-l}t^m}{(m-l)!}.
\ee
We can rewrite the second sum in the last line of (\ref{exp1})
as
\be
\sum_{h=1}^s\sum_{j<m_h}\sum_{k=j}^\infty\binom{k}{j} \lambda_h^{k-j}D_h^j\frac{t^k}{k!}, \\
\ee
leading further to
\bea
&=&\sum_{h=1}^s\sum_{j<m_h}\sum_{k=j}^\infty\frac{k\cdot(k-1)\cdot...\cdot(k-j+1)}{j!} \lambda_h^{k-j}D_h^j\frac{t^k}{k!},\nonumber \\
&=&\sum_{h=1}^s\sum_{j<m_h}\frac{D_h^j}{j!}\sum_{k=j}^\infty \lambda_h^{k-j}\frac{t^k}{(k-j)!},\nonumber \\
&=&\sum_{h=1}^s\sum_{j<m_h}\frac{D_h^j}{j!}t^je^{\lambda_ht}.\\
\eea
Thus we get
\be
\exp(At)=\sum_{h=1}^s ( e^{\lambda_ht}P_h+\sum_{j<m_h}\frac{t^je^{\lambda_ht}}{j!}D_h^j), 
\label{exp2}
\ee
Let us now connect this expansion to the formula we obtained previously, given by (\ref{eq:x_expansion}). In that case, we assumed that algebraic multiplicities of all eigenvalues are $1$, and there is a full set of associated eigenvectors $\v_h$. Thus, the nilpotent part $D_h=0$, and the projection 
of a vector $\x_0$ on the $h-th$ eigenspace is 
\be
P_h {\x_0}=\left<\x_0,\w_h\right>\v_h=\phi_h(\x_0)\v_h.
\ee
Using this with (\ref{exp2}), we obtain (\ref{eq:x_expansion}). 

More generally, let the dimension of each geometric eigenspace be equal to $1$, let $j=1,...,s$ be the counter of distinct eigenvalues of $A$ and $m_1,...,m_s$ their multiplicities (or equivalently dimensions of algebraic eigenspaces corresponding to eigenvalues). Label the basis of the generalized eigenspace $E_h$ by $\v_h^1,...,\v_h^{m_h}$, where 
 $\v_h^i$ are chosen so that $(A-\lambda_hI)^i\v_h^i=0$. In other words, $\v_h^1$ is a standard eigenvector of $A$ at $\lambda_h$ and the generalized eigenvectors $\v_h^i,i=2,...,m_h$ satisfy
 $
A\v_h^i=\lambda_h\v_h^i+\v_h^{i-1}.
$
 Now let 
$
\phi_h^i(\x)=\left<\x,\w_h^i\right>
$
where $\w_h^i$ is the dual basis vector to $\v_h^i$ and satisfies 
\bea
A^*\w_h^i&=&\lambda_h^c\w_h^i+\w_h^{i+1}, \ i<m_h. \nonumber \\
A^*\w_h^i&=&\lambda_h^c\w_h^i, \ \ \ \ \ \ \ \ \ \ \      i=m_h.
\eea 
Note that for $i>1$.
\bea
\dot\phi_h^i(\x)&=&\left<\dot \x,\w_h^i\right>=\left<A \x,\w_h^i\right> \nonumber \\
&=&\left< \x,A^*\w_h^i\right>=\left< \x,\lambda_h^c\w_h^i+\w_h^{i+1}\right> \nonumber \\
&=&\lambda_h\left< \x,\w_h^i\right>+\left<\x,\w_h^{i+1}\right> \nonumber \\
&=&\lambda_h\phi_h^i(\x)+\phi_h^{i+1}(\x).
\label{geneig}
\eea
We call $\phi_h^i(\x),1\leq i <m_h$   the generalized eigenfunctions of the Koopman operator at eigenvalue $\lambda_h$. 
{\color{black} \begin{rem}
 It is evident from the equation (\ref{geneig}) that products of generalized eigenfunctions are not generalized eigenfunctions, in contrast with the property of ordinary eigenfunctions.
\end{rem}}
\begin{example}
\label{exa:ge}
To justify the name  generalized eigenfunctions, consider the following simple example: let $m_h=2$. Then $\dot \phi_h^1=\lambda_h \phi_h^1+\phi_h^2$, where $\phi_2^1$ is an eigenfunction of $U^t$ at $\lambda_h$ satisfying
$\dot \phi_h^2=\lambda_h \phi_h^2.$
Then 
\be
(d/dt-\lambda_h I)^2 \phi_h^1=0.
\ee
Thus, $\phi_h^1$ is in the nullspace of the differential operator $(\frac{d}{dt}-\lambda_h I)^2$. 
\end{example}
Expanding from Example \ref{exa:ge}, for $m_j$ arbitrary, generalized eigenfunctions $\phi$ satisfy 
$
(d/dt-\lambda_h I)^{m_h} \phi=0.
$
By integrating (\ref{geneig}), the time evolution of the generalized eigenfunctions reads
\be
\phi_h^i(t)=\sum_{n=0}^{m_h-i}\frac{t^{n}}{n!}e^{\lambda_h t}\sum_{l=m_h}^{i+n}\phi_h^l(0).
\label{eigevo}
\ee
(in fact by directly differentiating (\ref{eigevo}), one can easily find out that it satisfies (\ref{geneig})).
Now writing
\be
\x_0=\sum_{h=1}^s\sum_{i=1}^{m_h}\left< \x_0,\w_h^i\right>\v_h^i,
\ee
we get
\begin{eqnarray}
	   U^t\x(\x_0)&=& \x(t)= \exp(At)\x_0\nonumber \\
	   &=&\sum_{h=1}^s\sum_{i=1}^{m_h}   \left<\exp(A t)\x_0,\w_h^i\right> \v_h^i\nonumber \\
\label{eq:x_expansion1}
\end{eqnarray}
leading further to
\bea
&=&\sum_{h=1}^s\sum_{i=1}^{m_h} \left<\x_0,\exp(A^* t)\w_h^i\right> \v_h^i\nonumber \\
	   &=&\sum_{h=1}^s\sum_{i=1}^{m_h}e^{\lambda_h t}(\sum_{k=i}^{m_h} \frac{t^{k-i}}{(k-i)!}\left< \x_0,\w_h^k\right>)\v_h^i,\nonumber \\
&=&\sum_{h=1}^s e^{\lambda_h t} \sum_{i=1}^{m_h}(\sum_{k=i}^{m_h} \frac{t^{k-i}}{(k-i)!}\phi_h^k(\x_0))\v_h^i \nonumber \\
	      &=&\sum_{h=1}^s \left[ e^{\lambda_h t}\left(\sum_{k=1}^{m_h} \phi_h^k(\x_0)\v_h^k \right)\right.\nonumber \\
	      &+&\left.\sum_{i=1}^{m_h-1}\frac{t^{i}}{i!}e^{\lambda_h t}\left(\sum_{k=i+1}^{m_h} \phi_h^k(\x_0)\v_h^{k-i}\right)\right]. 
	      \label{eq:genlinexp}
\eea
We connect the formula we just obtained with the expansion (\ref{exp2}). Comparing the two, it is easy to see that 
\be
P_h\x=\sum_{k=1}^{m_h} \phi_h^k(\x)\v_h^k,
\label{pj}
\ee
and 
\be
D_h^i\x=\sum_{k=i+1}^{m_h} \phi_h^k(\x)\v_h^{k-i}
\label{nilp}
\ee

The above discussion also shows that, as long as we restrict the space of observables on $\bR^m$ to linear ones, $f(\x)=\left<{\bf c},\x\right>$, where $\c$ is a vector in $\bR^m$, then the generalized eigenfunctions and associated eigenvalues of the Koopman operator are  obtainable in a straightforward fashion from the standard linear analysis of $A$ and its transpose.

It is easy to see that the most general case, in which dimension of geometric eigenspaces is not necessarily $1$, is easily treated by considering geometric eigenspace of dimension say $2$ as two geometric eigenspaces of dimension $1$. Keeping in mind that these correspond to - numerically - the same eigenvalue, we can define generalized eigenvectors corresponding to each eigenvector in - now separate - $1$-dimensional geometric eigenspaces.

\subsection{The Canonical Form of Linear Systems}
\label{sec:can}
The (generalized) Koopman eigenfunctions $$\phi_h^i(\x)=\left< \x,\w_h\right>,\quad h=1,...,s\, \quad0\leq i<m_h,$$ can be thought of as ``good" coordinates for linear systems. Let $${\boldsymbol \phi}=(\phi_1^1,...,\phi_1^{m_1},\phi_2^1,...,\phi_2^{m_2},...,\phi_s^1,...,\phi_s^{m_s})^T.$$ From
\be
\dot\phi_h^i(\x)=\lambda_h\phi_h^i(\x)+\phi_h^{i+1}(\x),
\ee
 we obtain 
\be\dot {\mathbf \phi}=J{\mathbf \phi},
\label{eq:jord} 
\ee
where
\be J =
\left(
\begin{array}{*4{c}}
A_{\lambda_1} & & & 0\\
& A_{\lambda_2} & & \\
& & \ddots & \\
0 & & & A_{\lambda_s}
\end{array}
\right)
\ee
and
\be  A_{\lambda_h}=
\left(
\begin{array}{*5{c}}
\lambda_h & 1 & 0 &  \cdots & 0\\
0& \lambda_h & 1 &  \cdots & 0\\
\vdots& & \cdots & & \vdots\\
0 & \ddots & & \lambda_h & 1 \\
0 & \ddots & & 0 & \lambda_h
\end{array}
\right)
\ee
is the Jordan block \index{Jordan block} corresponding to the eigenvalue $\lambda_j$.
 Note that Koopman eigenfunctions can be complex (conjugate) and thus this representation
is in general complex. 

The real form of the Jordan block corresponding to a complex eigenvalue $\lambda_i$ whose geometric multiplicity is
less than algebraic multiplicity is obtained  using the variables $r_j=y_j$ (for $\lambda_j\in\mathbb{R}$) and the polar coordinates 
\be
\left(\begin{array}{c}y_j \\ y_{j+1}\end{array} \right)=\left(\begin{array}{c} r_j \cos(\theta_j)\\ r_j \sin(\theta_j) \end{array} \right)
\label{eq:realcoord}
\ee
 (for $\lambda_j=\lambda_{j+1}^c=r_je^{i\theta_j}\, \notin\mathbb{R}$). Thus,
$$\phi_h=r_h(\x)e^{i\theta_h(\x)}=\phi_{h+1}^c,$$ and $(\phi_h(\x),\phi_{h+1}(\x))$ get transformed into $(y_h(\x),y_{h+1}(\x))$ to yield the $i$-th Jordan block
\be
A_{\lambda_i}= 
\begin{bmatrix}
C_i    & I       & \;     & \;    \\
\;     & C_i     & \ddots & \;    \\     
\;     & \;      & \ddots & I     \\
\;     & \;      & \;     & C_i   \\
\end{bmatrix}.
\ee
where 
\be
C_i = 
\begin{bmatrix}
\sigma_i  & \omega_i \\
-\omega_i & \sigma_i \\ 
\end{bmatrix},
\ee
and $I$ is the $2\times 2$ identity matrix. 

We have the following corollary of the above considerations:
\begin{cor}\label{cor:genef} If a set of functions $\bphi$ satisfy 
\be\dot {\bphi}=J{ \bphi},
\label{eq:jord1} 
\ee
where $J$ is the complex Jordan normal form of a matrix $A$, then $\bphi$ is a set of (generalized) eigenfunctions of 
\be
\dot\x=A\x.
\ee
\end{cor}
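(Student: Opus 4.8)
The plan is to prove the corollary one Jordan block at a time, so that the whole argument reduces to unwinding the definition of a generalized eigenfunction given after Example~\ref{exa:ge}. Writing $\dot{\bphi} = J\bphi$ out in components simply reproduces the chain of relations (\ref{geneig}): inside the block $A_{\lambda_h}$ of size $m_h$ one gets $\dot\phi_h^i = \lambda_h\phi_h^i + \phi_h^{i+1}$ for $1 \le i < m_h$ and $\dot\phi_h^{m_h} = \lambda_h\phi_h^{m_h}$ (equivalently, with the convention $\phi_h^{m_h+1}\equiv 0$). So it suffices to show that each component $\phi_h^i$ is a generalized eigenfunction of the Koopman generator at the eigenvalue $\lambda_h$, i.e. that $(d/dt - \lambda_h I)^{m_h}\phi_h^i = 0$; since $J$ is the (complex) Jordan form of $A$, the numbers $\lambda_h$ and the block sizes $m_h$ are precisely the eigenvalues of $A$ and their algebraic multiplicities, so the terminology will be consistent with the rest of Section~\ref{sect:lin}.

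First I would dispatch the top component of the block: $\dot\phi_h^{m_h} = \lambda_h\phi_h^{m_h}$ gives $U^t\phi_h^{m_h} = e^{\lambda_h t}\phi_h^{m_h}$, so $\phi_h^{m_h}$ is a genuine eigenfunction at $\lambda_h$. Then I would run a descending induction on $i$: from $(d/dt-\lambda_h I)\phi_h^i = \phi_h^{i+1}$ together with the inductive hypothesis $(d/dt - \lambda_h I)^{m_h - i}\phi_h^{i+1} = 0$, applying $(d/dt - \lambda_h I)^{m_h - i}$ to the former yields $(d/dt - \lambda_h I)^{m_h - i + 1}\phi_h^i = 0$, hence a fortiori $(d/dt - \lambda_h I)^{m_h}\phi_h^i = 0$. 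This is exactly the defining relation of a generalized eigenfunction at $\lambda_h$, and integrating (\ref{geneig}) as in (\ref{eigevo}) exhibits the explicit $t^k e^{\lambda_h t}$ time evolution; ranging over the blocks $h = 1,\dots,s$ finishes the proof. If instead the hypothesis is presented with the real Jordan form attached to a complex conjugate pair $\lambda_h = \lambda_{h+1}^c$, I would first pass to the complex coordinates via (\ref{eq:realcoord}), so that $\phi_h = r_h e^{i\theta_h} = \phi_{h+1}^c$, and then apply the above verbatim.

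I do not expect a genuine obstacle here: the corollary is essentially the converse of the explicit construction preceding it, and the substantive content --- that nontrivial $\bphi$ satisfying (\ref{eq:jord1}) exist at all --- was already supplied there via the dual basis $\w_h^i$. The only thing needing care is bookkeeping. ``A set of generalized eigenfunctions'' should be read so as to tolerate degeneracies (a component identically $0$ still solves the ODE), and if one wants to recover the specific expansion (\ref{eq:genlinexp}) from such a $\bphi$ one must additionally note that $\bphi$ agrees with $(\langle\x,\w_h^i\rangle)$ only up to the gauge freedom internal to each block (adding multiples of lower-index components to a given $\phi_h^i$). So the ``hard part'' is merely to state the conclusion precisely enough that the one-line per-block computation actually delivers it.
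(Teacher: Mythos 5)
Your proposal is correct and takes essentially the same route as the paper, which offers no separate proof but presents the corollary as an immediate consequence of ``the above considerations'' --- namely that writing $\dot{\bphi}=J\bphi$ out componentwise reproduces the chain relations (\ref{geneig}), whose solutions are by definition (the nullspace of $(d/dt-\lambda_h I)^{m_h}$, as in Example \ref{exa:ge}) the generalized eigenfunctions with the $t^k e^{\lambda_h t}$ evolution (\ref{eigevo}). Your block-by-block descending induction simply makes that one-line observation explicit, and your bookkeeping caveats (degenerate components, gauge freedom within a block) are sensible but not at odds with anything in the paper.
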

\subsection{Stable, Unstable and Center Subspace}
\label{subspaces}
Let us recall the definition of stable, unstable and center subspaces of $\dot \x=A\x, \x\in \R^n$: the { \it stable}\index{Stable subspace} subspace of the fixed point $0$ is the location of all the points in $\R^n$ that go to the fixed point at the origin as $t\rar \infty$. The stable subspace is classically obtained as the span of (generalized) eigenvectors corresponding to eigenvalues of negative real part. In the same way, the{ \it unstable}\index{Untable subspace} subspace of the fixed point $0$ is the location of all the points that go to the fixed point at the origin as $t\rar -\infty$, and is classically obtained as the span of (generalized) eigenvectors corresponding to eigenvalues of positive real part. The { \it center} subspace\index{Center subspace} is usually not defined by its asymptotics (but could be, as we will see that it is the location of all the points in the state space that stay at the finite distance from the origin, or grow slowly (algebraically) as $t \rar \infty$), but rather as the span of (generalized) eigenvectors associated with eigenvalues of zero real part.

Looking at the equation (\ref{eq:x_expansion}), it is interesting to note that one can extract the geometrical location
of stable, unstable and center subspaces from the eigenfunctions of the Koopman operator. We order
eigenvalues $\lambda_j, j=1,..,n$ from the largest  to the smallest, where we do not pay attention to the possible repeat of eigenvalues. Let $s,c,u$ be the number of negative real part eigenvalues, $0$ and positive real part eigenvalues. 
\begin{prop}Let  $\lambda_1,...\lambda_u$ be positive real part eigenvalues, $\lambda_{u+1},...,\lambda_{u+c}$ be $0$ real part eigenvalues, and $\lambda_{u+c+1},...,\lambda_{s}$ be negative real part eigenvalues of a matrix $A$ of an LTI system. Let \be\phi_1,...,\phi_{u+c+s},\ee be the (generalized) eigenfunctions of the associated Koopman operator.
Then the joint level set of (generalized) eigenfunctions
\be L_s=\{\x\in\R^n|\phi_1(\x)=0,...,\phi_{u+c}(\x)=0\},\ee is the stable subspace $E^s$, 
\be L_c=\{\x\in\R^n|\phi_1(\x)=0,...,\phi_{u}(\x)=0,\phi_{u+c+1}(\x)=0...,\phi_{u+c+s}(\x)=0\},\ee is the center subspace $E^c$, and 
\be L_u=\{\x\in\R^n|\phi_{u+c+1}(\x)=0,...,\phi_{u+c+s}(\x)=0\},\ee the unstable subspace $E^u$.
\end{prop}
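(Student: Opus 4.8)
The plan is to deduce the statement directly from the generalized spectral expansion (\ref{eq:genlinexp}) together with the identifications (\ref{pj})--(\ref{nilp}) of the Kato projections $P_h$ in terms of the (generalized) Koopman eigenfunctions. Write the distinct eigenvalues of $A$ as $\lambda_1,\dots,\lambda_s$ with algebraic multiplicities $m_1,\dots,m_s$, let $\v_h^1,\dots,\v_h^{m_h}$ be the basis of the algebraic eigenspace $M_h$ chosen as in the previous subsection, and let $\phi_h^1,\dots,\phi_h^{m_h}$ be the dual functionals, so that $P_h\x=\sum_{k=1}^{m_h}\phi_h^k(\x)\v_h^k$ by (\ref{pj}). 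Relabelling the full family $\{\phi_h^i\}$ as $\phi_1,\dots,\phi_n$, $n=\sum_h m_h$, by listing first the functionals attached to eigenvalues of positive real part, then those of zero real part, then those of negative real part, produces blocks of sizes $u$, $c$ and $s$ (with $u+c+s=n$) in accordance with the ordering fixed just before the proposition.

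First I would note that, for each fixed $h$, the vectors $\v_h^1,\dots,\v_h^{m_h}$ are linearly independent, so by (\ref{pj}) one has $P_h\x_0=0$ if and only if $\phi_h^k(\x_0)=0$ for all $k=1,\dots,m_h$. Since the $P_h$ are the spectral projections, $\R^n=\bigoplus_h M_h$ and $\x_0=\sum_h P_h\x_0$ is the corresponding decomposition; hence $\x_0$ lies in a partial sum $\bigoplus_{h\in S}M_h$ precisely when $P_h\x_0=0$ for every $h\notin S$, i.e.\ precisely when all $\phi_h^k$ with $h\notin S$ vanish at $\x_0$. Taking $S$ to be, in turn, the index set of eigenvalues with negative, with positive, and with zero real part, and recalling the classical identifications $E^s=\bigoplus_{\operatorname{Re}\lambda_h<0}M_h$, $E^u=\bigoplus_{\operatorname{Re}\lambda_h>0}M_h$, $E^c=\bigoplus_{\operatorname{Re}\lambda_h=0}M_h$, I obtain at once that $E^s$ is the common zero set of the eigenfunctions in the positive-real-part and zero-real-part blocks (that is, $\phi_1,\dots,\phi_{u+c}$), that $E^u$ is the common zero set of the zero-real-part and negative-real-part blocks, and that $E^c$ is the common zero set of the positive-real-part and negative-real-part blocks; this is exactly $L_s=E^s$, $L_u=E^u$, $L_c=E^c$. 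Finally, to recover the asymptotic descriptions recalled before the statement, I would substitute these vanishing conditions into (\ref{eq:genlinexp}): on $E^s$ every surviving summand carries a factor $e^{\lambda_h t}$ with $\operatorname{Re}\lambda_h<0$ that dominates the polynomial factors $t^i/i!$, so $U^t\x(\x_0)\to 0$ as $t\to\infty$; time reversal handles $E^u$; and on $E^c$ one has $|e^{\lambda_h t}|=1$, so $U^t\x(\x_0)$ grows at most polynomially in $|t|$ and remains bounded exactly when the center nilpotents $D_h$ vanish.

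I do not anticipate a real difficulty: once (\ref{eq:genlinexp}) and (\ref{pj}) are available the argument is bookkeeping. The one point that needs attention is that the Koopman eigenfunctions are generally complex and, for a non-real $\lambda_h$, come in conjugate pairs, so that the condition ``$\phi_h^k(\x)=0$'' must be read as the pair of real conditions $\operatorname{Re}\phi_h^k(\x)=\operatorname{Im}\phi_h^k(\x)=0$; this is consistent with the real Jordan form and the real coordinates (\ref{eq:realcoord}) --- a functional and its conjugate have the same zero set on $\R^n$ --- and with the real dimension count $u+c+s=n$. A second, purely bookkeeping, point is to check that the relabelling of the doubly indexed $\phi_h^i$ into $\phi_1,\dots,\phi_n$ indeed groups them by the sign of $\operatorname{Re}\lambda_h$, which is precisely the ordering adopted in the paragraph preceding the proposition.
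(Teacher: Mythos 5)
Your proof is correct and follows essentially the same route as the paper's: both rest on the expansion (\ref{eq:genlinexp}) and the identification (\ref{pj}) of the spectral projections with the (generalized) eigenfunctions, so that joint vanishing of the $\phi_h^k$ attached to an eigenvalue $\lambda_h$ is equivalent to vanishing of the component of $\x$ in the corresponding algebraic eigenspace $M_h$. The only difference is organizational: the paper argues the stable/unstable cases via the asymptotic definition of $E^s$ and $E^u$ (vanishing eigenfunctions annul the non-decaying terms, with the converse by contradiction on projections), whereas you treat all three cases uniformly through the direct-sum decomposition $\R^n=\bigoplus_h M_h$ and recover the asymptotics afterwards --- a cosmetic, not substantive, difference.
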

\begin{proof} Note that setting $\phi_1(\x)=0,...,\phi_{u+c}(\x)=0$ leads to annulation of terms in (\ref{eq:genlinexp}) that are multiplied by $e^{\lambda_jt}$, where $\Re{\lambda_j}\geq0$. Thus, any initial condition $\x$ belonging to $L_s$ has evolution governed by terms that asymptotically converge to $0$ and thus are parts of the stable subspace. Conversely, assume that $\x$ does not belong to $L_s$, but the trajectory starting at it asymptotically converges to $0$. Since $\x$ has non-zero projection $\left<\x,\w_j\right>$ on at least one of the  (generalized) eigenvectors of $A$, that are associated with eigenvalues of non-negative real part, we get a contradiction. The proof for the unstable subspace is analogous. 

Since the center subspace is defined as the span of the (generalized) eigenvectors of $A$ having eigenvalues with zero real part, the initial condition in the center subspace can not have any projection on (generalized) eigenvectors associated with eigenvalues with positive or negative real part, 
and thus $$\phi_1(\x)=0,...,\phi_{u}(\x)=0,\phi_{u+c+1}(\x)=0...,\phi_{u+c+s}(\x)=0.$$ This implies that $\x$ is in $L_c$. Conversely, if $\x\in L_c$ then $\x$ does not have any projection on (generalized) eigenvectors associated with eigenvalues with positive or negative real part, and thus is in $E^c$.
\end{proof} This generalizes nicely to nonlinear systems (see below), in contrast to the fact that the standard definition, where e.g. the unstable space is the span of $\v_1,...,\v_u$ does not. Namely,  even when the system is of the form $$\dot \x=A\x+\eps\f,$$ for $\f$ bounded,  $\f(0)=0$, and $\eps$ small, using the span of eigenvectors we can only show existence of the   unstable, stable and center manifolds that are tangent to the unstable, stable and center subspace $E^u,E^s, E^c$, respectively.

 So, the joint zero level sets of Koopman eigenfunctions define dynamically important geometric objects - invariant subspaces - of linear dynamical systems. This is not an isolated incident.  Rather, in general the {\it level sets of Koopman eigenfunctions}\index{Level sets of Koopman eigenfunctions} reveal important  information about the state space geometry of the underlying dynamical system.
 
\section{Koopman Eigenfunctions Under Conjugacy}
\label{sect:conj}
Spectral properties of the Koopman operator transform nicely under conjugacy, {\color{black} as already shown in \cite{Mezic:2005}}. Here we use the notion of conjugacy defined more generally than in the classical context. In fact, we will  define the notion of {\it factor conjugacy} \index{Factor conjugacy} - coming from the fact that we are combining notions of factors from measure theory \cite{Petersen:1989}, and the topological notion of conjugacy \cite{GuckenheimerandHolmes:2002}. \index{Conjugacy! factor}

 Let $S^t,U_S^t$ be the family of mappings and the Koopman operator associated with $$\dot \x=\F(\x),\ \x\in \R^n,$$ with $m\leq n$ and $T^t,U_T^t$ a family of mappings and the Koopman operator associated with  $$\dot \y=\G(\y),\ \y\in \R^m.$$ Assume that $\phi(\y)$ is an eigenfunction of $U_T^t$ associated with eigenvalue $\lambda$. In addition, let $\h:\R^n\rar \R^m$ be a mapping such that 
 
 \be \h(S^t\x)=(T^t(\h(\x))),\label{eq:conj}\ee 
 i.e. the two dynamical systems are {\it (factor) conjugate}\index{Conjugacy of dynamical systems}.\footnote{This is not the standard notion of conjugacy, since the dimensions of spaces that $\h$ maps between is not necessarily the same, i.e. $m\neq n$ necessarily.} Then we have 
\be
\exp(\lambda t)\phi\circ\h(\x)=\phi(T^t\h(\x))=\phi(\h(S^t(\y)))
=U_S^t(\phi\circ\h(\x)), \label{conj}
\ee
i.e. if $\phi$ is an eigenfunction at $\lambda$ of $U_T^t$, then the composition $\phi\circ\h$ is an eigenfunction of $U_S^t$ at $\lambda$. As a consequence, if we can find a global conjugacy of a nonlinear system  to a linear system, then the spectrum of the Koopman operator can typically be determined from the spectrum of the linearization at the fixed point. We discuss this, and some extensions, in the next section. The classical notion of topological conjugacy
\index{Conjugacy! factor} \index{Topological conjugacy} is obtained when $m=n$ and $\h$ is a homeomorphism (a continuous invertible map whose inverse is also continuous). If $\h$ is a $C^k$ diffeomorphism, then we have a $C^k$ diffeomorphic conjugacy. The notion of factor conjugacy is broader than those classical definitions, and includes the notion of {\it semi-conjugacy}, that is obtained when $\h$ is continuous or smooth, but $m<n$. \footnote{The definition of factor conjugacy can be generalized to include dynamical systems on spaces $M$ and $N$, where $\h:M\rar N$ and $N\neq M$ (see \cite{MezicandBanaszuk:2004}, where such concept was defined for the case $N=S^1,$ indicating conjugacy to a rotation).}

Generalized eigenfuctions are  preserved under conjugation, just like ordinary eigenfunctions: let $m_j$ be the geometric multiplicity of the eigenvalue $\lambda_j$. For $0<i<m_j$ we have (see (\ref{eigevo})):
\be
\sum_{n=0}^{m_j-i}\frac{t^{n}}{n!}e^{\lambda_j t}\phi_j^{n+i} \circ \h(\x)\nonumber \\
=U_T^t\phi_j^i(\h(\x))
=\phi_j^i(T^t(\h(\x)))=\phi_j^i(\h(S^t\x))
=U_S^t\left(\phi_j^i \circ\h(\x)\right)
\label{e}
\ee
thus indicating that  $\phi_j^i\circ \h$ is a function that evolves in time according to  the evolution equation (\ref{eigevo}) and thus is  a generalized eigenfunction.
Together with the fact that we already proved this for ordinary eigenfunctions in (\ref{conj}), we get 
\begin{prop} [\color{black} \cite{Mezic:2015}]
Let $S^t,U_S^t$ be the family of mappings and the Koopman operator associated with $\dot \x=\F(\x),\x\in \R^n$ and $T^t,U_T^t$ the family of mappings and the Koopman operator associated with  $\dot \y=\G(\y),\y\in \R^n$. In addition, let $\h: \R^n\rar  \R^n$ be a $C^2$ diffeomorphism  such that $ \h(S^t\x)=T^t(\h(\x))$, i.e. the two dynamical systems are $C^2$ diffeomorphically conjugate.  If $\phi$ is a (generalized) $C^2$ eigenfunction at $\lambda$ of $U_T^t$, then the composition $\phi\circ\h$ is a (generalized) $C^2$ eigenfunction of $U_S^t$ at $\lambda$. 
\label{prop:conj}
\end{prop}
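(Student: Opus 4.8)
The plan is to reduce the proposition to the two computations already carried out in equations~(\ref{conj}) and~(\ref{e}) just above the statement, and then to add the routine observation that $C^2$ regularity is preserved. First, for an ordinary eigenfunction $\phi$ of $U_T^t$ at $\lambda$, I would use only that the Koopman operators act by composition and that $\h$ intertwines the flows, $\h\circ S^t=T^t\circ\h$, to compute
\[
U_S^t(\phi\circ\h)(\x)=(\phi\circ\h)(S^t\x)=\phi\big(\h(S^t\x)\big)=\phi\big(T^t(\h(\x))\big)=(U_T^t\phi)(\h(\x))=e^{\lambda t}(\phi\circ\h)(\x),
\]
which is precisely~(\ref{conj}); since $\h$ is a diffeomorphism of $\R^n$ it is onto, so $\phi\circ\h\not\equiv 0$, and hence $\phi\circ\h$ is a genuine eigenfunction of $U_S^t$ at $\lambda$.

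For a generalized eigenfunction I would use that belonging to the Jordan chain $\phi_j^1,\dots,\phi_j^{m_j}$ at $\lambda_j$ is characterized by the differential relations~(\ref{geneig}), equivalently by the closed form~(\ref{eigevo}). Setting $\psi_j^i:=\phi_j^i\circ\h$, the same composition identity $U_S^t(\phi_j^i\circ\h)=(U_T^t\phi_j^i)\circ\h$ together with~(\ref{eigevo}) applied to $U_T^t\phi_j^i$ shows that $U_S^t\psi_j^i$ is the very same polynomial in $t$ times $e^{\lambda_j t}$, now built from $\psi_j^i,\psi_j^{i+1},\dots,\psi_j^{m_j}$; differentiating in $t$ at $t=0$ (cf.\ the remark following~(\ref{eigevo})) then gives $\dot\psi_j^i=\lambda_j\psi_j^i+\psi_j^{i+1}$ together with $\dot\psi_j^{m_j}=\lambda_j\psi_j^{m_j}$, i.e.\ the $\psi_j^i$ again form a Jordan chain of (generalized) Koopman eigenfunctions --- which is exactly~(\ref{e}). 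An equivalent route, avoiding the closed form, is to differentiate $t\mapsto\phi_j^i(\h(S^t\x))=\phi_j^i(T^t\h(\x))$ at $t=0$ and match the chain-rule expression (the derivative of $\psi_j^i$ along $S^t$) against the value furnished by~(\ref{geneig}) for $\phi_j^i$ at the point $\h(\x)$, proceeding by downward induction on $i$ starting from the ordinary eigenfunction $\phi_j^{m_j}$ handled above.

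Finally I would record the regularity bookkeeping: $\phi$ (resp.\ each $\phi_j^i$) is $C^2$ and $\h$ is a $C^2$ diffeomorphism, so $\phi\circ\h$ is $C^2$ by the chain rule, and bijectivity of $\h$ prevents it from vanishing identically. None of this is deep. The only place I expect to need any care is the bookkeeping in the generalized case: one must check that composition with $\h$ transports the Jordan chain index by index --- so that $\phi_j^i\circ\h$ is paired with $\phi_j^{i+1}\circ\h$ and not with a member of some other chain --- and one must invoke the correct characterization of generalized eigenfunctions, either the differential relations~(\ref{geneig}) or their integrated form~(\ref{eigevo}). The ordinary-eigenfunction case and the $C^2$ claim are each essentially a one-liner.
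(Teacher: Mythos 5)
Your proposal is correct and follows essentially the same route as the paper: the ordinary-eigenfunction case is exactly the computation in (\ref{conj}), and the generalized case is exactly the identity (\ref{e}) obtained by composing the intertwining relation with the integrated evolution law (\ref{eigevo}). The extra bookkeeping you add (index-by-index transport of the Jordan chain, preservation of $C^2$ regularity under composition with a $C^2$ diffeomorphism) is left implicit in the paper but is routine and correct.
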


\section{Nonlinear Systems with Globally Stable Equilibria }
\label{ne}
 Non-degenerate linear systems (i.e. those with $\det A\ne 0$)
have a single equilibrium at the origin as the distinguished solution. As the natural first extension to the nonlinear realm, it is interesting to consider a class of nonlinear systems that (at least locally) have an equilibrium as the only special solution, and consider what spectral theory of the Koopman operator for such systems can say. {\color{black} In all the work that follows, we assume global existence and uniqueness of solutions of the underlying ordinary differential equations.}

For systems that are stable to an equilibrium from an open attracting set, we develop in this section a theory that strongly resembles that of linear systems - as could be expected once it is understood how Koopman eigenfunctions change under conjugacy.  Geometric notions that were discussed in the previous LTI context, such as stable, unstable and center manifolds  are developed in this section for nonlinear systems with globally stable equilibria. Since we use local conjugacy theorems, such as the Hartman-Grobman theorem, we start with the results that enables extension of an eigenfunction of the Koopman operator from an open set to a larger domain in state space. We will assume $S^t$ is a continuously differentiable flow of $\dot \x=\f(\x)$ defined on a state space $M$.

\subsection{Eigenfunctions of the Koopman Operator Defined on Subsets of the State-Space}
The classical linearization theorems that we will utilize in our study are commonly defined on a neighborhood of 
a set of special dynamical significance, such as an equilibrium point,  an invariant torus, or a strange attractor. The idea we pursue here is that extensions of such ``local" eigenfunctions can be done using the flow, as long as the resulting set in state space does not begin to intersect itself. We first define the notion of  an open eigenfunction and subdomain eigenfunction.
\begin{defi} Let  $\phi:A\rar \C$, where $A\subset M$ is not an invariant set. Let $\x\in A$, and $\tau\in(\tau^+(\x),\tau^{-}(\x))=I_\x,$ a connected open interval such that  $S^\tau(\x)\in A, \ \forall \tau\in I_\x$. If
\be
U^\tau\phi(\x)=\phi(S^\tau(\x))=e^{\lambda \tau} \phi(\x),\  \forall \tau\in I_\x,
\label{eq:effor}
\ee
Then $\phi$ is called an {\bf open eigenfunction} of the Koopman operator family $U^t,t\in \R$, associated with an eigenvalue $\lambda.$

If $A$ is a proper invariant subset of $M$ (in which case $I_{\x}=\R$, for every $\x\in A$), we call $\phi$ a {\bf subdomain eigenfunction}.
\end{defi}
Clearly, if $A=M$ and $I_\x=\R$, for every $\x\in M,$ then $\phi$ is an ordinary eigenfunction.

We now define a function that will help us extend the definition of $\phi$ from $A$ to a larger reachable set when $A$ is open:
\begin{defi} Let $A\subset M$ be an open, connected set in the state space. For $\z\notin A$, let $t(\z)$ to be the time such that  $S^{t(\z)}\z=\x\in \cl(A),$ defined by
\[
t(\z)= \left\{\begin{array}{c} 0,\ \forall \z\in A \\
                                           t\in \R^+,\  \forall \z\in S^{-t}(\cl(A))\ba \cup_{0<\tau<t}S^{-\tau}(A)\\
                                            t\in \R^-,\  \forall \z\in S^{-t}(\cl(A))\ba \left(\cup_{t<\tau<0}S^{-\tau}(A) \cup A\cup_{\gamma\in\R^+}S^{-\gamma(A)}\right).
                                            \end{array} 
                                            \right.
\] 
Let  $$ B=\cup_{t\in \R^+}S^{-t}(\cl(A))\ba \cup_{0<\tau<t}S^{-\tau}(A). $$
Also, let $$ F=\cup_{t\in \R^-}S^{-t}(\cl(A))\ba \left(\cup_{t<\tau<0}S^\tau(A) \cup A\cup_{\gamma\in\R^+}S^{-\gamma(A)}\right). $$ Let $ P\subset M$ be the set of points for which $t(\z)$ is defined, i.e. $P=B\cup A\cup F$. Let $\tau^+(\z)\in\R^+$ and $\tau^-(\z)\in \R\ba \R^+$  be the times such that 
\bea 
\tau^-(\z)<t<\tau^+(\z) &\implies& S^{-t}(\z) \in P \\
t>\tau^+(\z) &\implies& S^{-t}(\z) \notin B,
\eea 
and 
\bea 
t<\tau^-(\z) &\implies& S^{-t}(\z) \notin F
\eea 
Then  $I_\z=(\tau^-(\z),\tau^{+}(\z))$ is a  connected open interval.

\end{defi}
\begin{rem}
One can think of the function $t(\z)$ as the time to enter the closure of $A$, either by a forward flow, or by backward flow from $\z$. $\tau^+$ and $\tau^-$ are the times within which the backward image of the set $A$ and the forward image do not intersect. The set $B$ is the backwards (in time) image of $A$ under the flow, and the set $F$ is the forward image of $A$ under the flow.
\end{rem}
 The following lemma enables 
an extension of eigenfunctions of the composition operator to a larger set:
\begin{lemma} 
\label{lem:ext} 
Let $A\subset M$ be an open, connected set in the state space such that $\phi:A\rar\C$ is a continuous function  that satisfies
\be
\dot\phi(\x)=\lambda\phi(\x),\ \x\in A
\ee
for some $\lambda\in\C$.   For $\z\in P$ define
\be \boxed{
\phi(\z)=e^{-\lambda t(\z)}\phi(S^{t(\z)}\z).
}
\label{eq:defef}
\ee
Assume there is a $\z\in P$ such that $\tau^+(\z)<\infty$, or $\tau^{-}(\z)>-\infty.$ Then,
$\phi$
is a continuous, open eigenfunction of $U^t$ on  $P$ associated with the eigenvalue $\lambda$. 
\end{lemma}
\begin{proof} 

All $I_\z$ contain $0$ and are open and connected.
 Pick a $\z$ in $ P$. For any $\tau\in \R$ such that $\tau \in I_\z$  we have $$t(S^\tau\z)=t(\z)-\tau.$$
We obtain 
\bea
\phi(S^\tau\z)&=&e^{-\lambda t(S^\tau\z)}\phi(S^{t(S^\tau\z)}S^\tau\z) \nonumber \\
&=& e^{-\lambda (t(\z)-\tau)}\phi(S^{t(\z)-\tau}S^\tau\z)
 \nonumber \\
&=& e^{\lambda\tau}e^{-\lambda t(\z)}\phi(S^{t(\z)}\z)
 \nonumber \\
&=& e^{\lambda\tau}\phi(\z),
\eea
and by assumption we know that $P$ is not invariant, so the function defined in (\ref{eq:defef}) satisfies the requirement (\ref{eq:effor}) for being  an open eigenfunction of $U^t$.

 {\color{black} The continuity of $\phi$  is proved as follows: for any given $\eps$ and $\x \in P$ we need to find a $\delta$ such that $|\phi(\x)-\phi(\y)|<\eps$ if $|\x-\y|<\delta$. Start with $\y\in P$ such that for some $t$ with $S^{-t}(\x)\in A$, it holds that  $|S^{-t}(\x)-S^{-t}(\y)|<\delta_1$ for some sufficiently small $\delta_1$, ensuring $|\phi(S^{-t})(\x)-\phi(S^{-t}(\y))|<\eps_1$, and $S^{-t}(\y)\in E\subset A$, $E$ open. Such a set of $\y$ exists since $A$ is open. Now observe that 
 \be
 |\phi(\x)-\phi(\y)|=|e^{\lambda t}(\phi(S^{-t}\x)-\phi(S^{-t}\y))|<e^{\lambda t} \eps_1,
 \ee
 making it clear that the small enough choice of $\eps_1$ such that $e^{\lambda t} \eps_1<\eps$, and selecting $|\x-\y|<\delta$ such that  $|S^{-t}(\x)-S^{-t}(\y)|<\delta_1$, which is possible by continuity of $S^t$, completes the proof.}
\end{proof}
\begin{cor}

 If $I_\z=(-\infty,\infty), \ \forall \z\in  P$, and $ P$ is a proper subset of $M$, then $\phi$ is a subdomain eigenfunction of $U^t$ on  $ P$ associated with the eigenvalue $\lambda$.
 \end{cor}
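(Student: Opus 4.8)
The plan is to verify the two defining properties of a subdomain eigenfunction directly, reusing almost all of the work already done in the proof of Lemma~\ref{lem:ext}. First I would observe that the eigenfunction identity
\be
\phi(S^\tau\z)=e^{\lambda\tau}\phi(\z),\qquad \forall \tau\in I_\z,\ \forall \z\in P,
\ee
was established in the body of the lemma \emph{without} invoking the hypothesis that some $\tau^+(\z)<\infty$ or $\tau^-(\z)>-\infty$; that extra hypothesis was only used to conclude that $P$ is non-invariant, i.e. that $\phi$ is an \emph{open} eigenfunction rather than a genuine (sub)domain eigenfunction. So the algebraic heart of the argument carries over verbatim. The new hypothesis of the corollary is precisely the complementary case: $I_\z=(-\infty,\infty)$ for every $\z\in P$, which says that $P$ is a forward- and backward-invariant set, together with $P\subsetneq M$.

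Next I would match this against the Definition of a subdomain eigenfunction: there $A$ is required to be a proper invariant subset of $M$ with $I_\x=\R$ for every $\x\in A$, and $\phi$ must satisfy $U^\tau\phi(\x)=e^{\lambda\tau}\phi(\x)$ for all $\tau\in I_\x=\R$. Taking $A:=P$, the hypothesis $I_\z=\R$ for all $\z\in P$ gives invariance of $P$ (since $S^\tau\z$ stays in $P$ for all $\tau\in\R$), the hypothesis $P\subsetneq M$ gives properness, and the displayed identity above, now read with $I_\z=\R$, gives exactly $U^\tau\phi(\z)=e^{\lambda\tau}\phi(\z)$ for all $\tau\in\R$. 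Continuity of $\phi$ on $P$ is inherited from the lemma, whose continuity argument (continuity of $t(\z)$ via the open cover $\cup_{t_1<\tau<t_2}S^{-\tau}(A)$, composed with the continuity of $\phi$ on $A$ and of the flow) did not use the finiteness hypothesis either. Assembling these observations completes the proof.

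There is essentially no obstacle here; the corollary is a bookkeeping statement that isolates the case excluded from the lemma's conclusion. The one point that warrants a sentence of care is checking that $I_\z=\R$ for all $\z\in P$ is genuinely equivalent to $P$ being invariant under the flow $S^t$ --- this follows because $I_\z$ was \emph{defined} as the maximal connected interval around $0$ on which $S^{-t}(\z)\in P$, so $I_\z=\R$ means the whole orbit of $\z$ lies in $P$. I would also remark that in this invariant case the formula $\phi(\z)=e^{-\lambda t(\z)}\phi(S^{t(\z)}\z)$ with $t(\z)=0$ reduces to $\phi(\z)=\phi(\z)$ on $A$ itself, consistent with $\phi$ being the given function there, and the extension to the rest of $P$ is governed by the same boxed formula --- so no separate construction is needed.
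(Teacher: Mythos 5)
Your proposal is correct and matches the paper's (implicit) reasoning: the corollary is stated without a separate proof precisely because, as you observe, the eigenfunction identity and continuity established in Lemma~\ref{lem:ext} do not use the finiteness hypothesis on $\tau^{\pm}$, and the case $I_{\z}=\R$ for all $\z\in P$ with $P\subsetneq M$ is exactly the definition of a subdomain eigenfunction. Your added remark that $I_{\z}=\R$ for all $\z\in P$ is equivalent to invariance of $P$ is the one point worth making explicit, and you make it correctly.
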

\begin{cor} If $I_\z=(-\infty,\infty)$, and $ P=M$  then $\phi$ is an eigenfunction of $U^t$ on  $M$ associated with the eigenvalue $\lambda$.

\end{cor}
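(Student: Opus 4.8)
The plan is to read this corollary as the degenerate, ``global'' endpoint of Lemma~\ref{lem:ext}: the extension procedure of that lemma is run, but now the enlarged domain exhausts all of $M$ and the return time $t(\z)$ is defined everywhere, so the object produced is no longer merely an open eigenfunction but a genuine eigenfunction on $M$. Accordingly I would not quote the \emph{conclusion} of Lemma~\ref{lem:ext} (whose standing hypothesis is precisely the one that fails here), but rather reuse its two internal computations and then appeal directly to the definitions.

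First I would recall the extension formula $\phi(\z)=e^{-\lambda t(\z)}\phi(S^{t(\z)}\z)$ from (\ref{eq:defef}); under the hypothesis $P=M$ this defines $\phi$ on the whole state space. Then I would reuse verbatim the computation in the proof of Lemma~\ref{lem:ext}: using $t(S^\tau\z)=t(\z)-\tau$ for $\tau\in I_\z$, one gets $\phi(S^\tau\z)=e^{\lambda\tau}\phi(\z)$. The only change from the lemma is that now $I_\z=(-\infty,\infty)$, so this identity holds for every $\tau\in\R$ and every $\z\in M$; that is exactly $U^\tau\phi=e^{\lambda\tau}\phi$ on all of $M$ for all time, i.e.\ an eigenfunction of $U^t$ on $M$ at $\lambda$. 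Continuity of $\phi$ on $M$ follows as in the lemma: $t(\z)$ is continuous (shown there via preimages of open intervals under the flow), $\phi$ is continuous on the seed set $A$, and $S^t$ depends continuously on its arguments, so the composition in (\ref{eq:defef}) is continuous. Finally, since $A=M$ and $I_\x=\R$ for every $\x$, the remark immediately following the Definition says an open eigenfunction is then an ordinary eigenfunction, which closes the argument.

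The only real friction I anticipate is bookkeeping about the standing hypotheses rather than any substantive estimate. The explicit assumption of Lemma~\ref{lem:ext} that some $\z$ has $\tau^+(\z)<\infty$ or $\tau^-(\z)>-\infty$ is exactly what is being negated here (it served in the lemma only to guarantee that $P$ is \emph{not} invariant, so that the ``open eigenfunction'' terminology applied), so one must be careful to invoke only the computations of that proof and not its conclusion. I would also include a one-line sanity check that $P=M$ together with $I_\z=\R$ for all $\z$ is self-consistent: unwinding the definitions of $B$, $F$ and $t(\z)$, this says every trajectory meets the seed set $A$ and does so at a time $t(\z)$ varying continuously in $\z$, which is automatic once those definitions are read off from the lemma.
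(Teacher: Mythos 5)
Your proposal is correct and follows essentially the route the paper intends: the corollary is stated without proof as an immediate consequence of Lemma~\ref{lem:ext}, and your argument simply re-runs the two internal computations of that lemma's proof (the identity $t(S^\tau\z)=t(\z)-\tau$ giving $\phi(S^\tau\z)=e^{\lambda\tau}\phi(\z)$, plus continuity of $t(\z)$) on the now-global domain, then invokes the remark after the definition that an open eigenfunction with full domain and $I_\z=\R$ is an ordinary one. Your care in not citing the lemma's conclusion (whose non-invariance hypothesis fails here) is exactly the right bookkeeping; the only cosmetic slip is writing ``$A=M$'' in the last step where you mean that the \emph{extended} function's domain $P$ equals $M$.
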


\begin{remark}
It is easy to build open eigenfunctions around any point that is not an equilibrium. Namely, in an open set $N(\bp)$ around any non-equilibrium point $\bp$, the flow can be straightened out in coordinates $x_1,...,x_n$ such that $\dot x_1=1,\dot x_j=0, j\neq 1$. Now consider any function $\chi_1(x_2,...,x_n)$ defined on the section $x_1=0$,
and define $\chi (x_1,...x_n)=ce^{\lambda x_1}\chi_1(x_2,...x_n)$
Then,
\be
\dot \chi (x_1,...x_n)=\lambda c e^{\lambda x_1}\chi_1(x_2,...,x_n)\dot x_1=\lambda \chi (x_1,...x_n), 
\ee
for any $\lambda\in\R,$ provided $\chi_1$ is real, and $\lambda\in\C,$ provided $\chi_1$ is complex, and the definition being valid in $N(\bp)$. 
Thus, singularities in state space, such as fixed points, and reccurrencies, such as those occuring in a flow around a limit cycle, serve to select $\lambda$'s in the 
Koopman operator spectrum. \end{remark}
\subsection{Poincar\'e Linearization and Eigenmode Expansion}
 We  consider a continuously differentiable
dynamical system defined in some open region $\cD$ of $\mathbb{R}^n$,
\begin{equation}
\dot{\x}=\F(\x)=A\x+\v(\x)
\,,\label{eq:xdot}
\end{equation}
where the origin $\x=0$ is an equilibrium contained in $\cD$.
The matrix $A=D\F_0$  is the gradient of the vector field at
$\x=0$, and $\v(\x)$ is the ``nonlinear" part of the vector field, $\v(\x)=\F(\x)-A\x$. The system (\ref{eq:xdot}) induces a
flow $S^t(\x): \cD \times \mathbb{R} \to  \cD$ and the positively invariant basin
of attraction\index{Basin
of Attraction} ${\cal B}$ of the fixed point is defined by
\begin{equation}
{\cal B}=\{\x: S^t(\x)\in \cD, \forall t \geq 0, \mbox{ and }
\lim_{t\to \infty}S^t(\x)=0\}. \label{eq:dba}
\end{equation}
The Poincar\'e linearization, valid for analytic vector fields with no resonances \index{Resonance of eigenvalues} amongst eigenvalues, in the neighborhood of a fixed point reads
\begin{defi} Let $\lambda_1, \lambda_2,...,\lambda_n$ be the eigenvalues of $D\F|_0$. We say that $D\F|_0$ is resonant if there are nonnegative integers
$m_1, m_2,...,m_n$ and $s \in \{1, 2,...,n\}$ such that
\be\sum_{k=1}^n
m_k \geq 2
\ee
and
\be\lambda_s = \sum_{k=1}^n
m_k\lambda_k.\ee
If $D\F|_0$ is not resonant, we say that it is nonresonant.
\end{defi}
For analytic vector fields, according to the normal form theory, nonresonance, together with the condition that  all eigenvalues are in the left half plane (stable case) or right half plane (unstable case),  permits us to make  changes of variables that remove
nonlinear terms up to any specified order in the right-hand side of the differential
equation \cite{GuckenheimerandHolmes:2002}. Alternatively, the Siegel condition\index{Siegel condition} is required:
\begin{defi}
We say that $(\lambda_1, \lambda_2,...,\lambda_n) \in \C^n$
 satisfy the Siegel condition if there
are constants $C > 0$ and $\nu > 1$ such that
\be
|\lambda_s - \sum_{k=1}^nm_k\lambda_k|\geq\frac{C}{  (\sum_{k=1}^n
m_k)^\nu
},
\label{eq:sieg}
\ee
 for all nonnegative integers $m_1, m_2,...,m_n$ satisfying
\be
\sum_{k=1}^n
m_k \geq 2.
\ee
\end{defi}
This leads to the Poincar\'e Linearization Theorem:

\begin{ther} [\bf Poincar\'e Linearization Theorem] 
\label{ther:PLT}
Suppose that $\F$ is analytic, $\F(0)=0$, and that all the eigenvalues of  $D\F|_0$ are nonresonant and either all lie in the
open left half-plane, all lie in the open right half-plane, or satisfy the Siegel
condition. Then there is an analytic change of variables $\y=\h(\x)$  such that 
$\dot \y=A\y$ in a small neighborhood $N(0)$ of the fixed point  $0$.
\label{ther:Poinc}
\end{ther}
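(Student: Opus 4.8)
The plan is to establish the existence of a formal power-series change of coordinates that linearizes the vector field, and then to invoke the Siegel/Poincaré majorant argument to show that this formal series converges in a neighborhood of the fixed point. I would organize the proof into two stages: first the formal conjugacy, then the convergence estimates.

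\textbf{Stage 1: Formal linearization.} The goal is to construct $\h(\x) = \x + \sum_{|\m|\ge 2} \h_\m \x^\m$ (a formal power series with no constant or linear terms) such that the conjugacy relation $D\h(\x)\,\F(\x) = A\,\h(\x)$ holds order by order. Writing $\F(\x) = A\x + \sum_{|\m|\ge 2}\f_\m\x^\m$, I would expand the conjugacy equation and collect terms of each total degree $k \ge 2$. At degree $k$ this yields, for each monomial $\x^\m$ with $|\m| = k$, an equation of the form
\be
\bigl(\langle\m,\bm\lambda\rangle - \lambda_s\bigr)\,(\h_\m)_s = (\text{known terms built from }\f_{\m'},\h_{\m'}\text{ with }|\m'|<k),
\ee
where $\bm\lambda = (\lambda_1,\dots,\lambda_n)$ and $(\cdot)_s$ denotes the $s$-th component. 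The coefficient $\langle\m,\bm\lambda\rangle - \lambda_s$ is exactly the resonance denominator; by the nonresonance hypothesis it is nonzero for every $|\m|\ge 2$, so each $\h_\m$ is uniquely determined. This recursively defines the formal series $\h$.

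\textbf{Stage 2: Convergence.} This is where I expect the main obstacle to lie. The formal coefficients $\h_\m$ are divided by the small denominators $\langle\m,\bm\lambda\rangle - \lambda_s$, which could in principle accumulate and destroy convergence. In the one-sided-spectrum case (all eigenvalues in the open left or open right half-plane) one shows directly that $|\langle\m,\bm\lambda\rangle - \lambda_s|$ is bounded below by a constant times $|\m|$ (the real parts all have the same sign, so there is no cancellation), which is more than enough. In the mixed case the Siegel condition \eqref{eq:sieg} provides the lower bound $C/(\sum m_k)^\nu$, which decays only polynomially. In either case I would set up a majorant series: bound the $\f_\m$ using analyticity of $\F$ (Cauchy estimates give $|\f_\m|\le M\rho^{-|\m|}$ on a polydisc of radius $\rho$), then prove by induction on $k$ that $\|\h_\m\| \le C_0\,\sigma^{|\m|}/|\m|^{2\nu}$ (or a similar Siegel-type majorant) for suitable constants $C_0,\sigma$. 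The induction step compares the recursion for $\h_\m$ against the recursion satisfied by the coefficients of an explicitly summable majorant function, using the Siegel/one-sided bound to control the division step. Summing the resulting geometric-type series shows $\h$ converges on a polydisc $N(0)$, hence is analytic there; since $D\h(0) = I$, the inverse function theorem makes $\h$ an analytic diffeomorphism onto its image after possibly shrinking $N(0)$.

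\textbf{Conclusion.} Once $\h$ is an analytic change of variables with $D\h(\x)\F(\x) = A\h(\x)$, setting $\y = \h(\x)$ gives $\dot\y = D\h(\x)\dot\x = D\h(\x)\F(\x) = A\h(\x) = A\y$ in $N(0)$, which is the claim. The hardest part, and the part I would spend the most care on, is the majorant estimate in Stage 2 under the Siegel condition — balancing the polynomial loss $(\sum m_k)^\nu$ at each of the $\sim|\m|$ recursion levels against the geometric gain from analyticity of $\F$ is the classical delicate point; the one-sided-spectrum subcase is comparatively soft and could be dispatched quickly as a warm-up. (Alternatively, one could simply cite the standard references — e.g.\ the normal-form discussion in \cite{GuckenheimerandHolmes:2002} — for the convergence, since the paper already attributes the theorem there.)
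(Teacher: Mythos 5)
The paper does not actually prove this theorem: it is quoted as a classical result of normal-form theory and referred to the literature (\cite{GuckenheimerandHolmes:2002,Perko:2013,Arnold:2012}), so there is no in-paper argument to compare against. Your outline follows the standard classical route, and Stage 1 is correct as written: the conjugacy identity $D\h(\x)\F(\x)=A\h(\x)$, expanded degree by degree, reduces to a homological equation whose eigenvalues in the monomial basis are exactly the denominators $\langle\m,\bm{\lambda}\rangle-\lambda_s$, and nonresonance makes each order uniquely solvable (this also survives a nondiagonalizable $A$, since the spectrum of the homological operator is still the set of these differences). The Poincar\'e-domain half of Stage 2 is likewise sound: with all $\Re\lambda_j\le -a<0$ (or all $\ge a>0$) one has $|\Re(\langle\m,\bm{\lambda}\rangle-\lambda_s)|\ge a|\m|-\max_j|\Re\lambda_j|$, so only finitely many resonances are even possible and the denominators eventually grow linearly in $|\m|$; a routine majorant then converges.

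The genuine gap is in the Siegel half of Stage 2. The induction hypothesis you propose, $\|\h_\m\|\le C_0\,\sigma^{|\m|}/|\m|^{2\nu}$, does not close as stated: each coefficient $\h_\m$ is assembled from a tree of lower-order coefficients, and every node of that tree contributes its own small-denominator factor, so a naive estimate accumulates a product of the form $\prod_i|\m_i|^{\nu}$ over chains of length comparable to $|\m|$ --- factorial-type growth that overwhelms the geometric gain from the Cauchy estimates on $\f$. The actual content of Siegel's theorem is the extra combinatorial lemma showing that denominators close to zero cannot occur too often along any such chain, so that the product of reciprocal denominators is bounded by $c^{|\m|}$; without that lemma (or Brjuno's refinement) the majorant argument in the mixed-spectrum case does not go through. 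You correctly flag this as the delicate point, and your fallback of citing the standard references is exactly what the paper itself does, but as a self-contained proof the Siegel case is missing its essential ingredient.
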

Poincar\'e linearization is used in normal form theory \cite{Perko:2013,GuckenheimerandHolmes:2002}, and the issue of resonances is the well-known reason that
even analytic vector fields can not always be linearized using an analytic change of variables.

The Koopman group of operators $U^t$ associated with (\ref{eq:xdot}) evolves  a (vector-valued) \emph{observable} $\ve{f}:\mathbb{R}^n \mapsto \mathbb{C}^m$ along the trajectories of the system, and is  defined via the composition 
$$U^t \ve{f}(\ve{x})=\ve{f} \circ S^t(\x).$$
Let $ \phi_j$ be the eigenfunctions of the Koopman group associated with the Poincar\'e linearization matrix $A$. Then $s_j(\x)=\phi_j(\h(\x))$ are the (analytic)  eigenfunctions of the Koopman operator associated with the nonlinear system. Clearly, $s_j(0)=0$. We will utilize 
\be
\y=\s(\x)=(s_1(\x),...,s_n(\x))
\ee as a change of coordinates.

Like in the case of linear systems treated in  section \ref{sect:lin} we'd like to again get an expansion of observable $f$ into eigenfunctions of $U^t$.  If the observable $f$ is analytic, the Taylor expansion of $f(\ve{s}^{-1}(\ve{y}))$  around the origin\footnote{Note that the change of variables in the Poincar\'e-Siegel linearization is an analytic diffeomorphism \cite{Arnold:2012}, and thus $\s^{-1}$ is analytic.} yields
\bea
f(\ve{s}^{-1}(\ve{y}))&=&f(0)+D f^T|_0 \, D{\ve{s}^{-1}} |_0\ve{y} \nonumber \\
&+& \frac{1}{2} \ve{y}^T\, (D{\ve{s}^{-1}}|_0)^T \ve{H} D{\ve{s}^{-1}}|_0 \, \ve{y} + \frac{1}{2} \ve{y}^T \sum_{k=1}^n \left.\frac{\partial f}{\partial x_k}\right|_{0} \ve{H}_{s^{-1}_k} \ve{y}\nonumber \\
& +& \textrm{h.o.t.}\,,
\label{Taylor_expansion}
\eea
where $\ve{H}$ is the Hessian matrix of $f$ at $0$ 
 $$\ve{H}_{ij}=\frac{\partial^2 f}{(\partial x_i \partial x_j)}(0),$$ and $\ve{H}_{s^{-1}_k}$ is the Hessian matrix of $s_k^{-1}$ at the origin 
 $$\ve{H}_{s^{-1}_k,ij}=\frac{\partial^2 s^{-1}_k}{\partial y_i \partial y_j}(0).$$ 
 Using the relationship $\ve{y}=(s_1(\ve{x}),\dots,s_n(\ve{x}))$, we can turn the expansion \eqref{Taylor_expansion} into an expansion of $\f$ onto the products of the eigenfunctions $s_j$. For a vector-valued observable $\ve{f}$, we obtain
\begin{equation}
\label{equa_expansion_f}
\ve{f}(\ve{x})=\sum_{\{k_1,\dots,k_n\}\in\mathbb{N}^n} \ve{\overline{v}}_{k_1\cdots k_n} \, s_1^{k_1}(\ve{x}) \cdots s_n^{k_n}(\ve{x})
\end{equation}
with the  Koopman modes\index{Koopman mode} $\ve{\overline{v}}_{k_1\cdots k_n} $ up to linear terms reading
\begin{equation}
\overline{\ve{v}}_{0,...,0}=\ve{f}(0)   
\ee
\be
\displaystyle
\overline{\ve{v}}_{0,..,1_j,.,0}=\sum_{k=1}^n \left.\frac{\partial \ve{f}}{\partial x_k}\right|_{0} \left.\frac{\partial s_k^{-1}}{\partial y_j}\right|_{0},
\ee
where notation $1_j$ means that there is $1$ at the $j$th place in the sequence. Note that the $(0,...,0)$ Koopman mode is just the time-average of the evolution of $\f$ \cite{Mezic:2005}. We also have 
\be
\displaystyle
\overline{\ve{v}}_{0,..1_i,...,1_j,.,0}=\sum_{k=1}^n \sum_{l=1}^n \left.\frac{\partial^2 \ve{f}}{\partial x_k \partial x_l}\right|_{0} \left.\frac{\partial s_k^{-1}}{\partial y_i}\right|_{0} \left.\frac{\partial s_l^{-1}}{\partial y_j}\right|_{0} + \sum_{k=1}^n \left.\frac{\partial \ve{f}}{\partial x_k}\right|_{0} \left.\frac{\partial^2 s_k^{-1}}{\partial y_i \partial y_j}\right|_{0} \ \  \ee
\be
\displaystyle
\overline{\ve{v}}_{0,..,2_i,...,0}=\frac{1}{2}\sum_{k=1}^n \sum_{l=1}^n \left.\frac{\partial^2 \ve{f}}{\partial x_k \partial x_l}\right|_{0} \left.\frac{\partial s_k^{-1}}{\partial y_i}\right|_{0} \left.\frac{\partial s_l^{-1}}{\partial y_i}\right|_{0} + \frac{1}{2} \sum_{k=1}^n \left.\frac{\partial \ve{f}}{\partial x_k}\right|_{0} \left.\frac{\partial^2 s_k^{-1}}{\partial y_i^2} \right|_{0} 
\end{equation}
The other (higher-order) Koopman modes can be derived similarly from \eqref{Taylor_expansion}. 

For the observable $\f(\ve{x})=\ve{x}$, the Koopman modes are given by
\begin{equation*}
\ve{v}_{k_1\cdots k_n} = \frac{1}{k_1 ! \dots k_n !} \left.\frac{\partial^{k_1\cdots k_n} \ve{s}^{-1}}{\partial^{k_1} y_1 \cdots \partial^{k_n} y_n} \right |_{0} \,.
\end{equation*}
In particular, the eigenvectors of the Jacobian matrix $D\F|_0$ (i.e. $\ve{v}_j=\ve{v}_{k_1\cdots k_n}$, with $k_j=1$, $k_i=0 \, \  \forall i \neq j$) correspond to
\begin{equation*}
\ve{v}_j=\left. \frac{\partial \ve{s}^{-1}}{\partial y_j}\right|_{0}
\end{equation*}
and one has $D{\ve{s}^{-1}}=V$, where the columns of $V$ are the eigenvectors $\ve{v}_j$. 
In addition, the differentiation of $\ve{y}=\ve{s}(\ve{s}^{-1}(\ve{y}))$ at the origin leads to
\begin{equation*}
\delta_{ij} =  D s_i(0)\left.\frac{\partial \ve{s}^{-1}}{\partial y_j}\right|_{0}  =  D s_i(0)  \v_j.
\end{equation*}
Therefore, the gradient $D s_i(0)$ is the left eigenvector $\w_i$ of $D\F|_0$ (associated with the eigenvalue $\lambda_i$) and one has
\begin{equation}
\label{approx_eigen}
s_i(\ve{x}) = D s_i^c|_0\ve{x}+ o(\|\ve{x}\|)= \langle \ve{x},\w_i \rangle +o(\|\ve{x}\|) \,,
\end{equation}
which implies that, for $\|\ve{x}\|\ll 1$, the eigenfunction $s_i(\ve{x})$ is well approximated by the eigenfunction of the linearized system.

From (\ref{equa_expansion_f})  the spectral decomposition for evolution of a vector-valued  analytic observable $\f$ is given in $N(0)$, for $t\in\R^+$ by
\begin{equation}
\label{evol_observable}
U^t \ve{f}(\ve{x}) = \sum_{\{k_1,\dots,k_n\}\in\mathbb{N}^n} s_1^{k_1}(\ve{x}) \cdots s_n^{k_n}(\ve{x}) \, \ve{\overline{v}}_{k_1\cdots k_n} \, e^{(k_1 \lambda_1+\cdots+k_n \lambda_n) t}\,
\end{equation}
and the vectors $\ve{\overline{v}}_{k_1\cdots k_n}$ are the Koopman modes, \index{Koopman mode} i.e. the projections of the observable $\ve{f}$ onto $s_1^{k_1}(\ve{x}) \cdots s_n^{k_n}(\ve{x})$. For the particular observable $\ve{f}(\ve{x})=\ve{x}$, \eqref{evol_observable} corresponds to the expression of the flow and can be rewritten as
\begin{equation}
\label{sol_nonlin}
 U^t \ve{x} =\sum_{j=1}^n s_j(\ve{x}) \ve{v}_j \, e^{\lambda_j t} + \sum_{\substack{\{k_1,\dots,k_n\}\in\mathbb{N}_0^n\\ k_1+\cdots+k_n>1}} s_1^{k_1}(\ve{x}) \cdots s_n^{k_n}(\ve{x}) \,\ve{v}_{k_1\cdots k_n} \, e^{(k_1 \lambda_1+\cdots+k_n \lambda_n) t}\,.
\end{equation}
The first part of the expansion is similar to the linear flow (\ref{eq:x_expansion}). We can use these results to show that the eigenvalues $\lambda_j$ and the Koopman modes $\ve{v}_j$ are the eigenvalues and eigenvectors of $D\F|_0$. 
The vectors $\ve{\overline{v}}_{k_1\cdots k_n}$ are the so-called Koopman modes \cite{Rowley}, i.e. the projections of the observable $\ve{f}$ onto $s_1^{k_1}(\ve{x}) \cdots s_n^{k_n}(\ve{x})$. 

For an equilibrium at $\x^*$ instead of at $0$, and for the particular observable $\ve{f}(\ve{x})=\ve{x}$, we get \bea
\label{sol_nonlin}
& &U^t \ve{x} =\ve{x}^* + \sum_{j=1}^n s_j(\ve{x}) \ve{v}_j \, e^{\lambda_j t} 
+\!\!\!\!\!\!\sum_{\substack{\{k_1,\dots,k_n\}\in\mathbb{N}_0^n\\ k_1+\cdots+k_n>1}} \!\!\!\!\!\!\!\!\!\!\!\!s_1^{k_1}(\ve{x}) \cdots s_n^{k_n}(\ve{x})
 \cdot \ve{v}_{k_1\cdots k_n}e^{(k_1 \lambda_1+\cdots+k_n \lambda_n) t} 
\eea
where $\x^*$ is the time average of the state. This will be the term that also comes out in the case of the more general attractors treated below.

{\color{black} Provided the Taylor expansion (\ref{Taylor_expansion}) has validity in all of ${\cal B}(0)$}, we can utilize Lemma \ref{lem:ext} to extend the validity of eigenfunctions from $N(0)$ to the whole basin of attraction of $0$:
\begin{proposition} Let ${\cal B}(0)$ be the basin of attraction of $0$. The spectral expansion of a vector-valued  analytic observable $\f$ in ${\cal B}(0)$ is given by
\begin{equation}
\label{evol_observable1}
U^t \ve{f}(\ve{x}) = \sum_{\{k_1,\dots,k_n\}\in\mathbb{N}^n} \tilde s_1^{k_1}(\ve{x}) \cdots \tilde s_n^{k_n}(\ve{x}) \, \ve{\overline{v}}_{k_1\cdots k_n} \, e^{(k_1 \lambda_1+\cdots+k_n \lambda_n) t}\,
\end{equation}
where $\tilde s_j$ are (possibly subdomain, if  ${\cal B}(0)\neq M$) eigenfunctions of $U^t$.
\end{proposition}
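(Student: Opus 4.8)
The plan is to take the local spectral expansion \eqref{evol_observable}, which is valid on the Poincar\'e linearization neighborhood $N(0)$, and propagate it to all of ${\cal B}(0)$ by applying the extension machinery of Lemma \ref{lem:ext} to each eigenfunction $s_j$. First I would note that $N(0)$ is an open, connected subset of the state space on which each $s_j$ is continuous and satisfies $\dot s_j = \lambda_j s_j$, so $A = N(0)$ meets the hypotheses of Lemma \ref{lem:ext}. Since $0$ is globally attracting from ${\cal B}(0)$, every $\z\in{\cal B}(0)$ satisfies $S^t(\z)\in N(0)$ for all sufficiently large $t$, so the entry time $t(\z)$ is well defined and finite on ${\cal B}(0)$, and the set $P = B\cup F$ of Lemma \ref{lem:ext} contains ${\cal B}(0)$. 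Moreover ${\cal B}(0)$ is itself forward-invariant, so for $\z\in{\cal B}(0)$ the interval $I_\z$ contains $(\tau^-(\z),\infty)$; the boundedness condition ($\tau^+(\z)<\infty$ or $\tau^-(\z)>-\infty$ for some $\z$) is met because ${\cal B}(0)\neq M$ forces the backward orbit of some point to leave $P$ in finite time (or else $P$ would be all of $M$). Lemma \ref{lem:ext} then yields continuous open (or subdomain, when ${\cal B}(0)=M$ fails) eigenfunctions $\tilde s_j$ on ${\cal B}(0)$, defined by $\tilde s_j(\z) = e^{-\lambda_j t(\z)}s_j(S^{t(\z)}\z)$, which restrict to $s_j$ on $N(0)$.

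Next I would substitute the $\tilde s_j$ into the expansion. By the Koopman property, $U^t$ acts on each $\tilde s_j$ by multiplication by $e^{\lambda_j t}$, hence on the monomial $\tilde s_1^{k_1}\cdots\tilde s_n^{k_n}$ by $e^{(k_1\lambda_1+\cdots+k_n\lambda_n)t}$. The Koopman modes $\ve{\overline{v}}_{k_1\cdots k_n}$ are fixed vectors determined by the Taylor data of $\f\circ\s^{-1}$ at the origin, so they are unchanged. To see that \eqref{evol_observable1} actually holds on all of ${\cal B}(0)$ and not just on $N(0)$, I would argue by invariance of the identity along trajectories: given $\z\in{\cal B}(0)$, for $t$ large enough $S^t(\z)\in N(0)$ where \eqref{evol_observable} holds; pulling back by $U^{-t(\z)}$ (equivalently, using $\tilde s_j(\z) = e^{-\lambda_j t(\z)}s_j(S^{t(\z)}\z)$ together with $U^t\ve f(\z) = U^{t+t(\z)}\ve f \circ S^{-t(\z)}$ applied at the point $S^{t(\z)}\z\in N(0)$) reproduces the same series with $s_j$ replaced by $\tilde s_j$. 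Since both sides are continuous on ${\cal B}(0)$ and agree on the open set $N(0)$ and, by this flow-propagation, on the saturation of $N(0)$ under the flow — which is all of ${\cal B}(0)$ — the expansion is valid throughout.

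The main obstacle I anticipate is convergence of the infinite series away from $N(0)$: the Taylor expansion of $\f\circ\s^{-1}$ converges on a neighborhood of the origin in $\ve y$-coordinates, and one must check that the image $\s({\cal B}(0))$ — equivalently the range of the extended $\tilde\s$ — stays within the polydisc of convergence, or else argue convergence is preserved under the flow because $U^t$ multiplies the $k$-th term by $e^{(k_1\lambda_1+\cdots+k_n\lambda_n)t}$ with $\Re\lambda_j<0$, so propagating backward (increasing the monomials' weights) is the delicate direction. I would handle this by observing that for any $\z\in{\cal B}(0)$ there is a finite $t(\z)$ with $S^{t(\z)}\z\in N(0)$, so $\tilde s_j(\z)$ is a finite rescaling of a convergent-series argument; the series at $\z$ is obtained term-by-term from the convergent series at $S^{t(\z)}\z$ by multiplying the $(k_1,\dots,k_n)$ term by $e^{-(k_1\lambda_1+\cdots+k_n\lambda_n)t(\z)}$, and since $\Re(k_1\lambda_1+\cdots+k_n\lambda_n)\to-\infty$ as $\sum k_i\to\infty$ (all eigenvalues in the open left half-plane), these factors decay, so convergence is preserved. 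A secondary, more technical point is verifying the measurability/regularity class of $\tilde s_j$ and that the decomposition of $P$ into $B$ and $F$ genuinely covers ${\cal B}(0)$ without the self-intersection that would invalidate the single-valued definition of $t(\z)$ — but forward-invariance of ${\cal B}(0)$ together with global attraction to $0$ rules this out, since the entry time into $\cl N(0)$ is unambiguous for points flowing toward the fixed point.
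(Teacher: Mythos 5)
Your overall route is the same as the paper's: extend each local eigenfunction $s_j$ from $N(0)$ to $\tilde s_j$ on ${\cal B}(0)={\textstyle\bigcup_{t\in\R^+}}S^{-t}N(0)$ via the pull-back of Lemma \ref{lem:ext}, then propagate the local expansion (\ref{evol_observable}) along trajectories, observing that the factor $e^{-(k_1\lambda_1+\cdots+k_n\lambda_n)t(\z)}$ acquired by the $(k_1,\dots,k_n)$ term is exactly absorbed by replacing $s_j^{k_j}(S^{t(\z)}\z)$ with $\tilde s_j^{k_j}(\z)$. The paper packages that factor as a statement about the Koopman modes of $\f\circ S^{-t(\z)}$ (justified by citing the Generalized Laplace Analysis theorem), whereas you do the same bookkeeping directly on the eigenfunctions; these are equivalent.

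There is, however, a concrete error in your convergence discussion. For $\z\in{\cal B}(0)\setminus N(0)$ the entry time $t(\z)$ is positive, and since all $\Re\lambda_j<0$ the exponents satisfy $\Re(k_1\lambda_1+\cdots+k_n\lambda_n)\to-\infty$, so the moduli $\bigl|e^{-(k_1\lambda_1+\cdots+k_n\lambda_n)t(\z)}\bigr|=e^{-\Re(k_1\lambda_1+\cdots+k_n\lambda_n)\,t(\z)}$ \emph{grow} without bound as $\sum_i k_i\to\infty$; they do not decay. This is consistent with your own earlier (correct) remark that pulling the series backward in time is the delicate direction, but it contradicts the resolution you then offer. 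Convergence of the extended series at $\z$ is genuinely the question of whether the point $\bigl(\tilde s_1(\z),\dots,\tilde s_n(\z)\bigr)$, which lies \emph{farther} from the origin than $\s(S^{t(\z)}\z)$, stays inside the polydisc of convergence of the Taylor series of $\f\circ\s^{-1}$ — i.e.\ it depends on the analytic continuation of $\f\circ\s^{-1}$ over $\tilde\s({\cal B}(0))$, not on any decay of the propagation factors. The paper itself does not address this point, so your proof is not weaker than the paper's in what it actually establishes, but the decay claim as stated is false and should be removed or replaced by the analytic-continuation hypothesis.
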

\begin{proof}
We obtain $\tilde s_j$ from $s_j$ defined on the open set $N(0)$ by pulling back by the flow, as in Lemma \ref{lem:ext}. Namely, ${\cal B}(0)=\cup_{t\in\R^+}S^{-t}N(0)$. Since $\tilde s_j=s_j$ on $N(0)$, using equation (\ref{evol_observable}) that is valid in $N(0)$, we see that the statement (\ref{evol_observable1}) is true for $t\in\R^+$. Let $\z\notin N(0)$. We have \bea
\label{evol_observable2}
U^\tau \ve{f}(\ve{z}) &=&U^\tau \ve{f}(S^{-t(\z)}\x) 
= \sum_{  \{k_1,\dots,k_n\}\in\mathbb{N}^n }
= s_1^{k_1}(\ve{x}) \cdots   s_n^{k_n}(\ve{x}) 
  \,  \ve{\tilde{v}}_{k_1\cdots k_n} \, 
  e^{(k_1 \lambda_1+\cdots+k_n \lambda_n) \tau},
\eea
where $ \ve{\tilde{v}}_{k_1\cdots k_n}$
 are the Koopman modes associated with $\f\circ S^{-t(\z)}$. Now we need the following lemma:
\begin{lemma} The Koopman modes of $\f$ and $\f\circ S^{-t(\z)}$ are related by
\be 
\ve{\tilde{v}}_{k_1\cdots k_n}=e^{-(k_1 \lambda_1+\cdots+k_n \lambda_n) t(\z)}\ \ve{\overline{v}}_{k_1\cdots k_n}.
\label{eq:KMext}
\ee
\end{lemma}
\begin{proof} This is a simple consequence of the Generalized Laplace Analysis theorem \cite{MohrandMezic:2014}.\end{proof}
\noindent Now we finish the proof of the proposition: Since $$\tilde s_j(\z)=e^{-\lambda_j t(\z)}s_j(S^{t(\z)}\z),$$ combining (\ref{eq:KMext}) and (\ref{evol_observable2}) we obtain (\ref{evol_observable1}).
\end{proof}

We recognize here that the operator formalism we are developing leads to a striking realization: the only difference in the representation of the dynamics of linear and nonlinear systems with equilibria on state space is that  in the linear case  the expansion is finite, while in the nonlinear case it is infinite. In linear systems, we are expanding the state $\x(\bp)$ (which itself is a linear function of a point $\bp$ on the Euclidean state-space), in terms of eigenfunctions of the Koopman operator that are also linear in state $\x(\bp)$. In the nonlinear case, this changes - the Koopman eigenfunctions are in general {\it nonlinear} as functions of state $\x(\bp)$ and the expansion is infinite. It is also useful to observe that the expansion is {\it asymptotic} in nature - namely, there are terms that describe evolution close to an equilibrium point, and terms that have higher expansion or decay rates. 
\subsection{Hartman-Grobman Type Theorems and Stable and Unstable Manifolds}

The benefit of non-resonance conditions or Siegel condition is that the conjugacy $\h$ is analytic. Thus, if we want to expand an analytic vector of observables $\f(\x)$, in terms that reflect dynamics of $\dot \x=\F(\x)$ that has a globally stable equilibrium,  that expansion is readily available by Taylor expanding $\f\circ \h^{-1}(\y)$. 

Theorems of Hartman and Grobman \index{Hartman Grobman Theorem} require much less smoothness, and do not have resonance conditions associated with them. We state Hartman's version, modified slightly to fit into our narrative of Koopman operator theory:

\begin{theorem}[Hartman]
\label{ther:hart}
Let 
\be 
\dot \x=\F(\x), \ \F\in C^2(\cD)\ ,  \F(0)=0,
\label{eq:nlsys}
\ee and $U^t$ the associated Koopman family of operators.
If all of the eigenvalues of the matrix $A=D\F|_0$
have non-zero real part, then there exists a $C^1$-diffeomorphism
 $\tilde \h$ of a neighborhood ${\cal N} $ of $\x=0$ onto an open set ${\cal V}$ containing the origin
such that for each $\x \in {\cal N}$ there is an open interval $I(\x)\subset \mathbb{R}$
containing zero such that for all $\x \in {\cal N}$ and $t \in I(\x)$
\be
U^t \tilde \h(\x)=\tilde \h\circ S^t(\x)=e^{At}\tilde \h(\x).
\label{Hart}
\ee
\end{theorem}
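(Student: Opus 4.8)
The plan is to prove this as a mild restatement of the Hartman--Grobman theorem, along the standard route: reduce the flow conjugacy to a conjugacy of the time-one map, solve a cohomological equation by a contraction argument, and then upgrade the regularity of the resulting conjugacy from $C^0$ to $C^1$ following Hartman. First I would reduce to the discrete problem. Suppose $\g$ is a local $C^1$-diffeomorphism near $0$ with $\g\circ S^1=e^{A}\g$. Define
\be
\tilde\h(\x)=\int_0^1 e^{-At}\,\g\big(S^t(\x)\big)\,dt .
\ee
A change of variables $s=t+\tau$, together with $\g\circ S^1=e^A\g$ and $e^{-A}e^A=I$, shows $\tilde\h\circ S^\tau=e^{A\tau}\tilde\h$ for $\tau$ in the range for which the trajectories involved remain in the neighborhood on which the identities hold. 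Since $\F\in C^2$, the flow $S^t$ is $C^1$ in $\x$ uniformly for $t\in[0,1]$, so $\tilde\h$ is $C^1$, and $D\tilde\h(0)=\big(\int_0^1 e^{-At}e^{At}\,dt\big)D\g(0)=D\g(0)$; hence $\tilde\h$ is a local $C^1$-diffeomorphism whenever $\g$ is.

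Next I would construct $\g$. Put $f=S^1$, a $C^2$ local diffeomorphism with $Df(0)=e^{A}=:L$; the hypothesis that all eigenvalues of $A$ have non-zero real part is precisely the statement that $L$ has no eigenvalue on the unit circle. Using a smooth cutoff supported near $0$, I replace $f$ by a globally defined $C^2$ map, still written $f=L+N$, with $N(0)=0$, $DN(0)=0$, and $N$ bounded with globally small Lipschitz constant. Seeking $\g=\mathrm{id}+u$ with $u$ bounded and continuous, the conjugacy equation $\g\circ f=L\g$ becomes the linear functional (``cohomological'') equation $L\,u(\x)-u(f(\x))=N(\x)$. Splitting $\R^n=E^s\oplus E^u$ adapted to $L$, with $\|L_s\|<1$ and $\|L_u^{-1}\|<1$ for $L_s=L|_{E^s}$ and $L_u=L|_{E^u}$, this equation decouples: its stable component is the fixed-point problem $u_s=L_s\,(u_s\circ f^{-1})-N_s\circ f^{-1}$ and its unstable component is $u_u=L_u^{-1}(u_u\circ f)+L_u^{-1}N_u$, both contractions on the Banach space of bounded continuous maps, so there is a unique bounded continuous $u$. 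That $\g=\mathrm{id}+u$ is a homeomorphism follows by also solving the reversed equation $\bar\g\circ L=f\circ\bar\g$ with $\bar\g=\mathrm{id}+\bar u$ and noting that $\g\circ\bar\g$ and $\bar\g\circ\g$ conjugate a hyperbolic linear map (resp.\ $f$) to itself while differing from the identity by a bounded map, hence equal the identity by the uniqueness just used.

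The substantive step is upgrading $\g$ to $C^1$. Differentiating the functional equation gives $L\,Du(\x)=Du(f(\x))\,Df(\x)+DN(\x)$, and one runs a fixed-point argument for the candidate derivative in the space of bounded continuous matrix-valued maps; the relevant contraction constants are now governed by products of the type $\|L_s\|\,\|Df\|$ and $\|L_u^{-1}\|\,\|Df^{-1}\|$, made $<1$ by taking the cutoff small and using $\F\in C^2$ to control $DN$ (in dimension $n=2$ nothing more is needed; in general this is where Hartman's sharper spectral estimates enter). One then checks that the fixed point is genuinely $D\g$, so $\g\in C^1$, and since $D\g(0)=I$ it is a $C^1$-diffeomorphism near $0$. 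Undoing the cutoff, the identities hold for the original flow on a smaller neighborhood ${\cal N}$ and for the times $t$ forming the open interval $I(\x)$ along which $S^t(\x)$ stays in ${\cal N}$. On that range, $\tilde\h\circ S^t(\x)=e^{At}\tilde\h(\x)$ is the component form of $U^t\tilde\h=e^{At}\tilde\h$, so by the conjugacy relations (\ref{conj}) and (\ref{e}) --- now in the local, open-eigenfunction sense of Lemma \ref{lem:ext} --- the components of $\tilde\h$ assemble into (generalized) open eigenfunctions $\x\mapsto\langle\tilde\h(\x),\w_j\rangle$ of $U^t$ at the eigenvalues $\lambda_j$ of $A$, the $\w_j$ being the eigenvectors of $A^*$.

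The main obstacle is exactly this $C^1$ upgrade. The $C^0$ conjugacy is a soft contraction argument, but the $C^1$ statement is genuinely false in general dimension without a spectral-gap or non-resonance hypothesis (Hartman's own counterexample), so the theorem as phrased is essentially a planar result unless one imports Hartman's finer derivative estimates; arranging the contraction constants for $Du$ to lie below $1$ while keeping the cutoff admissible is where essentially all the difficulty resides.
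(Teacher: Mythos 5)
The paper does not actually prove this statement: it is quoted as Hartman's classical theorem (``We state Hartman's version, modified slightly\dots''), so there is no internal proof to compare against and your argument has to stand on its own. Two of your three steps do stand. The averaging device $\tilde\h(\x)=\int_0^1 e^{-At}\g(S^t(\x))\,dt$ correctly converts a conjugacy of the time-one map into a conjugacy of the flow (the computation $\tilde\h\circ S^\tau=e^{A\tau}\tilde\h$ via the substitution and $\g\circ S^1=e^A\g$ is right, though note $D\tilde\h(0)=\int_0^1 e^{-At}\,D\g(0)\,e^{At}\,dt$, which equals $D\g(0)$ only because $D\g(0)=I$), and the $C^0$ conjugacy via the cohomological equation $L\,u-u\circ f=N$ split along $E^s\oplus E^u$ is the standard, correct contraction argument.

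The genuine gap is exactly where you located it, but it is not a removable technicality: the contraction constants you propose for the derivative equation, $\|L_s\|\,\|Df\|$ and $\|L_u^{-1}\|\,\|Df^{-1}\|$, are bounded below near the origin by $\|L_s\|\,\|e^A\|$ and $\|L_u^{-1}\|\,\|e^{-A}\|$ respectively, and no choice of cutoff makes these less than $1$; they are $<1$ only under a spectral band condition of Belitskii/Sternberg type. This is precisely why the $C^1$ statement in the stated generality is false: Hartman's own example in $\R^3$ (eigenvalues $a>1>c$ with the order-two resonance $ac=a\cdot c$) is an analytic hyperbolic saddle admitting no locally Lipschitz, hence no $C^1$, linearization. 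So your proof as written establishes the $C^0$ Hartman--Grobman theorem plus the flow reduction, and the $C^1$ conclusion only in the two regimes where Hartman's derivative estimates actually close ($n=2$ saddles with $\F\in C^2$, or sinks/sources in any dimension with $\F\in C^{1,1}$). You should either restrict the statement accordingly or import those estimates explicitly; it is worth noting that everything the paper subsequently builds on this theorem --- the global linearization in the Hurwitz case and the transverse dynamics of planar limit cycles --- lives inside exactly those two regimes, so the restricted version suffices for the rest of the development.
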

The time interval $I(\x)=(a(\x),b(\x))$ can be extended  to $I(\x)=(a(\x),+\infty)$ provided all eigenvalues of $D\F|_0$ have negative real parts, and to $I(\x)=(-\infty,b(\x))$ provided all eigenvalues of $D\F|_0$ have positive real parts.

Looking at equation (\ref{Hart}), we could call a matrix $A$ an eigenmatrix\index{Eigenmatrix} of $U^t$ associated with eigenmapping\index{Eigenmapping} $\tilde \h$. Within the Hartman theorem, this is the case only {\it locally}, around an equilibrium point, and possibly for finite time, if the equilibrium point is a saddle.
The Hartman theorem therefore states that, locally, the nonlinear system  $\dot \x=\F(\x),$ is conjugate to a linear system $\dot \y=A\y,$ where $\y=\tilde \h(\x)$.

Now, assume $A$ have distinct real eigenvalues. Then, it can be transformed into a diagonal matrix $\Lambda$ using a linear transformation $V$. Setting $\z=V^{-1}\y$
leads to
\be
\dot \z=V^{-1}AV\z=\Lambda \z.
\label{eq:Hartdiag}
\ee
Using $V^{-1}$, from (\ref{Hart}) we also get
$$
V^{-1}\tilde \h\circ S^t(\x_0)=V^{-1}e^{At}\tilde \h(\x_0)=V^{-1}e^{At}VV^{-1}\tilde \h(\x_0).
$$
Thus, $\tilde \k=V^{-1} \tilde \h$ satisfies

\begin{equation}
 \tilde \k\circ S^t(\x_0)=e^{\Lambda t} \tilde \k(\x_0)
\,
\label{Koop1}
\end{equation}
i.e. each component function of $\k$ is an eigenfunction of $U^t$. Thus, we proved, that $\dot \x=\F(\x)$ is conjugate to the diagonal linear system   (\ref{eq:Hartdiag})  in ${\cal N}$, and the conjugacy is provided by the mapping $\tilde \k$ whose components are Koopman eigenfunctions. 

Hartman's local theorem for stable
equilibria can be extended to a global one that is valid in the whole basin of
attraction ${\cal B},$ as shown in \cite{LanandMezic:2013}:

\begin{theorem}[Autonomous flow linearization]
\label{ther:pullback}
Consider the system (\ref{eq:xdot}) with
$\v(\x)\in C^2(\cD)$. Assume that $A$ is a $n\times n$ Hurwitz
matrix, {\em i.e.} all its eigenvalues have negative real parts (thus $\x=0$
is exponentially stable). Let ${\cal B}$ be the basin of attraction of $0$. Then
$\exists\, \h(\x)\in C^1({\cal B}):{\cal B} \to \mathbb{R}^n$, such that
$\y=\h(\x)$ is a $C^1$ diffeomorphism with $\mathbf{D}\h(0)=I$ in ${\cal B}$
and satisfies $\dot{\y}=A\y$.
\label{ther:stb}
\end{theorem}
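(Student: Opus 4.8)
The plan is to construct the global diffeomorphism $\h$ by pulling back the local Hartman linearization along the flow, exactly in the spirit of Lemma \ref{lem:ext} but carried out simultaneously for all $n$ coordinate functions and with attention to smoothness. First I would invoke Theorem \ref{ther:hart} (Hartman's local theorem) together with the diagonalization/triangularization step that produced (\ref{Koop1}): since $A$ is Hurwitz and $C^2$ regularity is available, there is a neighborhood $N(0)\subset\cB$ and a $C^1$ diffeomorphism $\tilde\h:N(0)\to V\subset\R^n$ with $\tilde\h\circ S^t = e^{At}\tilde\h$ on $N(0)$ for $t$ in the (now one-sided-infinite, because all eigenvalues are stable) interval $I(\x)=(a(\x),+\infty)$. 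Because $\v\in C^2$, one can moreover arrange (after composing with the linear map $D\tilde\h(0)^{-1}$) that $\mathbf{D}\h(0)=I$; alternatively one appeals directly to the Sternberg/Hartman construction as in \cite{LanandMezic:2013}. The key structural fact is that $0$ is exponentially stable, so $\cB=\bigcup_{t\ge 0}S^{-t}(N(0))$: every point of the basin flows into $N(0)$ in finite forward time.

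Next I would define, for $\x\in\cB$, the entry time $t(\x)=\inf\{t\ge 0: S^t(\x)\in N(0)\}$, which is finite on all of $\cB$, and set
\be
\h(\x) = e^{-A\,t(\x)}\,\tilde\h\!\left(S^{t(\x)}(\x)\right).
\ee
This is well-defined because on $N(0)$ the two formulas agree: if $\x\in N(0)$ and also $S^{t(\x)}(\x)\in N(0)$, then the local intertwining relation $\tilde\h\circ S^{t(\x)}=e^{At(\x)}\tilde\h$ forces $e^{-At(\x)}\tilde\h(S^{t(\x)}(\x))=\tilde\h(\x)$; the same cocycle computation shows $\h$ is independent of which admissible entry time one picks. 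The intertwining property $\h\circ S^\tau(\x)=e^{A\tau}\h(\x)$ then follows from the identity $t(S^\tau\x)=t(\x)-\tau$ (valid for $\tau$ in the relevant interval) by precisely the four-line calculation in the proof of Lemma \ref{lem:ext}, reading $A$ in place of the scalar $\lambda$. Differentiating this relation at $\tau=0$ recovers $\dot\h(\x)=A\h(\x)$ along trajectories, i.e. $\mathbf{D}\h(\x)\F(\x)=A\h(\x)$, which is the claimed conjugacy of vector fields.

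The remaining, and genuinely delicate, step is \emph{global smoothness and invertibility} of $\h$. The hard part will be showing $\h\in C^1(\cB)$ and that it is a diffeomorphism onto $\R^n$, since the entry time $t(\x)$ is in general only continuous (it is locally the inverse of a flow coordinate, as noted in the remark after Lemma \ref{lem:ext}, hence continuous), and near points where a trajectory is tangent to $\partial N(0)$ one must choose $N(0)$ so that $\partial N(0)$ is crossed transversally — e.g. take $N(0)$ to be sublevel set $\{\x: W(\x)<c\}$ of a strict Lyapunov function $W$ for the linearized-plus-nonlinear system, so that $\dot W<0$ on $\partial N(0)$ and $t(\x)$ inherits $C^1$ regularity from the implicit function theorem applied to $W(S^t(\x))=c$. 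Smoothness of $\h$ on the overlap region then follows from the chain rule, and on $N(0)$ itself $\h=\tilde\h$ is already $C^1$; a partition-of-unity-free patching works because the two definitions literally coincide on the overlap. Invertibility: $\h$ is a local diffeomorphism everywhere (its derivative is $e^{-At(\x)}\mathbf{D}\tilde\h(S^{t(\x)}\x)\mathbf{D}S^{t(\x)}(\x)$, a product of invertible maps), it is injective because $S^t$ is injective and $\tilde\h$ is injective on $N(0)$ and the factor $e^{-At(\x)}$ is invertible, and it is surjective onto $\R^n$ because for any $\y$ one solves $e^{At}\tilde\h(S^t\x)=\y$ by following the linear flow $e^{At}$ backward from a point of $V$ — i.e. $\h^{-1}(\y)=S^{-t}\big(\tilde\h^{-1}(e^{-At}\y)\big)$ for $t$ large enough that $e^{-At}\y\in V$, which exists since $A$ is Hurwitz makes $e^{-At}\y\to$ large, so one instead runs it the stable direction; concretely pick $t$ with $e^{At'}\y\in V$ for some $t'$ using that $V$ is a neighborhood of $0$ and $e^{At'}\y\to 0$. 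Finally $\mathbf{D}\h(0)=I$ holds by the normalization of $\tilde\h$ and $t(0)=0$. This is exactly the argument of \cite{LanandMezic:2013}, and I would cite it for the verification that no trajectory escapes and that the Lyapunov-sublevel choice of $N(0)$ makes everything $C^1$.
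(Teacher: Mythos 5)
Your proposal is correct and follows essentially the same route as the paper: pull back Hartman's local conjugacy $\tilde\h$ along the flow via $\h(\x)=e^{-At(\x)}\tilde\h(S^{t(\x)}\x)$, verify the intertwining with the cocycle identity $t(S^\tau\x)=t(\x)-\tau$, and use a Lyapunov level set (from the converse Lyapunov theorem) as the transversal section guaranteeing that the hitting time $t(\x)$ is well defined and regular. Your treatment of the $C^1$ regularity and invertibility is in fact more detailed than the paper's, which delegates those points to \cite{LanandMezic:2013}.
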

\begin{proof}
The proof is based on the following observation: Hartman theorem provides us with a domain $U$ inside which
the local conjugation $\tilde \h$  to the linear system $\x=A\x$,  exists. If we find a manifold diffeomorphic  to a
sphere $\Sigma$ of dimension $n-1$ inside $U$ such that each initial point $\x\in {\cal B}$ has a unique point $i(\x)$ (and unique time, $t(\x)$) of intersection with $\Sigma$,
then the mapping
$$
\h(\x)=e^{-At(\x)}\tilde \h(S^{t(\x)}(\x))
$$
is the required conjugacy. To see this, observe that 
\begin{eqnarray}
\h(S^\tau(\x))&=&e^{-A(t(\x)-\tau)}\tilde \h(S^{t(\x)-\tau}(S^\tau(\x)))\nonumber \\
&=&e^{A\tau}e^{-At(\x)}\tilde \h(S^{t(\x)}(\x))\nonumber \\
&=&e^{A\tau}\h(\x),\nonumber \\
\label{eq:conjfp}
\eea
where we used the fact that $t(S^\tau(\x))=t(\x)-\tau$.
Since the surface $\Sigma$ exist by the  converse Lyapunov theorem \cite{Vidyasagar:2002},\footnote{Specifically, if an equilibrium is asymptotically stable 
from an open set $U$, then there is a Lyapunov function $L$ such that, sufficiently close to the origin (but not at the origin) $\dot L<0$ and thus the vector field ``points inwards" on level sets of $L$ sufficiently close to the origin. We can choose $\Sigma$ to be one of those level sets. Then clearly $t(\x)$ and $i(\x)$ are unique, for every trajectory as if not, the trajectory would need to ``enter" and then ``exit" the interior of $L$.} the theorem is proven.
\end{proof}
The following corollary, that enables extension of eigenfunctions to the whole basin of attraction holds:
\begin{corollary} The functions $\k=V^{-1}\h$ are eigenfunctions of (\ref{eq:nlsys}) in the basin of attraction ${\cal B}$ of $0$.
\end{corollary}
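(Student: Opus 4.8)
The plan is to chain together the two results just established: Theorem~\ref{ther:stb} (Autonomous flow linearization), which produces a $C^1$ diffeomorphism $\h:{\cal B}\to\R^n$ on the whole basin of attraction satisfying $\dot\y=A\y$ with $\y=\h(\x)$, and the elementary diagonalization argument carried out just above in equations (\ref{eq:Hartdiag})--(\ref{Koop1}). Since $A$ is assumed to have distinct real eigenvalues, there is an invertible matrix $V$ (columns the eigenvectors $\v_j$ of $A$) with $V^{-1}AV=\Lambda=\mathrm{diag}(\lambda_1,\dots,\lambda_n)$. Setting $\k=V^{-1}\h$, I would verify that $\k$ inherits $C^1$ regularity on ${\cal B}$ (a linear map composed with a $C^1$ diffeomorphism) and that it conjugates the flow to the diagonal linear flow.

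The core computation mirrors (\ref{Koop1}): from $\h(S^t(\x))=e^{At}\h(\x)$ valid for all $\x\in{\cal B}$ and all $t\ge 0$ (and in fact for the maximal interval on which $S^t(\x)$ stays in ${\cal B}$, which here is all of $\R^+$ since ${\cal B}$ is positively invariant), we left-multiply by $V^{-1}$ and insert $VV^{-1}=I$:
\be
\k(S^t(\x))=V^{-1}\h(S^t(\x))=V^{-1}e^{At}\h(\x)=V^{-1}e^{At}V\,V^{-1}\h(\x)=e^{\Lambda t}\k(\x).
\ee
Reading this componentwise, $\k_j(S^t(\x))=e^{\lambda_j t}\k_j(\x)$ for each $j$, which is exactly the statement that $\k_j$ is an eigenfunction of $U^t$ restricted to ${\cal B}$ at eigenvalue $\lambda_j$ (in the sense of a subdomain eigenfunction when ${\cal B}\neq M$, or an ordinary eigenfunction when ${\cal B}=M$). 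Since $\h$ is a diffeomorphism and $V$ invertible, $\k$ is nonconstant on each coordinate, so these are genuine nontrivial eigenfunctions; moreover $\k(0)=0$.

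I do not expect a serious obstacle here — the corollary is essentially a repackaging of Theorem~\ref{ther:stb} through a linear change of coordinates, exactly as the local version was repackaged from Hartman's theorem in the paragraph preceding the corollary. The only points requiring a word of care are: (i) confirming that the conjugacy identity $\h(S^t(\x))=e^{At}\h(\x)$ from Theorem~\ref{ther:stb} holds for the full forward time interval needed (immediate from positive invariance of ${\cal B}$ and the fact that the linearized solution $e^{At}\h(\x)$ is globally defined); and (ii) noting that distinctness and realness of the eigenvalues of $A$ is what guarantees the diagonalizing $V$ exists over $\R$ — if one wanted to drop that assumption, one would instead pass to the complex Jordan form and invoke Corollary~\ref{cor:genef} to get generalized eigenfunctions, but for the stated corollary the diagonal case suffices. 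No new analytic input is needed beyond what Theorem~\ref{ther:stb} already supplies.
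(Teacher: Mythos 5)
Your proposal is correct and follows exactly the route the paper intends: the corollary is an immediate consequence of Theorem~\ref{ther:stb} obtained by left-multiplying the global conjugacy identity $\h(S^t(\x))=e^{At}\h(\x)$ by $V^{-1}$ and inserting $VV^{-1}$, precisely as the paper already did for the local map $\tilde\k=V^{-1}\tilde\h$ in (\ref{Koop1}). Your added remarks on positive invariance of ${\cal B}$ and on passing to the Jordan form via Corollary~\ref{cor:genef} when the eigenvalues are not distinct and real are sensible clarifications of hypotheses the paper leaves implicit, but they do not change the argument.
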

When the equilibrium is a saddle point, the result on extension of Koopman eigenfunctions can be obtained using Lemma \ref{lem:ext}:
\begin{proposition} Let the equilibrium $0$ of (\ref{eq:nlsys}) be a non-degenerate saddle point, i.e. all the eigenvalues of $A$ have non-zero real values. Then, the $n$ functions $\k=V^{-1}\h$ can be extended to open eigenfunctions of the Koopman operator associated with (\ref{eq:nlsys}) in the set
$ P$ defined in Lemma  \ref{lem:ext}, by setting
\be
\s(\z)=e^{-\lambda t(\z)}\k(S^{t(\z)}\z),\ \forall \z\in  P.
\ee
\end{proposition}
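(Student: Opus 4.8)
The plan is to obtain the extension as a componentwise application of Lemma~\ref{lem:ext} to the Koopman eigenfunctions that already live on a neighborhood of the origin by Hartman's theorem; the only genuinely new point is the verification of the non-invariance hypothesis of that lemma, which is where the saddle structure enters.

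First I would assemble the local data. Because $0$ is a non-degenerate saddle, all eigenvalues of $A=D\F|_0$ have nonzero real part, so Theorem~\ref{ther:hart} supplies a neighborhood ${\cal N}$ of $0$ — which we may shrink to an open ball, hence open and connected — and a $C^1$ diffeomorphism $\tilde\h:{\cal N}\to{\cal V}$ with $\tilde\h\circ S^t(\x)=e^{At}\tilde\h(\x)$ for $t\in I(\x)$. Diagonalizing $A=V\Lambda V^{-1}$ when the eigenvalues are distinct, and otherwise passing to the complex Jordan form and invoking the generalized-eigenfunction version from Corollary~\ref{cor:genef}, and setting $\k=V^{-1}\tilde\h$, the computation leading to (\ref{Koop1}) shows each scalar component $\k_j$ obeys $\k_j\circ S^t=e^{\lambda_j t}\k_j$ on ${\cal N}$, and hence — differentiating at $t=0$, which is legitimate since $\tilde\h\in C^1$ — $\dot\k_j=\lambda_j\k_j$ pointwise on ${\cal N}$. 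Thus each triple $({\cal N},\k_j,\lambda_j)$ is precisely an admissible input $(A,\phi,\lambda)$ for Lemma~\ref{lem:ext}.

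Next I would apply Lemma~\ref{lem:ext} once for each $j=1,\dots,n$. The sets $B$, $F$, $P$ and the functions $t(\z)$, $\tau^\pm(\z)$ in that lemma depend only on ${\cal N}$ and the flow $S^t$, so a single $P$ serves all $j$ simultaneously, and the lemma yields that $\s_j(\z)=e^{-\lambda_j t(\z)}\k_j(S^{t(\z)}\z)$ is a continuous open eigenfunction of $U^t$ on $P$ at $\lambda_j$. Since $t(\z)=0$ on ${\cal N}$ we get $\s_j|_{{\cal N}}=\k_j$, so these are genuine extensions; collecting the components gives $\s(\z)=e^{-\Lambda t(\z)}\k(S^{t(\z)}\z)$, which is the boxed formula of the proposition (read componentwise with the corresponding $\lambda$). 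Continuity of $\s_j$ and the identity $\s_j(S^\tau\z)=e^{\lambda_j\tau}\s_j(\z)$ for $\tau\in I_\z$ are exactly the calculations already carried out inside the proof of Lemma~\ref{lem:ext}, so nothing new is needed there.

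The hard part — and the step that actually uses the saddle hypothesis — will be checking the non-invariance premise of Lemma~\ref{lem:ext}: that some $\z\in P$ has $\tau^+(\z)<\infty$ or $\tau^-(\z)>-\infty$, equivalently that $P$ is not invariant, which is what makes ``open eigenfunction'' the correct notion rather than a vacuous one. I would argue from hyperbolicity: ${\cal N}$ is a bounded neighborhood of the equilibrium, and near a hyperbolic fixed point the flow is non-recurrent, so a point $\x\in{\cal N}$ lying on neither the local stable nor the local unstable manifold has a bounded maximal orbit segment inside ${\cal N}$ — in forward time it escapes along the unstable directions, in backward time along the stable directions — whence $\tau^+(\x)<\infty$ and $\tau^-(\x)>-\infty$, so $P$ is not invariant. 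This is precisely the mechanism that breaks down around a limit cycle (see the remark following Lemma~\ref{lem:ext}), which is why the saddle case is singled out, and it is the one place where some care rather than bookkeeping is required.
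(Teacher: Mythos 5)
Your proposal is correct and follows exactly the route the paper intends: the paper in fact states this proposition without any proof, and the argument is meant to be precisely the componentwise application of Lemma~\ref{lem:ext} to the local Hartman eigenfunctions $\k_j=(V^{-1}\tilde\h)_j$ satisfying $\dot\k_j=\lambda_j\k_j$ on ${\cal N}$, with the boxed extension formula read per component. Your added verification of the non-invariance hypothesis via hyperbolicity (a point of ${\cal N}$ off the local stable and unstable manifolds has a bounded sojourn in ${\cal N}$, so the eigenfunction relation holds only on a finite interval $I_\z$) is the one substantive step the paper leaves implicit, and it is sound.
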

The following definition characterizes the principal parts of stable and unstable manifolds of the equilibrium point in $P$:
\begin{definition} Let $W^{s(u)}_{ P}$ be the part of the stable (unstable) manifold of $0$ such that for every $\z\in W^{s(u)}_{ P}$ we have
$S^{t(\z)}\z\in W^{s(u)}_{loc},$ and $W^{s(u)}_{loc}$ is the local stable (unstable) manifold at $0$. \end{definition}
We have the following corollary:
\begin{corollary}
Let $s_1,...,s_u$ be open eigenfunctions of the Koopman operator on $ P$ associated with the positive real part eigenvalues, and 
let $s_{u+1},...,s_{n}$ be open eigenfunctions of the Koopman operator on $ P$ associated with the negative real part eigenvalues.
Then the joint level set of (generalized) eigenfunctions
\be L_s=\{\x\in\R^n|s_1(\x)=0,...,s_{u+c}(\x)=0\},\ee is  $W^{s}_{ P}$, 
 and 
\be L_u=\{\x\in\R^n|s_{u+1}(\x)=0,...,s_n(\x)=0\},\ee is    $W^{u}_{ P}$.
\end{corollary}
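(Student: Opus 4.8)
The plan is to leverage the fact, already established, that the functions $s_1,\dots,s_n$ obtained as $\k=V^{-1}\h$ are open eigenfunctions of the Koopman operator on $P$, with $s_j$ associated to eigenvalue $\lambda_j$, and that near the origin $s_j(\x)=\langle\x,\w_j\rangle+o(\|\x\|)$ where $\w_j$ is the left eigenvector of $A$. The key point is that the local stable manifold $W^s_{loc}$ is, in the linearizing coordinates $\y=\tilde\k(\x)$, precisely the coordinate subspace $\{y_1=\dots=y_u=0\}$ spanned by the eigendirections of negative real part, and likewise $W^u_{loc}=\{y_{u+1}=\dots=y_n=0\}$. Since $y_j=s_j(\x)$ near $0$, we have $W^s_{loc}=\{\x\in\mathcal N\mid s_1(\x)=0,\dots,s_u(\x)=0\}$ and $W^u_{loc}=\{\x\in\mathcal N\mid s_{u+1}(\x)=0,\dots,s_n(\x)=0\}$. (I note the corollary as stated writes $s_{u+c}$ but here $c=0$ since the saddle is non-degenerate, so this reads $s_u$.)

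The core argument is then a pull-back/invariance argument entirely parallel to Lemma \ref{lem:ext}. First I would establish that the open eigenfunction relation $s_j(S^\tau(\z))=e^{\lambda_j\tau}s_j(\z)$ for $\tau\in I_\z$ immediately implies that each zero level set $\{s_j=0\}$ is invariant under the flow (restricted to times in $I_\z$): if $s_j(\z)=0$ then $s_j(S^\tau\z)=e^{\lambda_j\tau}\cdot 0=0$. Hence the joint zero sets $L_s=\bigcap_{j\le u}\{s_j=0\}$ and $L_u=\bigcap_{j>u}\{s_j=0\}$ are flow-invariant in $P$. Next, for $\z\in P$ with defining time $t(\z)$, the definition $s_j(\z)=e^{-\lambda_j t(\z)}s_j(S^{t(\z)}\z)$ shows $s_j(\z)=0$ if and only if $s_j(S^{t(\z)}\z)=0$, and $S^{t(\z)}\z$ lies in (the closure of) the neighborhood $\mathcal N$ where the local description holds. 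Therefore $\z\in L_s$ iff $S^{t(\z)}\z\in W^s_{loc}$, which by the definition of $W^s_P$ given just above is exactly the statement $\z\in W^s_P$; the unstable case is identical with the index range $u+1,\dots,n$.

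I would then assemble these pieces: $L_s\subseteq W^s_P$ because $\z\in L_s\Rightarrow S^{t(\z)}\z\in W^s_{loc}$, and conversely $W^s_P\subseteq L_s$ because $\z\in W^s_P\Rightarrow S^{t(\z)}\z\in W^s_{loc}=\{s_1=\dots=s_u=0\}\cap\mathcal N\Rightarrow s_j(\z)=e^{-\lambda_j t(\z)}s_j(S^{t(\z)}\z)=0$ for $j\le u$. The unstable identity follows by the same reasoning. One should also remark that $t(\z)$ being finite and the local manifolds being well-defined on $\mathcal N$ is guaranteed by the Hartman setup (Theorem \ref{ther:hart}) together with the hypotheses of the enclosing proposition, so no extra regularity input is needed.

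The main obstacle I anticipate is the careful handling of the boundary/closure issues in the pull-back: $S^{t(\z)}\z$ by the construction in Lemma \ref{lem:ext} need only land in $A$ (here $\mathcal N$) or possibly its closure, and one must check that the local characterization $W^s_{loc}=\{s_1=\dots=s_u=0\}$ extends continuously to those boundary points — this uses continuity of the $s_j$ (guaranteed by Lemma \ref{lem:ext}) and the fact that $W^s_{loc}$ is relatively closed in $\mathcal N$. A secondary subtlety is that the local stable/unstable manifolds are only $C^1$ (Hartman), not analytic, so one cannot invoke the Poincaré normal-form machinery here; but since the argument only uses the conjugacy relation \eqref{Hart} and continuity, $C^1$ suffices. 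Everything else is bookkeeping with the indices $1,\dots,u$ (unstable eigenvalues) versus $u+1,\dots,n$ (stable eigenvalues).
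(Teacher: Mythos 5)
Your argument is correct and takes essentially the route the paper intends: the corollary is stated without proof precisely because it follows, as you show, from the definition of $W^{s(u)}_{P}$ as the set of $\z$ with $S^{t(\z)}\z\in W^{s(u)}_{loc}$, the identification of $W^{s(u)}_{loc}$ inside $\mathcal N$ with the joint zero set of the linearizing coordinates $s_1,\dots,s_u$ (resp.\ $s_{u+1},\dots,s_n$), and the nonvanishing factor $e^{-\lambda_j t(\z)}$ in the pull-back formula $s_j(\z)=e^{-\lambda_j t(\z)}s_j(S^{t(\z)}\z)$. Your observation that $c=0$ for the non-degenerate saddle (so $s_{u+c}=s_u$) correctly resolves the index typo in the statement.
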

\subsection{Center Manifolds}
\label{nonlman}
Now we tackle the problem of defining the (global) center manifold for a nonlinear systems using Koopman operator eigenfunctions.
 Let $0$ again be an equilibrium point of a smooth nonlinear system, 
\be
 \dot \z=\F(\z), \ \z\in \R^{n},
 \label{eq:nonl}
\ee
 with eigenvalues $\lambda_j, j=1,..,s$  associated with the  linearization $D\F|_0$ at equilibrium. Let $s,c,u$ be the number of negative real part eigenvalues, $0$ and positive real part eigenvalues of $D\F|_0$. Let  $(\lambda_1,...\lambda_u)$ be positive real part eigenvalues, $(\lambda_{u+1},...,\lambda_{u+c})$ $0$ real part eigenvalues, and $(\lambda_{u+c+1},...,\lambda_{u+c+s})$ be negative real part eigenvalues of $D\F|_0$.
 
 We split $D\F|_0$ into the zero real part eigenvalue block $B$ - a $c\times c$ matrix - and the non-zero real part eigenvalue block $A$ - an $(u+s)\times (u+s)$ matrix. The  Palmer linearization theorem \cite{KirchgraberandPalmer:1990} generalizes the Hartman-Grobman theorem in this situation:
\begin{ther}\label{KP}
Let the equation (\ref{eq:nonl}) be written as
\bea
\dot \x&=&B\x+g(\x,\y), \label{eq:form01} \\
\dot \y&=&A\y+h(\x,\y), 
\label{eq:form02}
\eea
where $h,g$ are bounded and  Lipshitz with sufficiently small Lipshitz constants $c_j,d_j,j=1,2$:
\bea
|g(\x_1,\y_1)-g(\x_2,\y_2)|&\leq& c_1|\x_1-\x_2|+c_2|\y_1-\y_2|, \\
|h(\x_1,\y_1)-h(\x_2,\y_2)|&\leq& d_1|\x_1-\x_2|+d_2|\y_1-\y_2|,
\eea
Then, (\ref{eq:form01}-\ref{eq:form02}) is $C^0-$ conjugate to the system  
\bea
\dot{\tilde\x}&=&B\tilde\x+\tilde g(\tilde\x),\label{eq:form0} \\
\dot{\tilde\y}&=&A \tilde\y, 
\label{eq:form1}
\eea
where $\tilde g$ is a bounded, Lipshitz function.
\end{ther}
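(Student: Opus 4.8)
The statement is a global, partially linearizing analogue of the Hartman--Grobman theorem, and the plan is to obtain the conjugacy in two stages, following \cite{KirchgraberandPalmer:1990}: first a geometric \emph{decoupling} change of coordinates that reduces (\ref{eq:form01})--(\ref{eq:form02}) to a skew product over the center block $B$, and then a Hartman--Grobman-type fixed-point argument that linearizes the hyperbolic fibre dynamics to reach (\ref{eq:form0})--(\ref{eq:form1}). As preparation I would record the hyperbolic structure of $A$: since every eigenvalue of $A$ has nonzero real part, write $A=\mathrm{diag}(A_u,A_s)$ with $\mathrm{spec}(A_s)$ in the open left and $\mathrm{spec}(A_u)$ in the open right half-plane, and fix $K\geq 1$, $\alpha>0$ with $\|e^{A_s t}\|\leq Ke^{-\alpha t}$ ($t\geq 0$) and $\|e^{A_u t}\|\leq Ke^{\alpha t}$ ($t\leq 0$). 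The entire argument is organized around one inequality: the (a priori possibly nonzero) exponential growth rate of the perturbed center equation $\dot{\tilde\x}=B\tilde\x+\tilde g(\tilde\x)$ must be strictly less than $\alpha$; this is exactly what the smallness of $c_1,c_2,d_1,d_2$ buys, via a Gronwall estimate pitting the Lipschitz constant of $\tilde g$ against the at-most-polynomial growth of $e^{Bt}$.

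\textbf{Stage 1 (decoupling).} Using the dichotomy of $A$ and the smallness of the Lipschitz constants, the Lyapunov--Perron operator built from the bounded solutions of (\ref{eq:form02}) is a contraction on the space of bounded Lipschitz maps $\sigma:\R^c\to\R^{u+s}$; its fixed point is a global center manifold $\mathcal W^c=\{\y=\sigma(\x)\}$ of (\ref{eq:form01})--(\ref{eq:form02}). The change of variables $\y\mapsto\y-\sigma(\x)$ flattens $\mathcal W^c$ to $\{\y=0\}$ and yields $\dot\x=B\x+g(\x,\y+\sigma(\x))$, $\dot\y=A\y+\hat h(\x,\y)$ with $\hat h(\x,0)=0$ and $\hat h$ still bounded and small-Lipschitz. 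Two further Lyapunov--Perron contractions produce the invariant strong-stable and strong-unstable fibrations (Fenichel fibres) of the center-stable and center-unstable manifolds over $\mathcal W^c$. Straightening $\mathcal W^c$ together with these two fibrations to affine coordinate planes produces a homeomorphism $H_1$ equal to the identity on $\mathcal W^c$; since all three objects are flow invariant, in the new coordinates the center coordinate evolves autonomously by the induced center flow, and the system takes the skew-product form
\[
\dot{\tilde\x}=B\tilde\x+\tilde g(\tilde\x),\qquad \dot{\tilde\y}=A\tilde\y+\hat h(\tilde\x,\tilde\y),\qquad \hat h(\tilde\x,0)=0,
\]
with $\tilde g(\x)=g(\x,\sigma(\x))$ bounded and Lipschitz.

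\textbf{Stage 2 (linearizing the fibre).} It remains to linearize a hyperbolic cocycle over the (complete) center base flow. I would look for $H_2(\tilde\x,\tilde\y)=(\tilde\x,\tilde\y+\Phi(\tilde\x,\tilde\y))$ with $\Phi$ bounded, continuous, and $\Phi(\tilde\x,0)=0$. Writing out the conjugacy requirement along orbits shows that $\Phi$ must satisfy $\frac{d}{dt}\Phi=A\Phi-\hat h$ on trajectories, whose unique bounded solution is given by the exponential-dichotomy Green's function of $A$ integrated against $\hat h$ along the orbit. This formula defines $\Phi$; it is bounded because $\hat h$ is, and it has Lipschitz constant in $\tilde\y$ strictly less than $1$ because $\hat h$ is small-Lipschitz and the $e^{-\alpha t}$ weights dominate the sub-$\alpha$ growth of the base orbit (here the spectral-gap inequality is used again). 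Hence the triangular near-identity map $H_2$ is a homeomorphism, and $H=H_2\circ H_1$, composed with the preliminary flattening, is the required $C^0$ conjugacy; the intertwining identity follows directly from the construction, and one may bootstrap from the time-one map if one prefers to argue with maps rather than flows.

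\textbf{Main obstacle.} The hard part is Stage 1: building the strong-stable and strong-unstable fibrations and, above all, \emph{straightening} them into a map that is a genuine homeomorphism (properness and global invertibility, not merely continuous injectivity), while carrying throughout the spectral-gap bookkeeping that pits the small growth rate generated by $\tilde g$ against the hyperbolic rate $\alpha$. Without first decoupling, the center direction carries no dichotomy, the natural integral operator for the full conjugacy is not a contraction, and the fixed-point scheme does not close — which is precisely why the nonlinearity $\tilde g$ cannot be removed and must be retained in (\ref{eq:form0}).
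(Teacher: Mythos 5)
The paper does not actually prove this theorem: it is quoted as the Palmer (Kirchgraber--Palmer) linearization theorem with a citation to \cite{KirchgraberandPalmer:1990}, and the text proceeds immediately to Proposition \ref{prop:center}, which only uses it. So there is no in-paper argument to compare against; the relevant benchmark is the cited reference, and your two-stage architecture --- first decouple the center dynamics by constructing and straightening the global center manifold together with its strong stable and strong unstable fibrations, then linearize the resulting hyperbolic cocycle over the center base flow by a near-identity fibre map built from the exponential-dichotomy Green's function --- is precisely the strategy of that reference, and the role you assign to the smallness of $c_1,c_2,d_1,d_2$ (keeping the growth rate induced by $\tilde g$ strictly below the dichotomy rate $\alpha$) is the right spectral-gap bookkeeping.

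One step as written does not hold up: in Stage 2 you assert that $\Phi$ has Lipschitz constant in $\tilde\y$ strictly less than $1$ and conclude that $H_2$ is a homeomorphism. The Green's-function formula integrates $\hat h$ along the orbit against weights decaying like $e^{-\alpha|s|}$, but the dependence of the orbit on $\tilde\y$ in the hyperbolic directions grows at essentially the same rate $e^{\alpha|s|}$ (backward in time for the stable block, forward for the unstable block), so the Lipschitz estimate for the integrand is not integrable and the resulting $\Phi$ is in general only H\"older continuous in $\tilde\y$ --- the same phenomenon that makes the classical Hartman--Grobman conjugacy non-Lipschitz. The standard repair is not a contraction estimate on the identity perturbation but the construction of the candidate inverse by the symmetric fixed-point problem (the bounded solution of the homological equation in the reverse direction), followed by an appeal to uniqueness of bounded solutions to show that the two compositions are the identity; properness and surjectivity then come for free from the boundedness of $\Phi$. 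With that substitution, and with the usual care in Stage 1 that the straightening of the fibrations is proper and globally invertible (which you correctly single out as the hard part), your outline is a correct reconstruction of the cited proof.
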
 
\begin{prop} \label{prop:center}The system (\ref{eq:form01}-\ref{eq:form02}) has $u$ unstable (generalized)  eigenfunctions $s_1,...,s_u$ and $s$ stable (generalized) eigenfunctions $s_{u+1},...,s_{u+s}$ of the Koopman operator.
The joint zero level set of unstable (generalized) eigenfunctions
\be L_{cs}=\{\x\in\R^n| s_1(\x)=0,...,s_{u}(\x)=0\},\ee is the center-stable manifold $W^s$, 
\be L_c=\{\x\in\R^n|s_1(\x)=0,...,s_{u}(\x)=0,s_{u+1}(\x)=0...,s_{u+s}(\x)=0\},\ee is the (global, unique) center manifold $W^c$, and 
\be L_{cu}=\{\x\in\R^n| s_{u+1}(\x)=0,...,\phi_{u+s}(\x)=0\},\ee is the center-unstable subspace $W^{cu}$.
\end{prop}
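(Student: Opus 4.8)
The plan is to reduce everything to the partially--linearized normal form supplied by Palmer's Theorem~\ref{KP} and then transport the zero--level--set picture back by conjugacy, exactly in the spirit of the Hartman--Grobman argument used earlier for the stable and unstable manifolds of a saddle.

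\emph{Step 1: eigenfunctions on the normal form.} Apply Theorem~\ref{KP} to obtain a global homeomorphism $h:\R^n\to\R^n$ conjugating (\ref{eq:form01})--(\ref{eq:form02}) to (\ref{eq:form0})--(\ref{eq:form1}), in which the hyperbolic directions have been completely linearized, $\dot{\tilde\y}=A\tilde\y$, with $A$ carrying the $u$ eigenvalues of positive real part and the $s$ eigenvalues of negative real part, while $\dot{\tilde\x}=B\tilde\x+\tilde g(\tilde\x)$ carries the $c$ zero--real--part directions. On the normal form the Koopman eigenfunctions attached to the hyperbolic block are immediate: exactly as in Section~\ref{sect:lin}, set $\tilde s_j(\tilde\x,\tilde\y)=\langle\tilde\y,\w_j\rangle$, where $\w_j$ runs over a Jordan--adapted dual basis of (generalized) eigenvectors of $A^{*}$, with the generalized versions taken when $A$ is not semisimple. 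Since $\dot{\tilde\y}$ is independent of $\tilde\x$, the computation $\frac{d}{dt}\tilde s_j=\langle A\tilde\y,\w_j\rangle=\langle\tilde\y,A^{*}\w_j\rangle$ shows these are (generalized) eigenfunctions of the full Koopman family of the normal form; label $\tilde s_1,\dots,\tilde s_u$ by the positive--real--part eigenvalues and $\tilde s_{u+1},\dots,\tilde s_{u+s}$ by the negative ones.

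\emph{Step 2: pull back and identify the level sets.} By the conjugacy relation (\ref{conj}), used in its merely continuous form (which suffices for level sets), each $s_j:=\tilde s_j\circ h$ is a (generalized) eigenfunction of $U_S^{t}$ at the same eigenvalue, which produces the $u$ unstable and $s$ stable (generalized) eigenfunctions claimed. On the normal form the joint zero sets are linear algebra: writing $\tilde\y$ in the basis of (generalized) eigenvectors of $A$, the condition $\tilde s_1=\dots=\tilde s_u=0$ says precisely that $\tilde\y$ has no unstable component, i.e.\ $\{(\tilde\x,\tilde\y):\tilde\y\in E^{s}_{A}\}$ where $E^s_A$ is the stable subspace of $A$; since the $\tilde\x$--flow is the center part and $E^s_A$ is exactly where the hyperbolic part decays as $t\to+\infty$, this set is the center--stable manifold of (\ref{eq:form0})--(\ref{eq:form1}). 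Symmetrically $\{\tilde s_{u+1}=\dots=\tilde s_{u+s}=0\}=\{\tilde\y\in E^{u}_{A}\}$ is the center--unstable manifold, and the joint zero set of all $u+s$ functions is $\{\tilde\y=0\}$, a global center manifold. Because $h$ conjugates the flows it maps stable/unstable sets to stable/unstable sets (forward/backward asymptotics are conjugacy invariants), so $L_{cs}=h^{-1}(\{\tilde\y\in E^s_A\})$, $L_{cu}=h^{-1}(\{\tilde\y\in E^u_A\})$ and $L_c=h^{-1}(\{\tilde\y=0\})$ are the corresponding invariant sets of (\ref{eq:form01})--(\ref{eq:form02}).

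\emph{Main obstacle.} The delicate point is the \emph{uniqueness} assertion for the global center manifold, since local center manifolds are generically non-unique; the same, in milder form, concerns $W^{cs}$ and $W^{cu}$. Here I would exploit the decoupled structure of the Palmer normal form, in which $\{\tilde\y=0\}$ is canonical: it is the unique invariant manifold through $0$ on which the $\tilde\y$--coordinates stay identically zero, equivalently the unique one all of whose forward \emph{and} backward orbits fail to grow exponentially in $\tilde\y$, because off $\{\tilde\y=0\}$ a nonzero stable (resp.\ unstable) component forces exponential growth of $\tilde\y$ as $t\to-\infty$ (resp.\ $t\to+\infty$), whereas on it orbits grow at most polynomially. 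This growth dichotomy is a conjugacy invariant, so every center manifold of (\ref{eq:form01})--(\ref{eq:form02}) is carried by $h$ into $\{\tilde\y=0\}$, forcing it to equal $L_c$; conversely $L_c$ is invariant, contains $0$, and has the center--manifold properties, using either the first--order expansion $s_j(\x)=\langle\x,\w_j\rangle+o(\|\x\|)$ where the conjugacy is smooth enough, or the topological characterization by bounded/subexponentially growing orbits when it is only $C^0$. The remaining points are bookkeeping: matching the $C^{0}$ regularity of $h$ to the definitions of $W^{cs}$, $W^{cu}$, $W^{c}$, and checking that the non-diagonalizable case goes through verbatim via the Jordan--adapted dual basis as in Section~\ref{sect:lin}.
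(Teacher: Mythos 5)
Your proposal follows essentially the same route as the paper: invoke Palmer's Theorem~\ref{KP} to obtain the partially linearized normal form, produce the $u+s$ (generalized) eigenfunctions from a Jordan-adapted basis of the hyperbolic block (your dual-basis functionals $\langle\tilde\y,\w_j\rangle$ are exactly the components of $V^{-1}\tilde\y$ used in the paper), pull back through the conjugacy via Corollary~\ref{cor:genef} and the conjugacy relation, and read off $W^{cs}$, $W^{cu}$, $W^{c}$ as the joint zero level sets. Your treatment is in fact somewhat more careful than the paper's on two points it leaves implicit --- that the merely $C^0$ Palmer conjugacy suffices for transporting eigenfunctions and level sets, and the growth-dichotomy argument for uniqueness of the global center manifold --- but the underlying argument is the same.
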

\begin{proof} 
Theorem \ref{KP} provides us with functions $\tilde \y(\x,\y)$ that satisfy 
\be\dot{\tilde\y}=A \tilde\y.\ee Let $V$ be the matrix  transforming $A$ to its Jordan normal form, $J$.
Define $\s=V^{-1}\tilde \y$. Then, 
\be
\s=(\s_u,\s_s)=(s_1,...,s_u,s_{u+1},...,s_{u+s}),
\ee
 satisfy equation (\ref{eq:jord}), i.e.
\be
\dot{\s} =J\s,
\ee
and thus, by Corollary \ref{cor:genef}, and Proposition \ref{prop:conj}, $\s$ is a set of generalized eigenfunctions of the Koopman operator associated with the system. The joint zero level set of these is the global center manifold on which the dynamics is given by 
\be
\dot{\tilde\x}=B\tilde\x+\tilde g(\tilde\x).
\ee
Consider now the set
\be L_{cs}=\{\x\in\R^n| s_1(\x)=0,...,s_{u}(\x)=0\},\ee
the dynamics on which is given by 
\bea
\dot{\tilde\x}&=&B\tilde\x+\tilde g(\tilde\x), \nonumber \\
\dot \s_s&=&J_s \s_s, 
\label{eq:form2}
\eea
where $J_s$ is the Jordan form block corresponding to eigenvalues with negative real part. This proves that $L_{cs}=W^{cs}$,
the center-stable manifold of $0$. The proof for the center-unstable manifold, $W^{cu}$ is analogous. \end{proof} 
\begin{rem}
The requirement of the boundedness of nonlinear terms in the above result is not necessarily an obstacle to defining the global center manifold. Namely, it is often stated that only the local center manifold can be obtained, since a bump-function type correction to the vector field needs to be introduced to control the possibly non-Lipshitzian growth away from the small neighborhood of $0$. However, this is not necessarily so. Consider again the system (\ref{eq:xdot}). Assume now there is a factor-conjugacy $g:{\cal D}\subset \R^n\rar {\cal C}\subset\R^1$ such that 
\be
 \exp(\lambda t)g(\x)=g(S^t(\x)).
 \label{eq:scFen}
 \ee
 Then clearly we have found a Koopman eigenfunction for the fully non-linear system. But such conjugacies can exist if the stable and unstable manifolds for the nonlinear case come with associated fibrations and projections to stable and unstable directions that commute with the flow. Consider for example fibration of the stable manifold $F^s(W^s)$ and the associated projection $\Pi^s:F^s(W^s)\rar W^s$ that satisfies 
 $$
 S^t(\Pi^s(\x))=\Pi^s(S^t(\x)).
 $$
  Since there is a conjugacy mapping $\h:W^s\rar E^s$, where $E^s$ is the stable subspace \cite{LanandMezic:2013}, we set $$\g(\x)=\h\circ \Pi^s(\x).$$ Then,
 \be
 \g(S^t\x)=\h\circ \Pi^s(S^t\x)=e^{A_s t}(\h(\Pi^s(\x)))=e^{A_s t}\g(\x),
 \label{eq:sc1}
 \ee
 where $A_s$ is the stable block of the matrix $A$, and we have achieved the desired semi-conjugacy. Multiplying  (\ref{eq:sc1}) from the left by matrix $V^{-1}$, where the matrix $V$ takes $A_s$ into its Jordan form, we obtain $s$ stable (generalized) eigenfunctions of the Koopman operator with the same eigenvalues as those of $A_s$. In the same way we can construct $u$ unstable eigenfunctions. We then define the (global) center manifold as the joint zero level set of these stable and unstable eigenfunctions. The center-stable and the center-unstable manifolds can now be defined analogously to the 
 definition in the Proposition \ref{prop:center}.
 
 Consider the famous example (attributed to Kelley \cite{Kelley:1967}) \footnote{Kelley in fact presented a similar example, with a stable direction instead of the unstable one in $x_2$.} of a vector field with non-unique local center manifold,
\be\left( \begin{array}{c} \dot x_1 \\ \dot x_2  \\ \end{array} \right)= \left( \begin{array}{c}
(x_1)^2 \\
x_2 \\
\end{array} \right), 
\label{eq:CenNonUn}
\ee
The figure depicting the flow is \ref{fig:CenNonUn}
\begin{figure}[h!!] 
\centering \hspace{.5cm} \includegraphics[width=11.5cm,
height=6.5cm, clip=true, trim=0 100 0 100]{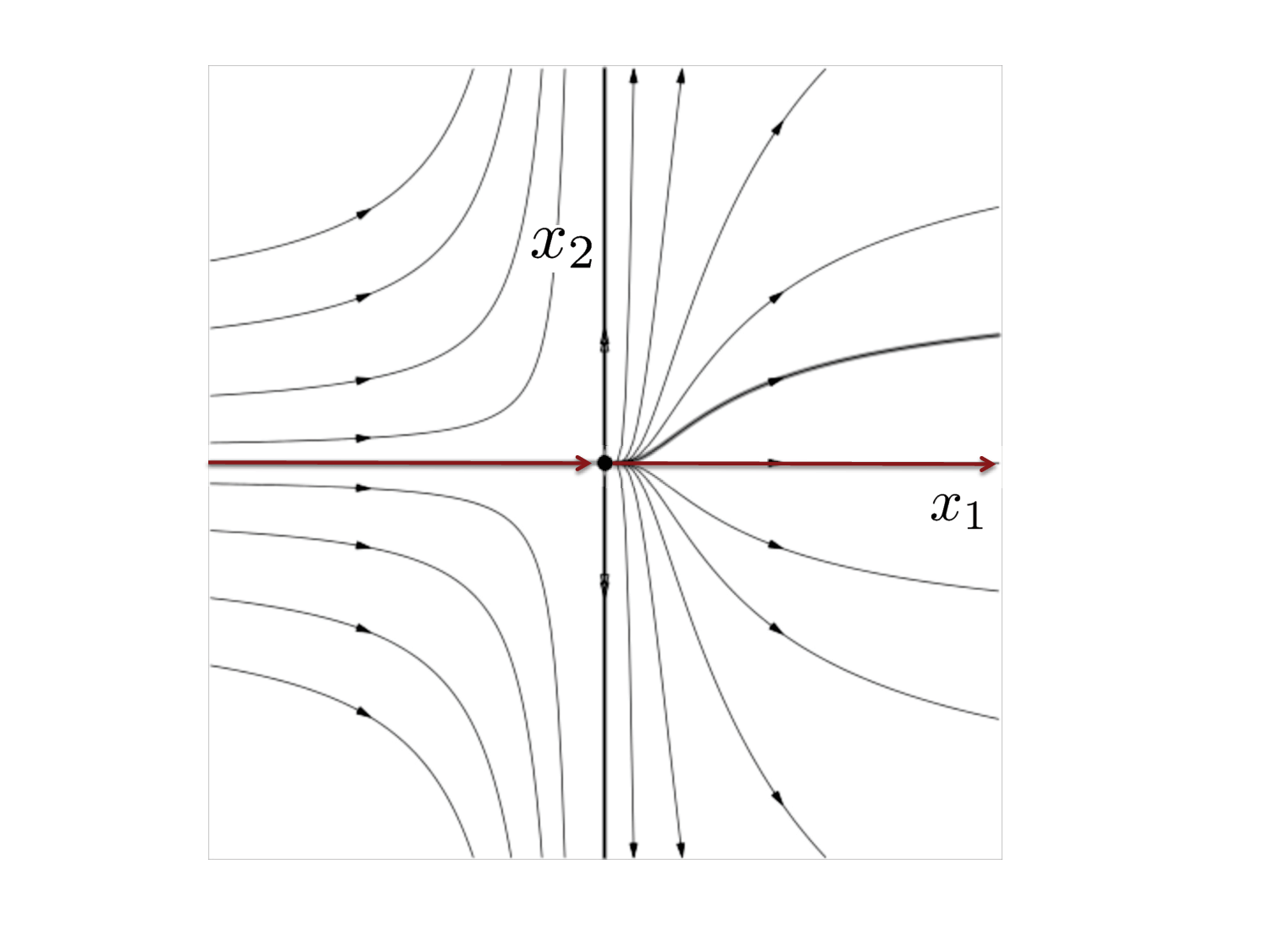}
\caption{Flow of vector field (\ref{eq:CenNonUn}). The ``true" global center manifold is shown in red. }
\label{fig:CenNonUn}
\end{figure}

  In the Kelley example above, although all the curves 
  \be 
  x_2=Ae^{-1/x_1},x_1>0,
  \ee (one of which is shown in figure  \ref{fig:CenNonUn} in bold black)  joined with $x_2=0$ axis for $x_1\leq 0$ satisfy the requirement of tangency of local center manifold to center subspace, they have exponential behavior as $t\rar -\infty$, and in the operator-theoretic point of view fail to satisfy the global center manifold properties. In contrast, the construction we provided above would indicate the global center manifold in this case is $x_2=0$, i.e. the $x_1$ axis, corresponding to the zero level set of the eigenfunction $f_2(x_1,x_2)=x_2$. Note also that, 
\be
f_1=\left\{\begin{array}{c  c }
  e^{1/x_1}&\ \forall x_1< 0  \\
  0 &\ \forall x_1\geq0
  \end{array} \right.
 \ee  satisfies, for $x_1<0$
 \be
  \dot f_1=-\frac{1}{x_1^2}e^{1/x_1}\dot x_1=-e^{1/x_1}=-f,
\ee
  For $x_1\geq 0$ the equation $\dot f=-f$ is trivially satisfied, and thus $f$ is an (infinitely smooth, but not analytic!) eigenfunction of the system at eigenvalue $-1$. It is interesting to note that $x_1\geq 0$ (the zero level set of $f$)  is the unstable manifold (with boundary) of $0$, despite the fact that $(f(x_1),x_2)$ does not
  represent a new coordinate system on the plane as it does not distinguish points with the same $x_2$ on the right half plane.
  
Note that  there is another smooth eigenfunction, \be
f_3=\left\{\begin{array}{c  c }
  e^{-1/x_1}&\ \forall x_1\geq 0 \\
  0 &\ \forall x_1<0
  \end{array} \right.
 \ee 
 corresponding to eigenvalue $1$.  In addition, the product of two eigenfunctions, $f_1$ and $f_2$, $f=x_2e^{1/x_1}$ is another eigenfunction whose level sets are trajectories for $x_1\geq 0$. The intersection of zero level sets of two unstable eigenfunctions, $f_2=x_2=0$ and $f_3$ (the whole semiline $x_1\leq0$) is still the stable manifold of $0$.
  \end{rem}
  {\color{black}
\section{Function Spaces for Dissipative Dynamical Systems}
\label{sect:hilb}
The next couple of sections are dedicated to deduction of spectral expansions of dissipative systems with limit cycling and quasi-periodic attractors. To prepare for this, in this section we consider the general problem of finding function spaces in which spectral expansion is possible. For this task, we do not need to restrict to a specific attractor type. The definition of a dissipative system we use here is the one that has a global (Milnor) attractor $\caA$ of zero Lebesgue measure,  where $\caA$ is equipped with a physical measure \cite{climenhagaetal:2017}.

Let $\cD$ be a compact forward invariant set of $\S^t$, and  $\mu$ the physical invariant measure on the (Milnor) attractor $\caA \subset \cD$, with $\mu(\caA)=1$. Let $\cH_\caA$ be the space of square-integrable observables on $\caA$. Note that $U^t$ restricted to $\cH_\caA$ is unitary. The functions in $\cH_\caA$ can typically be thought of as being defined on the whole basin of attraction of  $\caA$. 
 In a paper by Sell \cite{Sell:1983} it is shown that in the vicinity of a smooth compact $m$-dimensional invariant attracting manifold $\caA$ coordinates $(\u,\y)$ can be found such that the vector field can be written as
\bea
\dot \y=A(\u)\y+\v(\y,\u) \nonumber \\
\dot\u=\bomega(\u)+\bOmega(\y,\u),
\label{eq:gen1}
\eea
where $\v(0,\u)=D_{\y}\v(0,\u)=\bOmega(0,\u)=0$ and thus on $\caA$ the dynamics is given by $\dot\u=\bomega(\u)$. One can think of $\u$ as ``on-attractor" coordinates and $\y$ as ``off-attractor" coordinates. This coordinatization can be extended in certain cases to the full basin of attraction using the following pullback construction:  it was shown in \cite{Sell:1983} that,  in the case when $\caA$ is normally hyperbolic and satisfies certain spectral conditions, (\ref{eq:gen1}) is $C^s$ conjugate in a neighborhood of $\caA$ to the skew-linear\index{Skew-linear dynamical system} system 
\bea
\dot \y=A(\u)\y \nonumber \\
\dot\u=\bomega(\u),
\label{eq:genlin}
\eea
by the conjugacy $\tilde \h$. We call such an attractor $\caA$ the Sell-type attractor.
Let $S_\caA^t(\u)$ be the flow restricted to the manifold $\caA$, $S_{\caA\times \caA^T}^t(\y,\u)$ the flow of (\ref{eq:genlin}) and $S^t(\x)$ the flow of the original system 
\be
\dot \x=\F(\x).
\label{Eq:orig}
\ee
Close to the attractor $\caA$ we can pick a section $\Sigma$ of the normal bundle\index{Normal bundle} 
such that every trajectory of (\ref{Eq:orig}) eventually crosses $\Sigma$, and does the crossing only once. The global conjugacy $\h$ in the basin of attraction ${\cal B}$ of $\caA$ is given by 
$$
\h(\x)=S_{\caA\times \caA^T}^{-t_\Sigma}\tilde\h(S^{t_\Sigma}(\x))
$$
Thus, every point in the basin of attraction of $\caA$ is coordinatized by $(\u,\y)$. In view of this, with a slight abuse of notation, we will consider $\cH_\caA$ as a space of functions defined on all of $\cD$.
\begin{remark}
Provided that $U^t_{S_\caA}$, the Koopman operator associated with the flow on $\caA$, has $m$ smooth eigenfunctions $$(s_1(\u),...,s_m(\u)),$$ when viewed as functions of $\x$ these are eigenfunctions of the Koopman operator of the full system (\ref{Eq:orig}), 
$\s=(s_1(\u(\x)),...,s_m(\u(\x)))$. Define the set ${\cal I}_{\s}$ in the basin of attraction of $\caA$ by
$$
{\cal I}_\s=\{\x\in {\cal B}|\lim_{t\rar\infty}|S^t(\x)-S_\caA^t(\s)|=0.\}
$$
The joint level sets of eigenfunctions $\s$ are sets ${\cal I}_\s$. We call them the $\s$-Isochrons\index{Isochrons, $\s$-isochrons}, in analogy with isochrons for limit cycles and invariant tori \cite{winfree:1974,guckenheimer:1975,MauroyandMezic:2012}.
\end{remark}
 Let $\tilde\cH_\cB\subset C(\cD)$ be a Hilbert space of  functions $f:\cD\rightarrow \mathbb{C}$ orthogonal 
to $\cH_\caA$ with respect to $\mu$, that satisfy
\begin{equation}
\int_\cD f\phi d\mu=0, \ \forall \phi \in \cH_\caA.
\label{eq:ort}
\end{equation}

\begin{remark} The requirement (\ref{eq:ort}) can be replaced by the requirement that $f$ is $0$ on $\caA$ provided $\mu$ is a finite Borel measure, by the following argument of Professor Dimitris Giannakis:
By  (\ref{eq:ort}), every function $f$ in $\tilde\cH_\cB$ lies in the orthogonal complement of $\cH_\caA$ in itself, i.e., the $0$ subspace of $\cH_\caA$. As a result, every $f$ in $\cH_\cB$ is zero $\mu$-a.e., i.e., there exists a measurable subset $\cal{S}$ of $\cD$ with $\mu({\cal S})=1$, such that $f$ vanishes everywhere on $\mu(\cal{S})$. Because $\mu$ is a finite Borel measure, every such $\cal{S}$ must  necessarily be dense in the support of $\mu$. Thus for every $\x$ in the support of $\mu$ there exists a sequence $\{\x_n\}$ in $\cal{S}$ such that $\lim_{n\rar\infty} \x_n = \x$, and because $f$ is continuous, $f(\x) = \lim f(\x_n) = 0$.
\end{remark}
We assume that the tensor product space  $\tilde\cH=\cH_\caA\otimes \tilde\cH_\cB$ is invariant under $U^t$.  It is clear that the space $\cH_\caA$ is itself invariant under $U^t$. So is $\tilde\cH_\cB$:
\begin{lemma} $\tilde\cH_\cB$ is invariant under $U^t$.
\end{lemma}
\begin{proof} We have
\begin{equation}
\int_\cD U^t f(\x)\phi(\x) d\mu(\x)=\int_\cD  f(\x)\phi (\S^{-t}\x)d\mu(\S^{-t}\x)=\int_\cD  f(\x)\psi (\x)d\mu(\x)=0, \ \forall \phi \in \cH_\caA,
\end{equation}
where the last equation is due to the invariance of $\mu$. Since the set of functions $\psi=\phi (\S^{-t})$, is precisely $\cH_\caA$, this
implies that $U^t f \in \tilde \cH_\cB$.
\end{proof}
Define $\cH_\cB=\tilde \cH_\cB\cup {\mathbf 1}$, where ${\mathbf 1}$ is the constant unit function on $\cD$. Recall that a bounded operator $T$ on a space $\cL$ is of scalar type  if it satisfies \
\begin{equation}
T=\int_\C  \lambda dP_{\lambda},
\end{equation}
where $dP$ is the resolution of the identity for $S$ \cite{dundford:1954}.
\begin{rem} The results on linear systems in section \ref{sec:lingen} were obtained in the case when the underlying operator is {\it spectral}, defined as 
$$
T=S+N,
$$
where $S$ is of scalar type, and $N$ is quasi-nilpotent, i.e.
$$
\lim_{n \rar \infty}(T^n)^{1/n}=0.
$$ For simplicity here we assume that the quasinilpotent part is $0$.
\end{rem}
We assume  that $U^t$ restricted to $\cH_\cB$ is a scalar type operator.
Define the tensor product space \begin{equation}\cH=\cH_\caA\otimes \cH_\cB.\end{equation} Clearly,
$U^t=U^t|_{\cH_\caA}\otimes U^t|_{\cH_\cB}$ on $\cH$. Define  $P(a,b)=a\cdot b, \ a,b\in \bC$ to be the scalar product of $a$ and $b$, and 
\begin{equation}
P(A,B)=\cup_{a\in A,b\in B}P(a,b), \ \ A,B\subset \bC.
\end{equation}
We have the following:
\begin{theorem} \label{dissHilb}Consider the composition operator $U^t:\cH\rightarrow \cH$, and let $\sigma(U^t|_{\cH_\caA}),\sigma(\koop|_{\cH_\cB})$ be the spectra of its restrictions to $\cH_\caA$ and $\cH_\cB$ with the associated projection-valued spectral measures $P_\omega, \ \omega\in S^1$, and $P_z,\ z\in \mathbb{C}$. Then $\sigma(U^t)=\cl({P(\sigma(U^t|_{\cH_\caA}),\sigma(U^t|_{\cH_\cB})))}$ and 
\label{the:product}
\begin{equation}
U^t=\int_{\mathbb{C}} \int_{\mathbb{R}} e^{z t} e^{i2\pi\omega t} dP_{\omega}dP_{z}.
\label{spectral}
\end{equation}
\end{theorem}
\begin{proof}
Using  \cite{ReedandSimon:1972}, and the fact that $\cH_\caA,\cH_\cB$ are invariant under $U^t$, it follows that \begin{equation}\sigma(U^t)=\cl({P(\sigma(U^t|_{\cH_\caA}),\sigma(U^t|_{\cH_\cB}))}).
\end{equation}
Since $U^t|_{\cH_\caA}$ is unitary, it satisfies
\begin{equation}
U^t|_{\cH_\caA}=\int_{\mathbb{R}}  e^{i2\pi\omega} dP_{\omega},
\end{equation}
while the composition operator $U^t|_{\cH_\cB}$ is bounded (as $\cH_\cB\subset C$ and $\cD$ is compact)  and thus spectral, and satisfies 
\begin{equation}
U^t|_{\cH_\cB}=\int_\C  e^{z t} dP_{z}.
\end{equation}
Taking into account the product space structure, we obtain (\ref{spectral}).
\end{proof}

\begin{example}
Consider a one-dimensional system 
\be
\dot x=F(x),\ x\in \R,
\label{eq:one}
\ee
 $F$ entire, with a globally attracting, nondegenerate fixed point at $0$, for which $df/dx(0)<0$. The invariant measure is the Dirac delta at $0$. The space $\cH_\caA$ is the space of all constant functions on $\R$, for which the inner product is the usual scalar product. Let $\tilde \cH_\cB$ be the space of analytic (entire) functions 
\begin{equation}
f(z)=\sum_{n=1}^{\infty} a_n z^n, 
\label{eq:Tay}
\end{equation}
where $z\in \C$, and 
\begin{equation}
\int_\C |f(z)|^2 e^{-|z|^2}dz<\infty
\end{equation}
An inner product on this space van be defined by
\begin{equation}
\left<f,g\right>=\frac{1}{\pi}\int_\C f(z)g^c(z) e^{-|z|^2}dz, 
\end{equation}
Alternatively, the inner product for $f$ and $g=\sum_{n=1}^{\infty} b_n z^n$ in $\tilde \cH_\cB$ is given by 
\begin{equation}
\left<f,g\right>=\sum_{n=1}^{\infty} a_n b_n^c n!,
\end{equation}
and functions in $\tilde \cH_\cB$ satisfy
\begin{equation}
\left<f,g\right>=\sum_{n=1}^{\infty} |a_n|^2 n!<\infty,
\end{equation}
This is the Fock (or Bargmann-Fock, or Segal-Bargmann) space of real analytic (entire) functions \cite{Bargmann:1962,NewmanandShapiro:1966,carswelletal:2003}\footnote{I am thankful to Professor Mihai Putinar for directing my attention to the Fock space.}. 
Complex monomials $e_n=z^n/(n!)^{1/2}$ form an orthonormal basis in $\tilde \cH_\cB$, which is a Reproducing Kernel Hilbert Space (RKHS) with kernel
\be
K(z,w)=K_z(w)=e^{z\cdot w}.
\ee 
The functions corresponding to Taylor series (\ref{eq:Tay}) with real coefficients $a_n$ form a linear subspace of $\tilde \cH_\cB$. Any real function can then be represented by (\ref{eq:Tay})
with $z=x+iy=x,$ i.e. restriction of the complex function to the real line.

We complexify (\ref{eq:one}) to the whole complex plane by analytic continuation, since $F$ is entire.
Let $S^t$ be the flow of complexified (\ref{eq:one}) and assume it is entire. We are interested in whether $f(S^t)$ is in $\tilde \cH_\cB$, i.e. 
\begin{equation}
\int_\C |f(S^tz)|^2 e^{-|z|^2}dz<\infty
\end{equation}
We have 
\begin{equation}
\int_\C |f(S^tz)|^2 e^{-|z|^2}dz=\int_\C |f(y)|^2 e^{-|S^{-t}y|^2}\frac{dS^{-t}y}{dy}dy,
\end{equation}
for $y=S^t z$. As an example, let $F(z)=\lambda z,\lambda<1$. Then 
\begin{equation}
\int_\C |f(S^tz)|^2 e^{-|z|^2}dz=e^{-\lambda t}\int_\C |f(y)|^2 e^{-|e^{-\lambda t }y|^2}dy<\infty,
\end{equation}
for any $t$, provided $\lambda<0$. Thus $\tilde \cH_\cB$ is invariant for linear stable systems. 

However, Theorem 1 in \cite{carswelletal:2003} implies that the case of linear transformation (assuming the globally stable fixed point is at $0$) is the only case in which the Fock space is closed under conjugacy. 
We assume there is an entire conjugacy $h$ to a linear system.
It is then of interest to use the conjugacy $h:\bC\rar\bC$  to define the Modulated Fock Space (MFS), as the space of all the entire functions $g=f\circ h$ on $\C$ vanishing at $0$ such that 
\begin{equation}
\int_\C |f\circ h|^2 e^{-|h|^2}dh<\infty,
\label{MFS}
\end{equation}
The conjugacy $h$ is provided by the principal  eigenfunction \cite{MohrandMezic:2014} of the complexified system at eigenvalue $\lambda$. It is then clear that monomials in $h$
belong to MFS. Namely, choosing $f(h)=h^n$, the integral in  (\ref{MFS}) converges.
\begin{remark} In contrast, note that the integral 
\begin{equation}
\int_\C h^2(z) e^{-|z|^2}dz,
\label{counterMFS}
\end{equation}
might not converge, even if $h$ is in the Fock space. 
\end{remark} 
We have the following:
\begin{proposition} 
Let $h:\bC\rar \bC$ be an entire principal eigenfunction of the complexified flow $S^t$ globally stable to the origin. Assume $h$ is real on the real line in $\C$. Then the MFS is an RKHS space with respect to the kernel  
\be
K(z,w)=K_z(w)=e^{h(z)\cdot h(w)}\ , z,w\in \bC.
\ee 
and the Koopman operator on it is closed under composition with the flow $S^t$.
\label{prop:1dMFS}
\end{proposition}
\begin{proof}
The statement on the RKHS is clear. The closedness under the flow  is proven as follows:
Consider
\begin{eqnarray}
\int_\C |f\circ h(S^t(z))|^2 e^{-|h(z)|^2}dh(z)&=&\int_\C |f(h)|^2 e^{-|h(S^{-t}z)|^2}\frac{dS^{-t}z}{dz}d(h\circ S^{-t} (z))\nonumber \\
&=&\int_\C |f(h(z))|^2 e^{-|e^{-\lambda t}h(z))|^2}d(e^{-\lambda t}h(z))\nonumber \\
&=&e^{-\lambda t}\int_\C |f(h)|^2 e^{-|e^{-\lambda t}h|^2}dh\nonumber \\
&<&\infty.
\end{eqnarray}
\end{proof}
Note that, again due to Theorem 1 in \cite{carswelletal:2003}, the MFS is not the same as the underlying Fock space except in the trivial case when the conjugacy $h$ is linear, and thus identity. MFS is a space of functions on which the globally stable dynamical system has the simple spectrum $n\lambda, n\in \bN$.
\end{example}
We generalize the above example as follows:
\begin{theorem} Consider the flow  $S^t$ of a dynamical system  in $\bR^n$ with a globally stable Milnor attractor $\caA$ of Sell type. Let $\mu$ be the invariant physical measure on $\mu$. Let $\s=(s_1,...,s_n)$ be principal eigenfunctions of $U^t$ that vanish on $\caA$ and correspond to negative real part eigenvalues ${\bm \lambda}=(\lambda_1,...,\lambda_n)$. Let MFS be the space of entire functions on $\bC^n$ such that 
\begin{equation}
\int_\C |f\circ \s|^2 e^{-|\s|^2}d\s<\infty,
\end{equation}
which is the RKHS with respect to 
\be
K(\z,\w)=K_z(w)=e^{\s(\z)\cdot \s(\w)}\ , \ \ \z,\w\in \bC^n.
\ee 
Let $\cH_\caA=L^2(\mu)$ and $\tilde \cH_\cB$ be the MFS. Then the spectrum of the Koopman operator $U^t$ is given by 
\bea
\sigma(U^t)&=&\cl({P(\sigma(U^t|_{\cH_\caA}),\sigma(U^t|_{\cH_\cB})))}\nonumber \\
                     &=&\cl({P(\sigma(U^t|_{\cH_\caA}),\k\cdot{\bm \lambda}))}, \k=(k_1,...,k_n)\in \bN^n
                     \label{eq:spectSell}
\eea
\label{MFSspectrum}
\end{theorem}
\begin{proof} The first line in (\ref{eq:spectSell})  is a direct consequence of Theorem \ref{the:product}, where we only need to check that $U^t$ is closed under composition with the flow on the MFS in this, multi-dimensional case. This is an easy extension of the proof in one-dimension provided in Proposition \ref{prop:1dMFS}. The second line is due to the fact that all the monomials $\s^\k=s_1^{k_1}\cdot...\cdot s_n^{k_n}$ are in the MFS and are eigenfunctions of $U^t$.
\end{proof}
\begin{remark} Restriction of the functions in MFS to the ones defined  on $\bR^n$ in the above theorem yields the same spectrum, as complex monomials are real-valued on $\bR^n$.
\end{remark}
}
\section{Spectral Expansion for Limit Cycling Systems in $\R^2$}
\index{Spectral expansion}
\label{sect:spectlc2D}
Consider a $C^2$ vector field $\F(\x)$ in  $\R^2$ that has a stable limit cycle with domain of attraction ${\cal B}$. According to \cite{LanandMezic:2013}, the two-dimensional set of ordinary differential
equations can - inside the domain of attraction of the limit cycle -  be transformed to the following form:
\begin{eqnarray}
\dot y&=& A(s)y, \nonumber \\ 
\dot s&=&1,
\label{eq:lc0}
\end{eqnarray}
where $y \in \mathbb{R}, s\in \mathbb{S}^1$ and $A(s)$ is a $2\pi$-periodic function. We have the following theorem:

\begin{theorem}\footnote{\color{black} The author proved the result a while ago  and presented it in an early draft manuscript (UCSB preprint, 2011) with Dr. Yuehang Lan.
 That manuscript never got completed or published. The announcement of the result can be found in \cite{MohrandMezic:2014}, where function spaces were built based on the intuition from this example. Since only the statement and no proof was provided in \cite{MohrandMezic:2014},
the author felt it appropriate to provide the complete result with the proof here. 
} Any function $F(y,s)$ that is analytic in $y$ and $L^2$ in $s$ can be expanded into eigenfunctions of the Koopman operator associated with (\ref{eq:lc0}) as follows:
\be
F(y,s)=\sum_{m=0}^\infty\sum_{n=-\infty}^\infty a_{mn} y^me^{-m\int_0^s(A(\bar s)-A^*)d\bar s}e^{ins},
\ee
where $a_{mn}$ are constant,
\be
A^*=\frac{1}{2\pi}\int_0^{2\pi}A(z)dz.
\ee and 
\be
y^me^{-m\int_0^s(A(\bar s)-A^*)d\bar s)}e^{ins}
\ee
is an eigenfunction of the Koopman operator corresponding to the eigenvalue
\be
e^{(mA^*+in)t}.
\ee
\label{ther:lc}
\end{theorem}

We first prove the following lemma:
\begin{lem}The system (\ref{eq:lc0}) has $C^1$ eigenfunctions of the Koopman family of operators that are of form
\be
g(y,s)=b(s)y^\alpha,
\ee
where the function $b(s)$ is $2\pi$-periodic and of the form
\begin{eqnarray}
b(s)&=&ce^{-\alpha a^f(s)},
\mbox{ where }  c  \mbox{ is an arbitrary constant,}\\
a^f(s)&=&\int_0^s (A(\bar s)-A^*)d\bar s , \mbox{ and } \\
A^*&=&\frac{1}{2\pi}\int_0^{2\pi}A(z)dz.
\end{eqnarray}
corresponding to the eigenvalue
$
\mu(t)=e^{\alpha A^*t}.
$
\end{lem}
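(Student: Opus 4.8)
The plan is to use the standard characterization: $g$ is a Koopman eigenfunction of the family $U^t$ associated with the flow of (\ref{eq:lc0}) with eigenvalue $\lambda$ (i.e.\ $U^t g = e^{\lambda t} g$) if and only if $g$ is differentiable along trajectories and $\dot g = \lambda g$. So I would substitute the ansatz $g(y,s)=b(s)y^\alpha$ into $\dot g = g_y\dot y + g_s\dot s$, using $\dot y = A(s)y$ and $\dot s = 1$. This gives $\dot g = \bigl(\alpha A(s)b(s)+b'(s)\bigr)y^\alpha$, and equating with $\lambda b(s) y^\alpha$ yields the scalar linear ODE $b'(s)=\bigl(\lambda-\alpha A(s)\bigr)b(s)$, whose general solution is $b(s)=c\exp\!\bigl(\lambda s-\alpha\int_0^s A(\bar s)\,d\bar s\bigr)$ for an arbitrary constant $c$.

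The next step is to enforce that $g$ be well defined on the cylinder, i.e.\ that $b$ be $2\pi$-periodic (since $s\in\mathbb{S}^1$). Imposing $b(s+2\pi)=b(s)$ forces $e^{2\pi\lambda-2\pi\alpha A^*}=1$, hence $\lambda\in\alpha A^*+i\mathbb{Z}$; the present lemma corresponds to the choice $\lambda=\alpha A^*$, the remaining values $\lambda=\alpha A^*+in$ producing exactly the extra $e^{ins}$ factors that appear in the statement of the full theorem. With $\lambda=\alpha A^*$ the solution collapses to $b(s)=c\exp\!\bigl(-\alpha\int_0^s(A(\bar s)-A^*)\,d\bar s\bigr)=c\,e^{-\alpha a^f(s)}$, and I would note that $a^f$ is genuinely $2\pi$-periodic precisely because $A^*$ is the mean of $A$: $a^f(s+2\pi)-a^f(s)=\int_0^{2\pi}A-2\pi A^*=0$. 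So $b$ is a legitimate periodic function.

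Finally I would check regularity and then verify the claim directly against the flow. Since $A(s)$ is continuous (indeed $C^1$, coming from the Lan--Mezi\'c normalization of a $C^2$ field), $a^f$ and hence $b$ are $C^1$; $y^\alpha$ is $C^1$ on the relevant portion of state space (globally when $\alpha$ is a nonnegative integer, as in the theorem, and on $\{y>0\}$ for general real $\alpha$), so $g$ is $C^1$. For the verification, write the flow as $s(t)=s_0+t$ and $y(t)=y_0\exp\!\bigl(\int_{s_0}^{s_0+t}A(\sigma)\,d\sigma\bigr)$, substitute into $g(y(t),s(t))=b(s_0+t)\,y(t)^\alpha$, and use the additive splitting $a^f(s_0+t)=a^f(s_0)+\int_{s_0}^{s_0+t}A-tA^*$; the two $\int_{s_0}^{s_0+t}A$ contributions cancel, leaving $g(y(t),s(t))=e^{\alpha A^* t}g(y_0,s_0)$, i.e.\ $\mu(t)=e^{\alpha A^* t}$.

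I expect the only real subtlety to be the periodicity constraint --- recognizing that the mean value $A^*$ is exactly what makes $b$ periodic and simultaneously pins down the admissible real parts of $\lambda$ --- together with the minor point of restricting the domain so that $y^\alpha$ is $C^1$ when $\alpha$ is not a nonnegative integer. Everything else is a routine chain-rule computation.
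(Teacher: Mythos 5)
Your proposal is correct and follows essentially the same route as the paper: substitute the ansatz $b(s)y^\alpha$ into the infinitesimal eigenvalue equation, obtain the scalar ODE $b'=(\lambda-\alpha A(s))b$, and use $2\pi$-periodicity of $b$ to force $\lambda=\alpha A^*$. Your observation that periodicity only pins down $\lambda$ modulo $i\mathbb{Z}$ (with the $e^{ins}$ factors absorbed into the full theorem) and your remark about restricting to $y>0$ for non-integer $\alpha$ are slightly more careful than the paper's argument, but they do not change the approach.
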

\begin{proof}
We need to find functions that satisfy
\be
g(y(t),s(t))=\mu(t)g(y_0,s_0).
\ee
Observing from (\ref{eq:lc0}) that
\begin{eqnarray}
s(t)&=&s_0+t, \\
y(t)&=&y_0e^{\int_0^t A(s_0+\bar t)d\bar t},
\end{eqnarray}
we get
\be
g(y_0e^{\int_0^t A(s_0+\bar t)d\bar t},s_0+t)=\mu(t)g(y_0,s_0).
\label{eq:KElc}
\ee
By taking the derivative with respect to time and setting $t=0$, we obtain the partial differential equation
\be
\frac{\p g}{\p y}(y_0,s_0)A(s_0)y_0+\frac{\p g}{\p s}(y_0,s_0)=\dot \mu(0)g(y_0,s_0),
\label{pde}
\ee
whose solution is
\be
g(y_0,s_0)=b(s_0)y_0^\alpha
\label{form}
\ee
where $\alpha$ is a real number and $b$ is a function that satisfies
\be
\frac{db}{ds}(s_0)=(\dot\mu(0)-\alpha A(s_0))b(s_0),
\label{b'}
\ee
as can be verified directly by plugging (\ref{form}) into (\ref{pde}). The function $b$ must be periodic to be $C^1$. Solving (\ref{b'}), we get
\be
b(s_0)=ce^{\dot \mu(0)s_0}e^{-\alpha\int_0^{s_0}A(s)ds},
\label{b}
\ee
 We now show $\dot \mu(0)=\alpha A^*$. From (\ref{b'}) we get
 \be
 \frac{db/ds}{b}(s_0)=(\dot\mu(0)-\alpha A(s_0))
 \ee
 Since $b$ is periodic in $s_0$, integrating both sides with respect to $s_0$ from $0$ to $2\pi$ leads to
\be
 2\pi \dot\mu(0)=\int_0^{2\pi}\alpha A(s_0)ds_0.
\ee
Going back to (\ref{eq:KElc}) and plugging in the expression for $g$ that we just obtained, we get $\mu(t)=e^{\alpha A^*t}$, as claimed.
\end{proof}

\begin{proof} of the Theorem.
It is clear that $e^{int}$ is an eigenvalue of the Koopman operator associated with the eigenfunction $e^{ins}$, since
\be
U^te^{ins}=e^{in(s+t)}=e^{int}e^{ins}
\ee
Now, if two functions $f_1$ and $f_2$ are eigenfunctions of the  Koopman operator with two eigenvalues $e^{at}$ and $e^{bt}$, 
then so is $f=f_1\cdot f_2$, with eigenvalue $e^{(a+b)t}$. Thus, any function of the form 
\be
y^me^{-m\int_0^s(A(\bar s)-A^*)d\bar s}e^{ins}
\label{ev}
\ee
is an eigenfunction of the Koopman operator corresponding to the eigenvalue
$e^{(mA^*+in)t}.$
We have to prove now that any function $F(y,s)$ analytic in $y$ and $L^2$ in $s$ can be expanded as a countable sum of terms 
in (\ref{ev}). We first expand such an $F(y,s)$ in Taylor series in $y$:
$$
F(y,s)=\sum_{m=0}^\infty a_m(s)y^m
$$
where $a_m(s)$ is an $L^2$, $2\pi$ periodic  function of $s$. 
Since 
$$
e^{-m\int_0^s(A(\bar s)-A^*)d\bar s}
$$
is a strictly positive, bounded function of $s$, we can write
$$
a_m(s)=e^{-m\int_0^s(A(\bar s)-A^*)d\bar s}\bar a_m(s)
$$
Expanding 
$$
\bar a_m(s)=\sum_{n=-\infty}^{\infty} a_{mn} e^{ins}
$$
completes the proof.
\end{proof}

Assume now that the transformation $\h(\x)=(y(\x),s(\x))$ is such that $\h(\x)$ is analytic. Then,
for any vector-valued function $$\G(\x):\R^2\rar\R^m,$$ analytic in $\x$, we have 
\be
\G(\x)=\tilde \G(y(\x),s(\x))=\sum_{m=0}^\infty\sum_{n=-\infty}^\infty \a_{mn} y(\x)^me^{-m\int_0^{s(\x)}(A(\bar s)-A^*)d\bar s}e^{ins(\x)},
\ee
where $$y(\x)^me^{-m\int_0^{s(\x)}(A(\bar s)-A^*)d\bar s}e^{ins(\x)}$$ is an eigenfunction of $U^t$ associated with the vector field $\F(\x)$ and thus the time-evolution of $\G$ can be written as
\be
U^t\G(\x)=\sum_{m=0}^\infty\sum_{n=-\infty}^\infty \a_{mn} e^{(mA^*+in)t}y(\x)^me^{-m\int_0^{s(\x)}(A(\bar s)-A^*)d\bar s}e^{ins(\x)},
\ee
where $\a_{mn}$ are the Koopman modes associated with $\G$.  \index{Koopman mode! planar limit cycling system}
Just like in the case of linear systems and nonlinear systems with a stable equilibrium, the expansion retains the form of the sum of products of Koopman eigenvalues, Koopman eigenfunctions and Koopman modes. The eigenvalues and eigenfunctions are the property of the system, and do not change in the expansion from one observable to another. The Koopman modes are associated with a specific observable.
\begin{rem} The above theorem indicates an interesting property of spectral expansions that carries over to more general attractors: smoothness requirements are different for on-attractor evolution (in this case, a limit cycle)   and off-attractor evolution. On the attractor, the appropriate functional space is $L^2$, while in directions tranverse to it ($y$ in the above example) at least continuity (and usually more, e.g. analyticity) is required.
\end{rem}
{\color{black}
\begin{rem} The emphasis in Theorem \ref{ther:lc} is on spectral expansion and not on the spectrum itself. The convergence of the spectral series to the function $F(y,s)$ is pointwise in $y$, and in $L^2$ in $s$. The spectral expansion can be pursued without utilizing all the elements of the  spectrum. As an example, consider the irrational rotation on $S^1$, equiped with the Haar measure $\mu$,  $\theta\rar \theta+2\pi\omega, \theta\in[0,2\pi)$, where $\omega$ is an irrational number in $[0,1)$. The eigenvalues of the associated unitary Koopman operator acting on functions in $L^2(\mu)$, $Uf(\theta)=f(\theta+2\pi\omega)$ are given by $e^{in2\pi\omega}, \ n\in \bZ$, while its spectrum is the closure of the set of eigenvalues that are dense in $S^1$, i.e. $\sigma (U)=S^1$. The spectral expansion for any $f\in L^2(\mu)$
reads $f(\theta)=\sum_{n\in\bZ} a_n e^{in\theta}$, and thus 
\be
Uf=\sum_{n\in\bZ} a_n e^{in2\pi\omega}e^{in\theta}.
\label{spectexprot}
\ee
Contrasting with the spectral theorem for unitary operators, that states 
\be
Uf=\int_{[0,1)}e^{i2\pi\omega}dP_\omega(f)
\ee
we see that the spectral expansion in (\ref{spectexprot}) utilizes a dense subset of the spectrum instead of the whole spectrum.
\end{rem}}
\color{black}
\begin{corollary}
\label{cor:r}
Any $C^2$, two dimensional system of ordinary differential equations globally asymptotically stable to a limit cycle can be written 
in the form
\bea
\dot r&=&A^* r \\
\dot\theta&=&\omega.
\eea
\end{corollary}
\begin{proof}
Let $r=e^{-\int_0^s(A(\bar s)-A^*)d\bar s}y$.
\end{proof}
{\color{black} We can now connect the above result in Theorem \ref{ther:lc} with Theorem \ref{dissHilb}, by setting up appropriate function spaces and discussing the spectrum itself. Let $\cD$ be the set $r\leq D$, where $r$ is defined in Corollary \ref{cor:r}, and $D$ is a positive constant. Let $\cH_\caA=L^2(\caA,\mu)=L^2(s)$, the space of functions of $s$ that are square integrable. Here we will take a slightly different approach than  in Theorem \ref{MFSspectrum}, since the construction is more direct and does not involve prior knowledge of principal eigenfunctions. The space $\tilde\cH_\cB$ can be constructed as a space carrying some properties of a Reproducing Kernel Hilbert Space. In particular, we construct it as a space of power series vanishing at $y=0$, whose indeterminates are monomials in $y \in \C$, while the coefficients are square integrable functions of $s$ (see \cite{MohrandMezic:2014} for such a construction in the Banach space context):
\be
F(y,s)=\sum_{m=1}^\infty f_m(s)y^m, \ f_m(s)\in L^2(s).
\ee
such that 
\be
\frac{1}{2\pi}\sum_{m=1}^\infty m!   \int_{S^1}f^2_m(s)ds<\infty.
\ee
We equip $\tilde \cH_\cB$ with the inner product 
\be
\left< F,G\right>=\frac{1}{2\pi}\sum_{m=1}^\infty m!  \int_{S^1}f_m(s)g_m(s)ds,
\ee
If we define a kernel of the form
\be
K(y,z)=\sum_{m=1}^\infty \frac{y^m z^m}{m!},
\ee
we evaluate 
\be
K_y(z)=K(\cdot,y)(z)=\sum_{m=1}^\infty \frac{y^m z^m}{m!}=\sum_{m=1}^\infty k_m(s)z^m,
\ee
the coefficients $k_m$ are given by
\be
k_m(s)=\frac{y^m}{m!} \in L^2(s)
\ee
i.e. they are constant functions on the circle.
We also have
\be
\left< K_z(y),F\right>=\frac{1}{2\pi}\sum_{m=1}^\infty   y^m\int_{S^1}f_m(s)ds=\frac{1}{2\pi}\int_{S^1}F(s,y)ds=\bar F(s,y).
\ee
Given this characterization, we call $\tilde\cH_\cB$ the Averaging Kernel Hilbert Space (AKHS).

It is now evident, using Theorem \ref{dissHilb}, and the proof of the Theorem \ref{ther:lc} that the following is true:
\begin{cor} Assume $A(s)\leq0$. The spectrum of $U^t$ in $\tilde\cH$ is $\cl(\{mA^*+in, m\in\mathbb{N},m\in\mathbb{Z}\} ).$
\end{cor}
\begin{proof} We only need to prove that AKHS is closed under composition with $S^t$, which follows from 
\be
F(y(t,y_0),s(t,s_0))=\sum_{m=1}^\infty f_m(s_0+t)y_0^me^{m\int_0^t A(s_0+\bar t)d\bar t}.
\ee
We have 
\be
\sum_{m=1}^\infty m!|f_m(s_0+t)e^{m\int_0^t A(s_0+\bar t)d\bar t}|^2\leq \sum_{m=1}^\infty m! |f_m(s_0+t)|^2<\infty
\ee
and thus, $F(y(t,y_0),s(t,s_0))$ is in AKHS for every $t$.
\end{proof}

\begin{remark} The AKHS defined above can also be defined as the space of functions $f(s,z),s\in S^1, z\in \C,$ such that 
\begin{equation}
\int_{S^1}\int_\C |f(s,z)|^2 e^{-|z|^2}dzds<\infty.
\end{equation}
\end{remark}
\color{black}
\section{Spectral Expansion for  Limit Cycling Systems in $\R^n$}
\label{sect:lc}
Consider now a $C^2$ vector field $\F(\x)$ in  $\R^{n+1},$ that has a stable limit cycle with domain of attraction ${\cal B}$. The corresponding  set of ordinary differential
equations $\dot\x=\F(\x)$ can - inside the domain of attraction of the limit cycle -  be transformed to the following form:\footnote{Note the increase of dimension of state space by $1$, to $n+1,$ to accommodate easier notation since it makes the off-attractor vector $\y$ $n$-dimensional.}
\begin{eqnarray}
\dot \y&=& A(s)\y, \nonumber \\ 
\dot s&=&1,
\label{eq:lcn}
\end{eqnarray}
where $\y \in \bR^n, s\in \mathbb{S}^1$ and $A(s)$ is a $2\pi$-periodic matrix. We have the following theorem:

\begin{ther} Let Floquet exponents $\bm{\mu}=(\mu_1,...,\mu_n)$ of (\ref{eq:lcn}) (eigenvalues of the Floquet stability matrix $B$) be distinct. Any vector-valued  function $\G(\y,s)$ that is analytic in $\y$ and $L^2$ in $s$ can be expanded into eigenfunctions of the Koopman operator associated with (\ref{eq:lcn}) as follows:
$$
\G(\y,s)=\sum_{\m\in \bN^n,k\in \bZ} \a_{\m k} \z^\m(\y,s) e^{iks},
$$
where $\a_{\m k}$ are constant Koopman modes, $\m=(m_1,...,m_k)$, 
$$\z(\y,s)=(z_1(\y,s),...,z_k(\y,s))$$
 are principal Koopman eigenfunctions defined by $$\z(\y,s)=V^{-1}P^{-1}(s)\y,$$ where $P(s)$ is the Floquet periodic matrix, $V^{-1}$ is the diagonalizing matrix for $B$, and $$\z^\m=z_1^{m_1}(\y,s)\cdot...\cdot z_k^{m_k}(\y,s)e^{iks}$$
is the eigenfunction of the Koopman operator corresponding to the eigenvalue
$$
e^{(\m\cdot\bm{\mu}+ik)t}.
$$
\label{ther:spectlcn}
\end{ther}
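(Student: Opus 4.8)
The plan is to follow the architecture of the $\R^2$ proof given above, replacing the explicit averaging construction of the periodic factor $b(s)$ by the Floquet factorization, which is exactly the device needed once the matrices $A(s)$ along the cycle fail to commute. First I would invoke the complex Floquet theorem for the time-periodic linear system $\dot\y=A(s)\y$ evaluated along $s(t)=s_0+t$: its principal fundamental solution matrix admits the factorization $\Phi(t,s_0)=P(s_0+t)\,e^{Bt}\,P^{-1}(s_0)$, where $P(\cdot)$ is an invertible $2\pi$-periodic matrix-valued function with $P(0)=I$ and $B$ is a constant (in general complex) matrix whose eigenvalues are the Floquet exponents $\mu_1,\dots,\mu_n$. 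Consequently the flow of (\ref{eq:lcn}) is $\y(t)=P(s_0+t)e^{Bt}P^{-1}(s_0)\y_0$, $s(t)=s_0+t$.

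The second step is to verify that the components of $\z(\y,s)=V^{-1}P^{-1}(s)\y$ are Koopman eigenfunctions. Writing $\w(\y,s)=P^{-1}(s)\y$, the formula for $\y(t)$ gives, after the cancellation $P^{-1}(s_0+t)P(s_0+t)=I$,
\[
\w(\y(t),s(t))=e^{Bt}\,\w(\y_0,s_0).
\]
Because the $\mu_j$ are distinct, $B=V\Lambda V^{-1}$ with $\Lambda=\mathrm{diag}(\mu_1,\dots,\mu_n)$, so $\z(\y(t),s(t))=V^{-1}e^{Bt}V\,\z(\y_0,s_0)=e^{\Lambda t}\z(\y_0,s_0)$; equivalently $U^t z_j=e^{\mu_j t}z_j$. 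As in the planar case, $U^t e^{iks}=e^{ikt}e^{iks}$ since $s(t)=s_0+t$, and a product of Koopman eigenfunctions is again a Koopman eigenfunction with the product eigenvalue. Hence $\z^\m e^{iks}=z_1^{m_1}\cdots z_n^{m_n}e^{iks}$ is a Koopman eigenfunction with eigenvalue $e^{(\m\cdot\bm{\mu}+ik)t}$, which establishes the displayed eigenfunction/eigenvalue claims.

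For the expansion itself, I would use that for each fixed $s$ the linear map $\y\mapsto\z=V^{-1}P^{-1}(s)\y$ is an isomorphism with inverse $\y=P(s)V\z$, so $\G(\y,s)=\G(P(s)V\z,s)=:\widetilde\G(\z,s)$ remains analytic in $\z$ (composition with an $s$-dependent linear isomorphism) and $L^2$ in $s$ (the entries of $P(s)V$ are bounded and $2\pi$-periodic). Taylor expanding $\widetilde\G$ in $\z$ about the origin yields $\widetilde\G(\z,s)=\sum_{\m\in\bN^n}\c_\m(s)\,\z^\m$ with each $\c_\m$ a $2\pi$-periodic $L^2$ function of $s$; expanding each in Fourier series, $\c_\m(s)=\sum_{k\in\bZ}\a_{\m k}e^{iks}$, and regrouping produces the claimed double sum, the constants $\a_{\m k}$ being the Koopman modes of $\G$.

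The main obstacle is precisely the point that separates this theorem from the two-dimensional one: since $A(s_1)A(s_2)\neq A(s_2)A(s_1)$ in general, there is no closed-form eigenfunction built from $\int_0^s A(\bar s)\,d\bar s$, and one must lean on the mere existence of the Floquet factorization $\Phi=P e^{Bt}$ rather than an explicit formula. Two care-points sit inside this. First, one should work with the complex Floquet form so that $P$ is genuinely $2\pi$-periodic, since the real normal form may only be $4\pi$-periodic; this is harmless because the eigenfunctions are complex in any case. Second, one must check that the Taylor coefficients $\c_\m(s)$ really inherit $L^2$-in-$s$ regularity; this holds because they are finite sums of products of the smooth bounded entries of $P(s)V$ with the $\y$-Taylor coefficients of $\G(\cdot,s)$ at the origin, and the latter are $L^2$ in $s$ by the joint regularity hypothesis on $\G$ (exactly as in the proof of the planar theorem). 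Diagonalizability of $B$, needed for the passage to $\Lambda$, is furnished directly by the hypothesis that the Floquet exponents are distinct.
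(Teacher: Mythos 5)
Your proposal is correct and follows essentially the same route as the paper's proof: Floquet factorization $\y=P(s)\tilde\y$ reducing to $\dot{\tilde\y}=B\tilde\y$, diagonalization of $B$ via $V$ (licensed by the distinctness of the exponents) to obtain the principal eigenfunctions $\z=V^{-1}P^{-1}(s)\y$, and then a Taylor expansion in the transverse variable followed by a Fourier expansion of the periodic coefficients. Your write-up merely supplies details the paper leaves implicit (explicit verification of $U^t z_j=e^{\mu_j t}z_j$ via the fundamental matrix, the complex-versus-real Floquet periodicity caveat, and the $L^2$ regularity of the Taylor coefficients), all of which are consistent with the paper's argument.
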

\begin{proof}

From Floquet theory \cite{Hale:1969} we know that the transformation $\y=P(s)\tilde \y$ leads to
$$
\dot{\tilde \y}=B\tilde\y,
$$
where $P(s)$ is the Floquet periodic matrix $B$ the $s$-independent stability matrix.
In  case of the matrix $B$ with $n$ independent eigenvectors, let $V$ be such that $V^{-1}BV=\Lambda,$ where $\Lambda$ is a diagonal matrix. Then by setting
\be\z(\y,t)=V^{-1}\tilde\y=V^{-1}P^{-1}(t)\y,
\label{eq:KEFloq}\ee  we
obtain $\z(\y,s)$ as the Koopman eigenfunctions of the suspended system.

Then, we  expand $\G(\y,s)$ in Taylor series to obtain
\bea
\G(\y,s)&=&\sum_{\m\in \bN^n}  b_{\m }(s) \y^\m\nonumber \\
&=& \sum_{\m\in \bN^n}b_{\m }(s) (P(s)V\z(\y,s))^\m\nonumber \\
&=& \sum_{\m\in \bN^n}\bar a_{\m }(s)\z^\m(\y,s)\nonumber \\
&=& \sum_{\m\in \bN^n,k\in \bZ}a_{\m k}e^{iks}\z^\m(\y,s)\nonumber 
\eea
where $\bar a_{\m k}(s)$ are the Koopman modes.\index{Koopman modes! n-dimensional limit cycling system} \end{proof}
In the previous section we were able to explicitly write the spectral expansion for the planar, limit-cycling system in terms of an integral of a scalar function $A(s)$.
In the $n$-dimensional case with a stable limit cycle we  can not exhibit the eigenfunctions explicitly. This is essentially due to non-commutativity of $A(s_1)$ and $A(s_2),$ in the case when $A$'s are matrices. Nevertheless, using a bit of Floquet theory, we obtained a useful expansion that leads to time evolution of the
observable given by 
\be
U^t\G(\y,s)=\sum_{\m\in \bN^n,k\in \bZ}e^{(\m\cdot\bm{\mu}+ik)t}a_{\m k}e^{iks}\z^\m(\y,s).
\label{eq:KElcobs}
\ee
Note that the eigenvalues $$\lambda_{\m,k}=(\m\cdot\bm{\mu}+ik)$$
again form a lattice in the complex plane (see example \ref{exa:lc}).

{\color{black} \begin{rem} We now again construct, $\cH_\caA=L^2(\caA,\mu)=L^2(s)$, the space of functions of $s$ that are square integrable. The space $\tilde\cH_\cB$ can be constructed similarly to the construction of the Averaging Kernel Hilbert Space in the previous section. We construct it as a space of power series vanishing at $\z=0$, whose indeterminates are monomials in $\z\in\C^n$, while the coefficients are square integrable functions of $s$ 
\be
F(y,s)=\sum_{\k, |\k|\geq1} f_\k(s)\z^\m, \ f_\k(s)\in L^2(s), .
\ee
where $|\k|=\sum_{j=1}^n k_j$, and $\z^\m=z_1^{m_1}...z_n^{m_n}$.
We equip $\tilde \cH_\cB$ with the inner product 
\be
\left< F,G\right>=\frac{1}{2\pi}\sum_{\k, |\k|\geq 1} a_\k  \int_{S^1}f_\k(s)g_\k(s)ds,
\ee
where $a_\k=\k!=k_1!k_2!...k_n!$.
If we define a kernel of the form
\be
K(y,z)=\sum_{\k, |\k|\geq 1}\frac{(\z  \w)^\k}{a_\k},
\ee
where $(\z  \w)^\k=(z_1w^1)^{k_1}...(z_nw^n)^{k_n}$ we evaluate 
\be
K_\z(\w)=K(\cdot,\z)(\w)=\sum_{\k, |\k|\geq 1} \frac{(\z \w)^\k}{a_\k}=\sum_{\k, |\k|\geq 1} j_\k(s)\w^\k,
\ee
the coefficients $j_\k$ are given by
\be
j_\k(s)=\frac{\z^\k}{a_\k} \in L^2(s)
\ee
i.e. they are constant functions on the circle.
We also have
\be
\left< K_\z(\w),F\right>=\frac{1}{2\pi}\sum_{\k, |\k|\geq 1}   \z^\k\int_{S^1}f_\k(s)ds=\frac{1}{2\pi}\int_{S^1}F(s,\z)ds=\bar F(s,\z).
\ee

The spectrum of $U^t$ is then characterized by  
\begin{cor} The spectrum of $U^t$ in $\tilde\cH$ is $\cl(\{\lambda_{\m,k}, \m\in\mathbb{N}^n,k\in\mathbb{Z}\} ).$
\end{cor}
\end{rem}
}

\section{Spectral Expansion for Quasiperiodic Attractors\index{Spectral expansion, quasiperiodic} in $\R^n$}
\label{sect:lt}
Consider again  $\dot \x=\F(\x),\x\in \R^n$ that has a quasi-periodic attractor - an $m$-dimensional torus on which the dynamics is conjugate to  $$\dot \btheta =\bomega, \btheta\in {\mathbb T}^m, \bomega\in \bR^m,$$ where $\bomega$ is a constant and incommensurable vector of frequencies that satisfies \be{\bf k}\cdot \bomega\geq c/|{\bf k}|^\gamma\label{eq:Dioph}\ee for some $c,\gamma>0$. In addition, we ask that the quasi periodic linearization matrix $A(\btheta)$  has a full spectrum, where the spectrum $\sigma(A)$ of the quasi-periodic matrix\index{Spectrum, of a quasi-periodic matrix}  is defined as a set of points $\lambda\in\bR$ for which the shifted equation 
$$
\dot \y=(A(\btheta+\bomega t)-\lambda I)\y,
$$
does not have an exponential dichotomy\index{Exponential dichotomy}. Provided $\sigma(A)$ is full - meaning it consists of $m$ isolated points, and $A(\btheta+\bomega t)$ is sufficiently smooth in $\btheta$,  there is a quasi-periodic transformation $P(t)$ and a constant matrix $B$ - which we will cal quasi-Floquet- such that the transformation $\z=P(t)\y$ reduces the system to $\dot \z=B\z$ \cite{Sell:1981}. Analogous to the case with an attracting limit-cycle,  we consider the skew-linear\index{Skew-linear} system\footnote{Note on the terminology: these types of systems were classically called linear skew-product systems.}
\begin{eqnarray}
\dot \y&=& A(\btheta)\y, \nonumber \\ 
\dot \btheta&=&\bomega,
\label{eq:qpn}
\end{eqnarray}
where $\y \in \bR^n, \btheta\in \mathbb{T}^m$ and $A(\btheta)$ is a $2\pi$-periodic matrix. 
\begin{ther} Let quasi-Floquet exponents $\bm{\mu}=(\mu_1,...,\mu_{n-m})$ of (\ref{eq:qpn}) (eigenvalues of the quasi-Floquet matrix $B$) be distinct. Any vector-valued  function $\G(\y,\btheta)$ that is analytic in $\y$ and $L^2$ in $\btheta$ can be expanded into eigenfunctions of the Koopman operator associated with (\ref{eq:qpn}) as follows:
$$
\G(\y,\btheta)=\sum_{\m\in \bN^{k},\k\in \bZ^m} \a_{\m \k} \z^\m(\y,\btheta) e^{i\k\cdot\btheta},
$$
where $\a_{\m \k}$ are the constant Koopman modes, $\m=(m_1,...,m_{k})$, 
$$\z(\y,\btheta)=(z_1(\y,\btheta),...,z_k(\y,\btheta))$$
 are the principal Koopman eigenfunctions defined by $$\z(\y,\btheta)=V^{-1}P^{-1}(\btheta)\y,$$ $V^{-1}$ is the diagonalizing matrix for $B$, and $$\z^\m=z_1^{m_1}(\y,\btheta)\cdot...\cdot z_1^{m_k}(\y,\btheta)e^{i\k\cdot\btheta}$$
is an eigenfunction of the Koopman operator corresponding to the eigenvalue
$
e^{(\m\cdot\bm{\mu}+i\k\cdot\bomega)t}.
$
\label{ther:spectqpn}
\end{ther}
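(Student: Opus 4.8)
The plan is to follow the same template used for the limit-cycle case in Theorem~\ref{ther:spectlcn}, replacing classical Floquet theory by Sell's generalized (reducibility) Floquet theorem \cite{Sell:1981} and Fourier series on $\mathbb{S}^1$ by Fourier series on $\mathbb{T}^m$. First I would invoke the reducibility hypothesis: since $\sigma(A)$ is full — consisting of isolated points — and $A(\btheta+\bomega t)$ is sufficiently smooth while $\bomega$ satisfies the Diophantine condition (\ref{eq:Dioph}), Sell's theorem produces a quasi-periodic change of variables $\z=P^{-1}(\btheta)\y$ transforming $\dot\y=A(\btheta)\y$ into $\dot{\tilde\y}=B\tilde\y$ with $B$ constant. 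Diagonalizing $B=V\Lambda V^{-1}$ — possible because the quasi-Floquet exponents $\bm{\mu}$ are assumed distinct — the components of $\z(\y,\btheta)=V^{-1}P^{-1}(\btheta)\y$ satisfy $\dot z_j=\mu_j z_j$ along the flow of (\ref{eq:qpn}), so each $z_j$ is a Koopman eigenfunction with eigenvalue $e^{\mu_j t}$. (If the $\mu_j$ were not distinct one would have to feed $\dot{\tilde\y}=B\tilde\y$ through the Jordan-form machinery of Corollary~\ref{cor:genef} to get generalized eigenfunctions instead; the distinctness hypothesis is exactly what avoids that.)

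Second, I would observe that each torus coordinate also yields an eigenfunction: since $\dot\btheta=\bomega$ we have $U^t e^{i\k\cdot\btheta}=e^{i\k\cdot(\btheta+\bomega t)}=e^{i\k\cdot\bomega\, t}\,e^{i\k\cdot\btheta}$, so $e^{i\k\cdot\btheta}$ is an eigenfunction with eigenvalue $e^{i\k\cdot\bomega\, t}$ for every $\k\in\bZ^m$. Because a pointwise product of Koopman eigenfunctions is again an eigenfunction with the product eigenvalue, every monomial $\z^\m(\y,\btheta)e^{i\k\cdot\btheta}=z_1^{m_1}\cdots z_k^{m_k}\,e^{i\k\cdot\btheta}$ is an eigenfunction of $U^t$ with eigenvalue $e^{(\m\cdot\bm{\mu}+i\k\cdot\bomega)t}$, which already establishes the claimed lattice structure of the spectrum.

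Third, to get the expansion itself I would Taylor-expand $\G(\y,\btheta)$ in $\y$ about the attractor, $\G(\y,\btheta)=\sum_{\m\in\bN^{k}} b_\m(\btheta)\y^\m$, where analyticity in $\y$ guarantees convergence on a transverse neighborhood and each $b_\m$ is $L^2$ on $\mathbb{T}^m$. Substituting the inverse relation $\y=P(\btheta)V\z$, each monomial $\y^\m$ becomes a finite combination of the $\z^{\m'}$ with $\btheta$-dependent, continuous (hence $L^2$) coefficients; regrouping by $\z$-multi-index gives $\G=\sum_{\m}\bar a_\m(\btheta)\z^\m(\y,\btheta)$, and expanding each $\bar a_\m$ in a Fourier series $\bar a_\m(\btheta)=\sum_{\k\in\bZ^m}\a_{\m\k}e^{i\k\cdot\btheta}$ yields the stated series with Koopman modes $\a_{\m\k}$. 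The time evolution $U^t\G=\sum_{\m,\k}e^{(\m\cdot\bm{\mu}+i\k\cdot\bomega)t}\a_{\m\k}\z^\m(\y,\btheta)e^{i\k\cdot\btheta}$ then follows termwise.

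The main obstacle is the first step. Unlike the periodic case, the reduction $\dot\y=A(\btheta)\y\mapsto\dot\z=B\z$ is not elementary: it is a small-divisors result, which is precisely why the Diophantine condition (\ref{eq:Dioph}) and the fullness/isolation of $\sigma(A)$ are needed. I would cite Sell's theorem as a black box and concentrate the remaining effort on two bookkeeping points, exactly as in Theorem~\ref{ther:spectlcn}: (i) checking that the monomial rearrangement after substituting $\y=P(\btheta)V\z$ preserves $L^2(\mathbb{T}^m)$ regularity of the coefficients, so the Fourier step is legitimate; and (ii) checking that the mixed "analytic in $\y$, $L^2$ in $\btheta$" double series converges uniformly enough on compact transverse sets to justify interchanging $U^t$ with the summation.
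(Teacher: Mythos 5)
Your proposal is correct and follows essentially the same route as the paper, which simply declares the proof ``analogous to Theorem~\ref{ther:spectlcn}'': Sell's reducibility theorem in place of Floquet theory, diagonalization of $B$ to obtain the principal eigenfunctions $\z=V^{-1}P^{-1}(\btheta)\y$, then Taylor expansion in $\y$ followed by Fourier expansion on $\mathbb{T}^m$. Your added remarks on the $L^2$ regularity of the rearranged coefficients and on termwise application of $U^t$ address bookkeeping the paper leaves implicit, but do not change the argument.
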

\begin{proof} The proof is analogous to the proof of the theorem \ref{ther:spectlcn}.
\end{proof}
From the above theorem, we have the expression for the evolution of an observable, $L^2$ in $\theta$ and analytic in $\y$ that reads
\be
U^t\G(\y,\btheta)=\sum_{\m\in \bN^{k},\k\in \bZ^m} e^{(\m\cdot\bm{\mu}+i\k\cdot\bomega)t}\a_{\m \k} \z^\m(\y,\btheta) e^{i\k\cdot\btheta}.
\label{eq:KEqpobs}
\ee

\begin{remark}[Spectral expansion for nonlinear  systems with limit cycle or toroidal attractors] \index{Spectral expansion for nonlinear, quasi-periodic systems}
In this chapter we derived spectral expansions corresponding to skew-linear systems that arise through conjugacies with linearization around a (quasi)periodic attractor of a nonlinear system and are extended through the whole basin of attraction using the methodology we first developed in theorem \ref{ther:pullback}. It is interesting to note that the spectral expansion for systems with equilibrium, for example equation (\ref{evol_observable}) are possible in that form only under the non-resonance conditions on the eigenvalues. However, the spectral expansion theorems for skew-linear systems only have a non-resonance condition  - equation (\ref{eq:Dioph}) - in the case when there are multiple angular variables. For full spectral expansion of a nonlinear system with a quasi-periodic attractor, we need the conjugacy to be analytic in off-attractor coordinates, and thus the non-resonance conditions such as those in the Poincar\'e linearization theorem \ref{ther:PLT} would be required.
\end{remark}
{\color{black} The construction of an appropriate Hilbert space in which the spectrum of $U^t$ is the closure of the set of eigenvalues
$\m\cdot\bm{\mu}+i\k\cdot\bomega$ can be done along the lines of the previous section, with the only difference being that the averaging is now over an $m$-dimensional torus, rather than over a circle.}
\subsection{Isostables in Systems with a (Quasi)Periodic Attractor}
In theorems \ref{ther:spectqpn} and \ref{ther:spectlcn} we find special eigenfunctions of the Koopman operator that correspond to important partitions of the state space. For example, in the case of an attracting limit cycle, the level sets of $e^{is(\x)}$ define isochrons. Similarly, for 
\bea 
\m&=&(0,...,0),\nonumber \\
\k&=&(0,...,1_j,...0),
\eea 
the eigenfunction corresponding to a quasi-periodically attracting limit cycle whose level sets define  generalized isochrons is $e^{i\theta_j(\x)}$. 
\index{Isostable! quasi-periodic attractor}
Extending our discussion of linear and nonlinear systems with equilibria, we can also define the notion of generalized isostables through level sets of Koopman eigenfunctions. Let the eigenvalues -(quasi)-Floquet exponents - of the constant matrix $B$
be such that $|\mu_k|>...>|\mu_1|>0$ and their real parts $\sigma_n<...,\sigma_1<0$. Then the level sets of the eigenfunction $z_1(\y(\x),\btheta(\x))$ are defined to be isostables associated with the system. They have the property that initial conditions on such an isostable converge simultaneously to 
the attractor of the system, at the rate $\sigma_1$.
\subsection{Stable, Unstable and Center Manifolds in Systems with (Quasi)Periodic Invariant sets}
Akin to the discussion of stable, unstable and center manifolds in section \ref{nonlman} for systems with equilibria, we can define such manifolds for systems with (quasi)periodic invariant sets (not necessarily attracting), by utilizing spectral expansions such as (\ref{eq:KElcobs}) and (\ref{eq:KEqpobs}), if they exist. Namely,
we select  the principal eigenfunctions - those with $\m=(0,...,1_j,...,0)$. Then we obtain the stable manifold as the joint zero level set of principal eigenfunction corresponding to eigenvalues $\mu_j$ with real part $\sigma_j\geq 0$. Similar consideration leads to definition of center and unstable manifolds.

\section{Principal Coherent Dimension of the Data}
\label{sec:data}
In each case of the dynamical systems in $\R^n$ we studied,the spectrum was found to be of the {\it lattice type} where eigenvalues are given as combinations of $m+j$ {\it principal} eigenvalues,
$$
\lambda_{\n,\k}=\n\cdot\bm{\mu}+i\k\cdot\bomega,
$$
where $\n=(n_1,...,n_m)\in \bN^m$,   $\k=(k_1,...,k_j)\in \bZ^j$, $\bm{\mu}=(\mu_1,...,\mu_m)\in \bC^m$ and $\omega=(\omega_1,...,\omega_m)\in \bR^{+j}$, $j$ is the dimension of the attractor, and $m=n-j$. This leads us to say that data has the {\it principal spectrum} and that the principal coherent dimension of the data is $n$ {\color{black} if the experimentally or numerically observed spectrum has such structure in which $n$ principal eigenvalues generate the rest of the point spectrum.}
\begin{example}
\label{exa:lc}
Consider the three-dimensional, limit cycling system
\begin{eqnarray}
\dot x&=&y  \label{eq:1}\\
\dot y&=&x-x^3-cy  \label{eq:2}\\
\dot \theta&=&\omega. \label{eq:3}
\end{eqnarray}
The two fixed points of the equations (\ref{eq:1}-\ref{eq:2}) are $y=0,x=\pm1$. The linearization matrix at those is 
\be
A=\left[ \begin{array}{cc} 0 & 1 \\ -2 & -c \end{array}\right],
\ee
and thus the eigenvalues are determined by
\be
(-\lambda)(-c-\lambda)+2=\lambda^2+c\lambda+2=0,
\ee
leading to
\be
\lambda_{1,2}=\frac{-c\pm\sqrt{c^2-8}}{2}
\ee
For $c=\sqrt{7},\omega=1$, the eigenvalues  read $\lambda_{3,4}=-1.3228756\pm .5i$. Setting $\omega=1$, the other two principal eigenvalues are $\pm i$.
In figure \ref{EvaLat} we show a subset of the eigenvalues of the Koopman operator on $L^2(S^1)\times {\cal A}$, where ${\cal A}$ is the space of analytic functions on the plane, in the basin of attraction of either of the limit cycles (since they are symmetric) of (\ref{eq:1}-\ref{eq:3}).
\begin{figure}[h!!] 
\centering \hspace{.5cm} \includegraphics[width=11.5cm,
height=10.5cm, clip=true, trim=0 100 0 100]{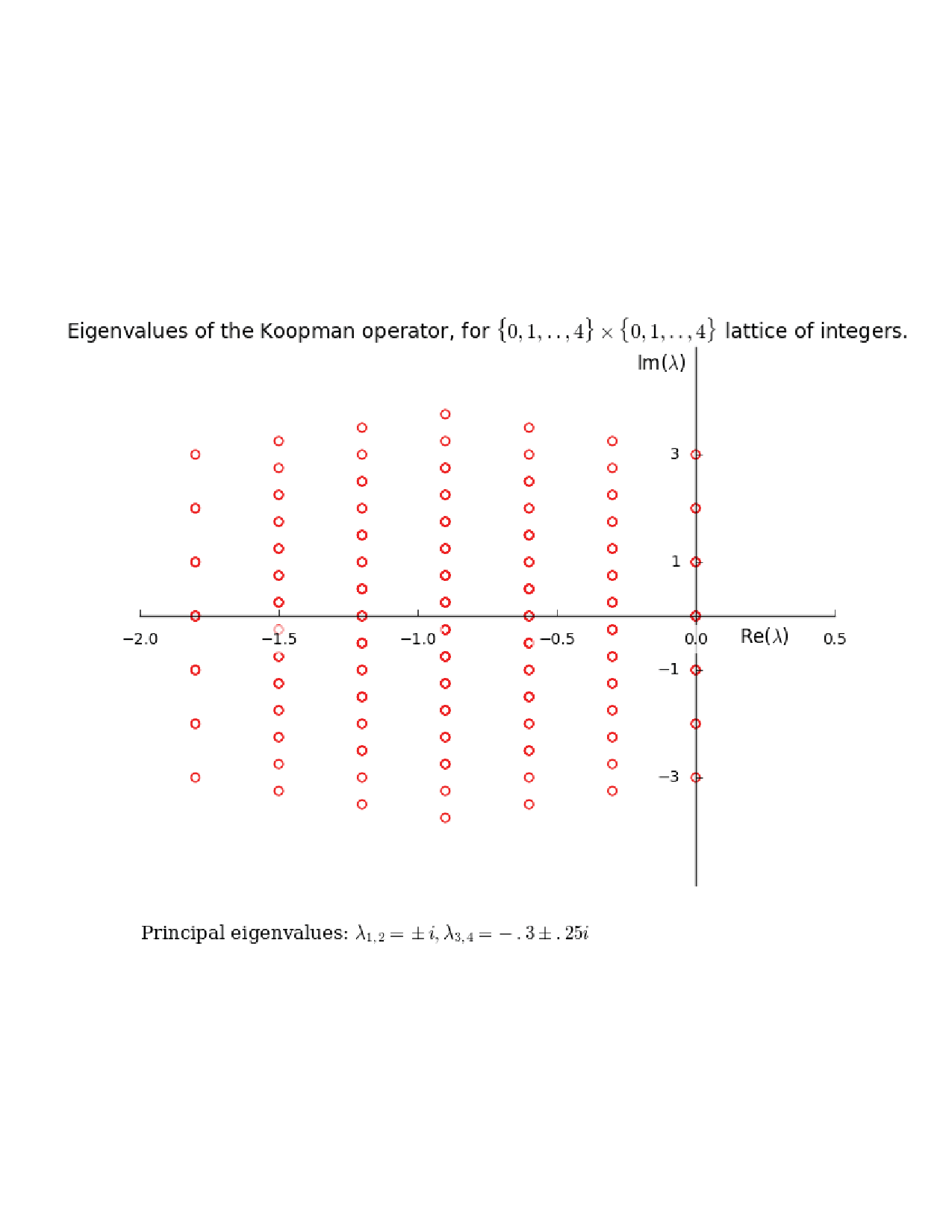}
\caption{ Eigenvalues of the Koopman operator, for $\{0,1,..,4\}\times \{0,1,..,4\}$ lattice of integers.}
\label{EvaLat}
\end{figure}
\end{example}
{\color{black} The lattice spectrum can be approximated from data utilizing, for example, variants of Dynamic Mode Decomposition \cite{Schmid:2010} or compactification methods \cite{giannakisetal:2018}. See for example \cite{crnjaricetal:2017} for some such computations, from which the principal dimension can be determined.}
\section{A Dynamical System with Continuous Spectrum: a Cautionary Tale in Data Analysis}
\label{cont}
In this section we shift away from the dissipative dynamics and consider a measure-preserving system that presents us with an example of a Koopman operator with continuous spectrum. 

While integrable systems are in some sense the simplest nontrivial examples (harmonic oscillator is the simplest, but somewhat trivial) of measure-preserving dynamics, there are already some complexities when considering them from spectral perspective of the Koopman operator. Consider a $1$ degree of freedom system in action-angle variables $(I,\theta)\in {\cal I}\times S^1$, where ${\cal I}=[a,b]\subset \R^+$, given by
\bea
\dot I&=&0, \nonumber \\
\dot \theta &=&I.
\eea
Such a system is produced by e.g. pendulum dynamics in the part of the state space separating oscillating motion from the rotational motion of the pendulum \cite{Wiggins:1990}),
 While $I$ is an eigenfunction at $0$ of the Koopman operator $U^t$, there are no eigenvalues  at any other point on the unit circle, since the associated  eigenfunction $\phi_\omega(I,\theta)$ would have to satisfy 
\be
 U^t\phi_\omega(I,\theta)=\phi_\omega(I,\theta+It)=e^{i\omega t}\phi_\omega(I,\theta).
 \label{eq:eigaa}
\ee
Write $\phi_\omega=re^{i\varphi(I,\theta)}$ since the modulus $r$ does not change with time, to obtain 
\be
 \phi_\omega(I,\theta+It)=re^{i(\varphi(I,\theta)+It)}=re^{i(\varphi(I,\theta)+\omega t)}.
\ee
However, this can be satisfied only for $\omega=I$, and thus the eigenfunction in a proper sense does not exist. 

Now, we could define 
\be
\phi(I,\theta)=e^{i\theta}\delta(I-c)
\ee
 as the Dirac delta function defined on $I=c$, a closed circle in the $I-\theta$ plane, and that ``function" would satisfy (\ref{eq:eigaa}), in a weak sense: 
\bea
\int_{M}U^t\phi(I,\theta)w(I,\theta)dId\theta&=&\int_{M}\phi(I,\theta+It)w(I,\theta)dId\theta\nonumber \\
&=&\int_{M}e^{i(\theta+It)}\delta(I-c)w(I,\theta)dId\theta \nonumber \\
&=&e^{ict}\int_{M}\phi(I,\theta)w(I,\theta)dId\theta. 
\eea
where $w$ is a smooth, compactly supported function on ${\cal I}\times S^1$.
However,  $\phi$ is  not a function, but a measure. 
In fact, there is a family of measures
$
\phi_j(I,\theta)=e^{ij\theta}\delta(I-c),
$
that satisfy
\bea
\int_{M}U^t\phi_j(I,\theta)w(I,\theta)dId\theta&=&\int_{M}\phi_j(I,\theta+It)w(I,\theta)dId\theta\nonumber \\&=&\int_{M}e^{ij(\theta+It)}\delta(I-c)w(I,\theta)dId\theta \nonumber \\
&=&e^{ijct}\int_{M}\phi_j(I,\theta)w(I,\theta)dId\theta, 
\eea
that we call eigenmeasures.\index{Eigenmeasure}
It turns out that the the Koopman operator for the above equation has {\it continuous spectrum}. The continuous spectrum can be understood as the extension of the notion of the point spectrum, but for which eigenfunctions are replaced by eigenmeasures.

Consider now a square integrable function $f(I,\theta)$. Its evolution under the Koopman operator is given by 
$$
U^tf(I,\theta)=f(I,\theta+It).
$$
Expanding $f$ into Fourier series we obtain
$$
f(I,\theta)=\sum_{j\in \bZ} a_j(I)e^{ij\theta},
$$
and thus 
\be
U^tf(I,\theta)=f(I,\theta+It)=\sum_{j\in \bZ} e^{ijIt}a_j(I)e^{ij\theta}.
\label{eq:evoaa}
\ee
We will show that the evolution of $f$ has a spectral expansion
\be
U^tf(I,\theta)=f^*(I)+\int_{\bR}e^{i\beta t}dP_{\beta}(f(I,\theta)),
\label{eq:contspectev}
\ee
where $f^*(I)=a_0(I)$ is the time average of $f$ along trajectories and $dP_\beta$ is the ``differential" of the  so-called projection valued measure on $\bR$  which is a map from Borel sets on the real line  to the set of all linear projection operators on the set of square integrable functions. Recall that a linear operator is a projection\index{Projection operator} if it satisfies $P^2=P$, i.e. applying it twice we get the same result as applying it once. Now, let $\bZ'=\bZ-\{0\},$ $f'=f-f^*,$ so $f'$ has zero mean, and define 
\be
dP_\beta(f(I,\theta))=\sum_{j\in \bZ'} a_j(I)e^{ij\theta}\delta(jI-\beta)d\beta,
\ee
where $\beta$ is the Lebesgue measure on $\bR$. The projection valued measure $P$ is then defined by
$
P(A)=\int_A dP_\beta,
$
for any Borel set $A$ in Borel $\sigma$-algebra on $\bR$. With this, it should become clear why we called $dP_\beta$ the ``differential" of the projection-valued measure. To show that $P$ is  a projection valued measure, we need to show that, when evaluated on the full set $\bR$,  it is equal to identity on the Hilbert space of square integrable functions of zero mean, i.e. $P(\bR)=I$ and, in addition, that 
$\mu(A)=\int_M P(A)g^cf dId\theta$ is a measure on $\bR$, where $f,g$ are both of zero mean.

Firstly, note that 
\be
P(\bR)(f'(I,\theta))=\int_\bR dP_\beta(f'(I,\theta))=\int_\bR \sum_{j\in \bZ'} a_j(I)e^{ij\theta}\delta(jI-\beta)d\beta
=\sum_{j\in \bZ'} a_j(I)e^{ij\theta}=f'(I,\theta),\ee
and thus $P(\bR)$ is identity on the space of zero-mean functions $\bR$. 
Secondly, integration against a function $g(I,\theta)=\sum_{k\in \bZ'} b_k(I)e^{ik\theta}$ gives:
\be
\int_M g^c(I,\theta)dP_\beta(f(I,\theta))dI d\theta=  2\pi\sum_{k\in \bZ'} \kappa(\beta)b_k(\beta) a_k(\beta)d\beta, 
\ee
where $\kappa(\beta)$ is a function that gives a (finite, bounded) integer number of times $\beta=jI$ where $j$ is an integer and $I\in{\cal I}.$
The last expression is a differential of a measure on $\bR$,
and $ \sum_{k\in \bZ'} \kappa(\beta)b_k(\beta) a_k(\beta)$ is a square integrable function. Therefore, we get  an absolutely continuous measure
\be
\mu(A)=\int_A 2\pi\sum_{k\in \bZ'} \kappa(\beta)b_k(\beta) a_k(\beta)d\beta.
\ee
We have
\be
\int_{\bR}e^{i\beta t}dP_{\beta}(f(I,\theta))=\int_{\bR}e^{i\beta t}\sum_{j\in \bZ'} a_j(I)e^{ij\theta}\delta(jI-\beta)d\beta
=\sum_{j\in \bZ'}e^{ijIt} a_j(I)e^{ij\theta} 
\ee
and since $a_0(I)=f^*(I)$ this proves our assertion (\ref{eq:contspectev}). 

The expression for the evolution of a function under the action of the Koopman operator is quite interesting to consider from the perspective of an experimentalist\index{Experiment! pendulum}. Say one is studying the motion of a mechanical pendulum governed by the equation for the angle $\rho$ and angular velocity $\psi$
\bea
\dot \rho&=&\psi, \nonumber \\
\dot \psi&=&-\frac{g}{l}\sin\rho,
\eea
and assume the initial condition $(\rho_0,\psi_0)$ is in  the region of state space inside the separatrix, where action-angle coordinates can be defined. The experiment could for example be performed by taking a video of the motion and extracting the angular position $\rho$ by image processing. According to our theory, Fourier analysis of the  evolution of observable $\rho$ will show peaks at frequencies $jI_0(\rho_0,\psi_0)$, where $I_0(\rho_0,\psi_0)$ is the action corresponding to initial condition $(\rho_0,\psi_0)$. Changing the initial condition to $(\rho_1,\psi_1)$ will lead to a different peaked spectrum where peaks are at frequencies $jI_1(\rho_1,\psi_1)$. The point here is that, while the spectrum of the Koopman operator is continuous in the sense discussed above, measurement of the spectrum from a single trajectory will in this case lead to a peaked spectrum with generally different peaks associated with initial conditions of different actions.  Since in experiment a single initial condition is chosen, the only consequence of the continuous spectrum to this experimental situation is that relevant frequencies change continuously\index{Spectrum! continuous} with initial conditions.  Koopman operator spectrum and spectrum obtained from a single initial condition could coincide in cases when dynamics is more complicated than that of a pendulum, e.g. in the case when the dynamics on the attractor is mixing.   But, in general, the spectrum is expected to be the same for almost all initial conditions in the same ergodic component, and different for initial conditions in different ergodic components \cite{Mezic:2016}. 
\section{Conclusions}
In this paper, spectral expansions are derived for a class of dynamical systems possessing pure point spectrum for observables 
that are $L^2$ on the attractor and analytic in off-attractor directions. {\color{black} Hilbert spaces are constructed utilizing the tensor product of  on-attractor and off-attractor Hilbert  spaces, rendering the Koopman operator spectral. It is interesting to note that natural Hilbert spaces for dissipative linear systems- Fock spaces - are not closed when the dynamics is nonlinear. This might help explain the phenomenon that occurs when polynomials are used as EDMD basis \cite{PageandKerswell:2019}, where the spectrum ``switches" as a trajectory of the system goes from one of the singular solutions such as an unstable fixed point, towards another. The intuition drown from present work is that the function space is unstable, and the dynamics ``leaks out" of the subspace it was in. Also,  it is typical in prior approaches to defining Koopman operator analysis in RKHS's (see e.g. \cite{Kawahara:2016,Klusetal:2017}) to skirt the issue of the closedness of the RKHS under the action of a dynamical system.  Thus again, the dynamics might ``leak" out of the defined RKHS space. Using principal eigenfunctions, the Modified Fock Space is created here that does not encounter that problem, as the dynamics stays within an invariant subspace. The Hankel DMD methods (see e.g. \cite{ArbabiandMezic:2017}) also have good invariance properties and have been proven useful in prediction tasks \cite{khodkaretal:2018}. }. 

The notion of generalized eigenfuctions of the composition (Koopman) operator is utilized to derive a spectral decomposition of linear and skew-linear systems. The concept of open eigenfunctions is defined, and used together  with conjugacy to (skew)-linear systems  to construct eigenfunctions in stable, unstable and saddle-point equilibria cases. Consequences for the geometry of the state-space for such systems are derived by considering level sets of specific eigenfunctions. 
Notably, the concepts of stable, unstable and center manifolds are redefined using joint level sets of (generalized) eigenfunctions. The analysis is extended to the case of (quasi)-periodic attractors, and the appropriate spectral expansions derived, and the concept of isostables extended for such systems. The results in this paper largely carry over, with appropriate modifications (for example in the resonance conditions) for discrete-time maps. An example of a measure-preserving, integrable system with a continuous spectrum is presented, motivating the discussion of the type of spectrum of the Koopman operator of a system vs. the type of spectrum computed for time-evolution of data from a single initial condition of the same system. Finally, the discussion of types of spectrum found in data is related to the spectrum of analyzed dynamical systems, enabling identification of dynamical system type directly from data.

{\bf Acknowledgements} I am thankful to John Guckenheimer, Dimitris Giannakis, Mihai Putinar, Yueheng Lan, Alex Mauroy, Ryan Mohr and Mathias Wanner for their useful comments. This work was supported in part by the DARPA contract HR0011-16-C-0116 and ARO grants W911NF-11-1-0511 and W911NF-14-1-0359.

\label{sect:conc}
\bibliographystyle{unsrt}
\bibliography{MOTDyS,alex,KvN,darpabdd}

\end{document}